\newtheorem{theorem}{Theorem}[section]
\newtheorem{conjecture}{Conjecture}
\newtheorem{lemma}[theorem]{Lemma}
\newtheorem{claim}[theorem]{Claim}
\newtheorem{proposition}[theorem]{Proposition}
\newtheorem{corollary}[theorem]{Corollary}
\newtheorem{definition}{Definition}
\newtheorem*{remark}{Remark}
\newcommand{\comment}[1]{}
\newcommand{\Read}{\textsc{READ}}
\newcommand{\Write}{\textsc{WRITE}}
\newcommand{\ord}{\texttt{ord}}
\newcommand{\lcm}{\texttt{lcm}}
\newcommand{\Z}{\mathbb{Z}}
\newcommand{\N}{\mathbb{N}}
\newcommand{\D}{\mathcal{D}}
\newcommand{\<}{\langle}
\renewcommand{\>}{\rangle}
\newcommand{\om}{o}
\begin{document}



\title{Optimal Quasi-Gray Codes: The Alphabet Matters\footnote{The research leading to these results has received funding from the European Research Council under the European Union's Seventh Framework Programme (FP/2007-2013)/ERC Grant Agreement no. 616787. The third author was also partially supported by the Center of Excellence CE-ITI under the grant P202/12/G061 of GA \v{C}R.}}

\author[1]{Diptarka Chakraborty\thanks{diptarka@iuuk.mff.cuni.cz}}
\author[2]{Debarati Das\thanks{debaratix710@gmail.com}}
\author[3]{Michal Kouck{\'{y}}\thanks{koucky@iuuk.mff.cuni.cz}}
\author[4]{Nitin Saurabh\thanks{nsaurabh@mpi-inf.mpg.de}}
\affil[1,2,3]{Computer Science Institute of Charles University,
Malostransk{\'e}  n{\'a}m\v{e}st{\'\i} 25,
118 00 Praha 1, Czech Republic}
\affil[4]{Max-Planck-Institut f\"{u}r Informatik, Saarbr\"{u}cken, Germany}

\maketitle
\begin{abstract}
A quasi-Gray code of dimension $n$ and length $\ell$ over an alphabet $\Sigma$ is
a sequence of distinct words $w_1,w_2,\dots,w_\ell$ from $\Sigma^n$ such that
any two consecutive words differ in at most $c$ coordinates, for some fixed constant $c>0$.
In this paper we are interested in the read and write complexity of quasi-Gray codes
in the bit-probe model, where we measure the number of symbols read and written in order
to transform any word $w_i$ into its successor $w_{i+1}$. 

We present construction of quasi-Gray codes of dimension $n$ and length $3^n$ over
the ternary alphabet $\{0,1,2\}$ with worst-case read complexity $O(\log n)$ and write complexity $2$.
This generalizes to arbitrary odd-size alphabets. For the binary alphabet, we present
quasi-Gray codes of dimension $n$ and length at least $2^n - 20n$ with worst-case read complexity $6+\log n$
and write complexity $2$. This complements a recent result by Raskin [Raskin '17] who shows that any quasi-Gray code
over binary alphabet of length $2^n$ has read complexity $\Omega(n)$.

Our results significantly improve on previously known constructions and for the odd-size
alphabets we break the $\Omega(n)$ worst-case barrier for space-optimal (non-redundant)
quasi-Gray codes with constant number of writes. We obtain our results via a novel application
of algebraic tools together with the principles of catalytic computation [Buhrman et al. '14, Ben-Or and Cleve '92, Barrington '89, Coppersmith and Grossman '75].
\end{abstract}
\newpage
\section{Introduction}
One of the fundamental problems in the domain of algorithm design is
to list down all the objects belonging to a certain combinatorial class.
Researchers are interested in efficient generation of a list such that
an element in the list can be obtained by a small amount of change
to the element that precedes it.
One of the classic examples is the binary \emph{Gray code} introduced
by Gray~\cite{Gray53}, initially used in pulse code communication.
The original idea of a Gray code was to list down all binary strings
of length $n$, i.e, all the elements of $\Z_2^n$, 
such that any two successive strings differ by exactly one bit.
The idea was later generalized for other combinatorial classes (e.g. see~\cite{NW78, Knuth11}).
Gray codes have found applications in a wide variety of areas,
such as information storage and retrieval~\cite{CCC92}, processor allocation~\cite{CS90}, computing the permanent~\cite{NW78}, circuit testing~\cite{RC81},
data compression~\cite{Rich86}, graphics and image processing~\cite{ASD90},
signal encoding~\cite{Ludman81}, modulation schemes for flash memories~\cite{JMSB09,GLSB11,YS12} and many more. 
Interested reader may refer to an excellent survey by
Savage~\cite{Sav97} for a comprehensive treatment
on this subject.

In this paper we study the construction of
Gray codes over $\Z_m^n$ for any $m \in \N$.
Originally, Gray codes were meant to list down all the elements from its domain but later
studies (e.g.~\cite{Fred78, RM10, BCJMMS10, BGPS14}) focused  on the generalization where we list $\ell$ distinct elements from the domain,
each two consecutive elements differing in one position. We refer to such codes as Gray codes of length $\ell$~\cite{Fred78}.
When the code lists all the elements from its domain it is referred to as {\em space-optimal}.
It is often required
that the last and the first strings appearing in the list
also differ in one position. Such codes are called \emph{cyclic Gray codes}.
Throughout this paper we consider only cyclic Gray codes and
we refer to them simply as Gray codes. Researchers also study codes
where two successive strings differ in at most $c$ positions, 
for some fixed constant $c>0$, instead of differing in exactly one position. Such codes are called 
\emph{quasi-Gray codes}~\cite{BCJMMS10}\footnote{Readers may note that
the definition of quasi-Gray code given in~\cite{Fred78} was different.
The code referred as quasi-Gray code by Fredman~\cite{Fred78} is called Gray code of length $\ell$
where $\ell < m^n$, in our notation.} or $c$-Gray codes.

We study the problem of constructing quasi-Gray codes over
$\Z_m^n$ in the cell probe model~\cite{Yao81a},
where each cell stores an element from $\Z_m$.
The efficiency of a construction is measured using three parameters.
First, we want the length of a quasi-Gray code to be as large as possible. Ideally, we want space-optimal codes. Second,
we want to minimize the number of coordinates of the input string
the algorithm reads in order to generate the next (or, previous) string in the code.
Finally, we also want the number of cells written in order
to generate the successor (or, predecessor) string 
to be as small as possible.
Since our focus is on quasi-Gray codes, the number of writes will always
be bounded by a universal constant. We are interested in the worst-case behavior and we use \emph{decision assignment trees} (DAT) of Fredman~\cite{Fred78} to measure these complexities. 

The second requirement of the above is motivated from the study of {\em loopless generation} of combinatorial objects. In the loopless generation we are required to generate the next string from the code in constant time.
Different loopless algorithms to generate Gray codes are known in
the literature~\cite{Ehr73, BER76, Knuth11}. However, those algorithms use extra memory cells in addition to the space required for
the input string which makes it impossible to get a space-optimal code from them.
More specifically, our goal is to design a decision assignment tree on $n$ variables to generate a code over the domain $\mathbb{Z}_m^n$.
If we allow extra memory cells (as in the case of loopless algorithms) then the corresponding DAT will be on $n+b$ variables, where $b$
is the number of extra memory cells used.

Although there are known quasi-Gray codes with logarithmic read complexity and constant write complexity 
\cite{Fred78, RM10, BCJMMS10, BGPS14}, none of these constructions is space-optimal. The best result misses at least $2^{n-t}$
strings from the domain when having read complexity $t+O(\log n)$ \cite{BGPS14}. Despite of an extensive research 
under many names, e.g., construction of Gray codes~\cite{Fred78, NRR13, DGK17, GM17}, dynamic language membership problem~\cite{FMS97},
efficient representation of integers~\cite{RM10, BGPS14}, so far we do not have any quasi-Gray code of length $2^n - 2^{\epsilon n}$, for some constant $\epsilon < 1$,
with worst-case read complexity $o(n)$ and write complexity $o(n)$. The best worst-case read complexity for {\em space-optimal} Gray code is $n-1$ \cite{Fren16}. Recently, Raskin~\cite{Raskin17} showed that any space-optimal quasi-Gray code over the domain $\mathbb{Z}_2^n$ must have read complexity $\Omega(n)$.
This lower bound is true even if we allow non-constant write complexity.
It is worth noting that this result can be extended to 
the domain $\mathbb{Z}_m^n$ when $m$ is even.

In this paper we show that such lower bound does not hold for quasi-Gray
codes over $\Z_m^n$, when $m$ is odd. In particular, we construct space-optimal quasi-Gray codes over $\{0,1,2\}^n$
with read complexity $4 \log_3 n$ and write complexity $2$.  As a consequence we get an exponential separation between the read complexity of space-optimal quasi-Gray code over $\Z_2^n$ and that over $\Z_3^n$.

\begin{theorem}
\label{thm:odd-counter}
Let $m \in \N$ be odd and $n\in \N$ be such that $n \geq 15$. 
Then, there is a space-optimal
quasi-Gray code $C$ over $\Z_m^n$ for which, the two functions $next(C,w)$ and $prev(C,w)$ can be implemented by inspecting at most $4\log_m n$ cells while writing only $2$ cells.
\end{theorem}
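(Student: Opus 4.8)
The plan is to build the quasi-Gray code as an explicit bijection $\sigma\colon \Z_m^n \to \Z_m^n$ whose orbit is a single cycle of length $m^n$, and to realize $next(C,w)=\sigma(w)$ and $prev(C,w)=\sigma^{-1}(w)$ by a decision assignment tree that reads $O(\log_m n)$ cells and writes only $2$. The key algebraic observation, and the place where oddness of $m$ is essential, is that over $\Z_m$ with $m$ odd, $2$ is a unit; this lets us use the catalytic-computation paradigm (Coppersmith--Grossman, Ben-Or--Cleve) to implement nontrivial permutations of the state space as short compositions of ``register'' operations, each of which touches only a constant number of cells and is reversible. Concretely, I would first treat the $n$ cells as a single register holding an element of $\Z_m^n$, and look for a permutation that (i) is a full cycle and (ii) decomposes as a product of a bounded number of maps each affecting $O(1)$ coordinates. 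The naive odometer/counter on $\Z_m^n$ is a full cycle but has $\Theta(n)$ write complexity in the worst case (the all-$(m-1)$ to all-$0$ carry), so the real work is to spread those carries out.

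The main technical step is to simulate a long counter using a shallow structure. I would partition the $n$ coordinates into $O(n/\log n)$ blocks, or alternatively into a balanced binary tree of depth $O(\log_m n)$, where each internal node stores (in $O(1)$ cells, using the catalytic trick to pack a bit or a small-alphabet symbol into a $\Z_m$-cell without losing information) a ``phase'' that tells the update procedure which child block is currently active. An increment then walks from the root down a single root-to-leaf path, reading the $O(\log_m n)$ phase cells along the way, updates the active leaf block by a constant-size local move, and rewrites at most one phase cell and one data cell — giving write complexity $2$. The reversibility of every local gadget (which is exactly what catalytic computation buys us: each elementary operation $x \mapsto x + f(\text{other cells})$ over $\Z_m$ is invertible since we never destroy the ``catalytic'' contents) guarantees that $prev(C,w)$ has the same complexity, obtained by running the gadgets in reverse order. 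I also need to verify that this composite map is genuinely a single $m^n$-cycle and not a product of shorter cycles; this I would do by an induction on the tree depth, showing the top-level phase advances exactly once per full sweep of the subtree below it, so the period multiplies correctly to $m^n$. The constant $4$ in $4\log_m n$ and the hypothesis $n \ge 15$ will come from bookkeeping: how many $\Z_m$-cells are needed to encode the phase at each node, how deep the tree must be so that the phase information fits, and ensuring enough room in the $n$ cells to host the auxiliary encoding without sacrificing space-optimality.

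I expect the hard part to be precisely this space-optimality constraint: unlike loopless generation, we are not allowed any extra memory cells, so the phase/auxiliary data must be embedded inside the $n$ data cells themselves, and the embedding must be chosen so that every reachable configuration is a legitimate codeword and the global map stays a single cycle. Reconciling ``the auxiliary bits live inside the payload'' with ``the period is exactly $m^n$'' is the delicate point; the catalytic viewpoint helps because it is designed exactly to run computations in borrowed space and restore it, but turning a catalytic program into a cyclic counter on the full space, rather than on a subset, requires care. Once the cycle structure is established, bounding the reads by the tree depth $O(\log_m n)$ and the writes by $2$ is straightforward gadget-counting, and extending from $m=3$ to arbitrary odd $m$ is immediate since the only property used is that $2 \in \Z_m^\times$.
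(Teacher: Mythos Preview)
Your proposal has a real gap: the architecture you sketch (a balanced tree with ``phase'' cells at internal nodes, navigated root-to-leaf) is not what the paper does, and it is not clear it can be made to work with both write complexity $2$ and exact space-optimality. The paper's construction is flat, not hierarchical: it splits the $n$ cells into an \emph{instruction pointer} of $r' \approx 3\log_m n$ cells (carrying a standard Gray code on $\Z_m^{r'}$) and a \emph{data block} of $n' = n - r'$ cells. On $\Z_m^{n'}$ it writes the full cycle as $\alpha = \alpha_{n'}\circ\cdots\circ\alpha_1$, where $\alpha_i$ increments coordinate $i$ whenever coordinates $1,\dots,i-1$ are all zero, and then decomposes each $\alpha_i$ into $O(i^2)$ many $2$-functions (permutations that read $3$ cells and write $1$). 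The instruction pointer simply indexes which $2$-function to apply; one step reads the pointer ($r'$ cells) plus the $3$ cells touched by the current $2$-function, and writes $1$ data cell plus $1$ pointer cell. There is no tree, no carries propagating across levels, and space-optimality follows immediately from Lemma~\ref{lem:cycle-counter} because $\alpha$ is a genuine $m^{n'}$-cycle. Your tree proposal would have to show that phases at all levels multiply to exactly $m^n$ with no slack, and Section~\ref{sec:LB-hierarchy} of the paper shows that even the simplest hierarchical counters run into hard co-primality obstructions; the ``induction on depth'' you gesture at does not address this.

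You also mislocate where oddness is used. It is \emph{not} a diffuse ``$2\in\Z_m^\times$ so catalytic works'' fact; the Ben-Or--Cleve/catalytic decomposition of $\alpha_i$ into $2$-functions (Lemma~\ref{lem:catalytic-decomposition}) works for any $m$, even or odd, but only for $i\le n'-2$ because it needs two free working registers. The genuine difficulty, and the only place oddness enters, is decomposing $\alpha_{n'-1}$ and $\alpha_{n'}$, where no free registers remain. There the paper introduces a new identity (Lemma~\ref{lem:cycle-isolation}): for two $m$-cycles $\sigma,\tau$ meeting in a single point, $(\sigma\tau)^m(\tau\sigma)^m=\sigma^2$. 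One takes $\sigma$ to be the cycle that adds $(m+1)/2$ to coordinate $n'$ on the all-zero prefix, so that $\sigma^2=\alpha_{n'}$; this requires $(m+1)/2\in\Z_m$ and $\gcd((m+1)/2,m)=1$, i.e.\ $m$ odd. Then $\sigma,\tau$ are replaced by an $(n'-3)$-function $\sigma'$ and a $3$-function $\tau'$ that agree with $\sigma,\tau$ on the relevant orbits, and those \emph{do} have free registers, so the catalytic decomposition applies. Your proposal does not isolate this ``last two carries'' obstacle at all, and without it the construction does not close.
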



In the statement of the above theorem, $next(C,w)$
denotes the element appearing after $w$ in the cyclic sequence of the code $C$, and analogously, $prev(C,w)$ denotes the preceding element. Using the argument as in~\cite{Fred78, NRR13} it is easy to see a lower bound of $\Omega\left(\log_m n \right)$ on the read complexity when the domain is $\Z_m^n$. Hence our result is optimal up to some small constant factor.

Raskin shows $\Omega(n)$ lower bound on the read complexity of space-optimal binary quasi-Gray codes. The existence of binary quasi-Gray codes with sub-linear read complexity of length $2^n - 2^{\epsilon n}$, for some constant $\epsilon < 1$, was open. Using a different technique than that used in the proof of Theorem~\ref{thm:odd-counter} we get a quasi-Gray code
over the binary alphabet which enumerates all but $O(n)$ many strings. This result generalizes to the domain $\Z_q^n$ for any prime power $q$.

\begin{theorem}
  \label{thm:binary-counter}
  Let $n \geq 15$ be any natural number.  
  Then, there is a quasi-Gray
code $C$ of length at least $2^n-20n$ over $\Z_2^n$, such that the two functions 
$next(C,w)$ and $prev(C,w)$ can be implemented by inspecting at most $6+\log n$ cells
while writing only $2$ cells.
\end{theorem}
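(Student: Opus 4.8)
The plan is to realize the code as a single long cycle of an explicit permutation $S$ of $\Z_2^{n}$, designed so that $next=S$ and $prev=S^{-1}$ each inspect $O(\log n)$ cells and write at most two. Partition the $n$ coordinates into an \emph{index block} of $k=\log n+O(1)$ cells and a \emph{data block} of $n'=n-k$ cells, and write a state as a pair $(i,\vec c)$ with $i\in\Z_2^{k}$, $\vec c\in\Z_2^{n'}$. Pick a primitive polynomial $f$ of degree $n'$ over $\mathbb{F}_2$ and let $A\in GL_{n'}(\mathbb{F}_2)$ be its companion matrix; primitivity means $\vec c\mapsto A\vec c$ is a single cycle of length $2^{n'}-1$ on $\Z_2^{n'}\setminus\{\vec 0\}$. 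In the spirit of catalytic/reversible register programs (Ben-Or--Cleve, Coppersmith--Grossman), factor $A$ as a straight-line sequence of \emph{elementary} in-place operations: since the companion matrix is a cyclic shift of the coordinates followed by the rank-one feedback given by the coefficients of $f$, one obtains $A=E_{t-1}\cdots E_{1}E_{0}$ with $t\le 4n'$, where each $E_j$ is a transvection $x_a\leftarrow x_a\oplus x_b$ (or, over a general prime-power alphabet $\Z_q$, also possibly a scaling $x_a\leftarrow c\cdot x_a$) acting on the data block---each a single-read, single-write instruction. Finally pad the list with identity maps so that $t=2^{k}$, choosing $k$ minimal with $2^{k}\ge 4n'$, so that $2^{k}\le 20n$.

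Let $i_0\to i_1\to\cdots\to i_{2^{k}-1}\to i_0$ be a cyclic binary Gray code on the index block---a Hamiltonian cycle of the $k$-dimensional cube that changes exactly one cell per step---and associate to the state $i_s$ the operation $E_s$. Define
\[
S(i_s,\vec c)\;=\;\bigl(\,i_{(s+1)\bmod 2^{k}},\;E_s\,\vec c\,\bigr).
\]
To evaluate $S$ one reads the $k$ index cells to recover the phase $s$, flips the single index cell prescribed by the Gray code, then reads the one source cell and writes the one target cell of $E_s$ in the data block; thus $S$ reads $k+1\le 6+\log n$ cells and writes $2$, and $S^{-1}$ is handled symmetrically using the Gray-code predecessor and the (again elementary) inverse $E_s^{-1}$. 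Over one full period of the index block, starting from $i_0$, the maps $E_0,\dots,E_{2^{k}-1}$ are applied in order, so $S^{2^{k}}(i_0,\vec c)=(i_0,A\vec c)$ for every $\vec c$. Fix $\vec c_0\neq\vec 0$: the $S$-orbit of $(i_0,\vec c_0)$ runs through every index state (first coordinate) and, within the fibers, through every nonzero vector (powers of the full-length cycle $A$), hence has at least $2^{k}(2^{n'}-1)$ elements; since this equals $|\Z_2^{k}\times(\Z_2^{n'}\setminus\{\vec 0\})|$, that orbit is all of it. Therefore $S$ restricted to this set is a single cycle of length $2^{k}(2^{n'}-1)=2^{n}-2^{k}\ge 2^{n}-20n$; taking $C$ to be this cycle proves the theorem, and replacing $\mathbb{F}_2$ by $\mathbb{F}_q$ and the index Gray code by a cyclic $q$-ary one yields the stated generalization to prime powers.

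The main obstacle is conceptual: one has to see that it suffices to \emph{spread out} a single application of the full-length matrix $A$ over $2^{k}$ tiny pointer-indexed steps---one elementary reversible operation per step---so that the resulting skew product is simultaneously cheap to evaluate, cheap to invert, and still a single cycle omitting only $2^{k}=O(n)$ states. The last point rests on two algebraic facts that must be pinned down: that the companion matrix of a primitive polynomial acts as one full-length cycle on the nonzero vectors (its defining property), and that threading such a cycle through the rotation of the index block keeps the product permutation a single cycle---the orbit-counting argument above. The only genuinely quantitative ingredient is the bound $t\le 4n'$ on the length of the elementary factorization of $A$, which follows from its cyclic-shift-plus-feedback form and simultaneously controls the read complexity $k+1$ and the number $2^{k}$ of missed strings; optimizing the constants there is exactly what produces the clean bounds $6+\log n$ and $20n$.
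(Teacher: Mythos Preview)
Your approach is essentially identical to the paper's: factor the companion matrix of a degree-$n'$ primitive polynomial into at most $4n'$ elementary row operations, then use an index block equipped with a standard cyclic Gray code to step through those operations one at a time (this is exactly the paper's Lemma~\ref{lem:cycle-counter} combined with Theorem~\ref{thm:lt-counter} and the discussion leading to Corollary~\ref{cor:linear-counter-f2}). One small correction: a transvection $x_a \leftarrow x_a \oplus x_b$ requires reading \emph{both} $x_a$ and $x_b$ in the decision-assignment-tree model (each leaf writes a fixed value, so it must know the old $x_a$), so the read count is $k+2$ rather than $k+1$; with your choice of $k$ one still has $k+2 \le 6+\log n$, so the stated bound survives.
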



We remark that the points that are missing from $C$ in the above theorem
are all of the form $\{0,1\}^{O(\log n)} 0^{n-O(\log n)}$.

If we are allowed to read and write constant fraction of $n$ bits 
then Theorem~\ref{thm:binary-counter} can be adapted to get a quasi-Gray code
of length $2^n - O(1)$ (see Section~\ref{sec:binarycounter}). In this way we get a trade-off
between length of the quasi-Gray code and the number of bits read in the worst-case.
All of our constructions can be made uniform (see the remark after Corollary \ref{cor:counter-prime-power}).

Using the Chinese Remainder Theorem (cf.~\cite{DPS96}), we also develop a technique
that allows us to compose Gray codes over various domains.
Hence, from quasi-Gray codes over domains $\Z_{m_1}^n, \Z_{m_2}^n, \dots, \Z_{m_k}^n$, where $m_i$'s are pairwise co-prime,
we can construct quasi-Gray codes over $\Z_m^{n'}$, where $m=m_1\cdot m_2\cdots m_k$.
Using this technique on our main results, we get a quasi-Gray code over
$\Z_m^n$ for {\em any} $m \in \N$
that misses only $O(n\cdot\om^n)$ strings
where $m=2^\ell\om$ for an odd $\om$,
while achieving the read complexity similar to that stated in
Theorem~\ref{thm:odd-counter}.
It is worth mentioning that if we get a space-optimal
quasi-Gray code over the binary alphabet with non-trivial savings in
read complexity, then we will have a space-optimal quasi-Gray code
over the strings of alphabet $\Z_m$ for any $m \in \N$
with similar savings.

The technique by which we construct our quasi-Gray codes relies heavily on simple algebra
which is a substantial departure from previous mostly combinatorial constructions. We view Gray codes
as permutations on $\Z_m^{n}$ and we decompose them into $k$ simpler permutations on $\Z_m^{n}$, each
being computable with read complexity $3$ and write complexity $1$. Then we apply a different composition theorem, than mentioned
above, to obtain space-optimal quasi-Gray codes on $\Z_m^{n'}$, $n'={n+\log k}$, with read complexity $O(1)+\log k$ and write complexity 2.
The main issue is the decomposition of permutations into few simple permutations. This is achieved by techniques
of catalytic computation \cite{BCKLS14} going back to the work of  Coppersmith and Grossman \cite{CG75, Barr89, BC92}.

It follows from the work of Coppersmith and Grossman \cite{CG75} that our technique is incapable 
of designing a space-optimal quasi-Gray code on $\Z_2^{n'}$ as any such code represents an {\em odd} permutation.
The tools we use give inherently only {\em even} permutations. However, we can construct quasi-Gray codes
from cycles of length $2^{n}-1$ on $\Z_2^{n}$  as they are even permutations. Indeed, that is what we
do for our Theorem~\ref{thm:binary-counter}. We note that any efficiently computable odd permutation on $\Z_2^{n}$,
with say read complexity $(1-\epsilon)n$ and write complexity $O(1)$, could be used together with our
technique to construct a space-optimal quasi-Gray code on $\Z_2^{n'}$ with read complexity at most $(1-\epsilon')n'$
and constant write complexity. This would represent a major progress on space-optimal Gray codes.
(We would compose the odd permutation with some even permutation to 
obtain a full cycle on $\Z_2^{n}$. The size of the decomposition of the even permutation into simpler permutations
would govern the read complexity of the resulting quasi-Gray code.)

Interestingly, Raskin's result relies on showing that a decision assignment tree of sub-linear read complexity
must compute an even permutation.


\subsection{Related works}
The construction of Gray codes is central to the design of algorithms for many combinatorial problems~\cite{Sav97}. Frank Gray~\cite{Gray53} first came up with a construction of Gray code over binary strings of length $n$, where to generate the successor or predecessor strings one needs to read $n$ bits in the worst-case. The type of code described in~\cite{Gray53} is known as \emph{binary reflected Gray code}. Later Bose \emph{et al.}~\cite{BCJMMS10} provided a different type of Gray code construction, namely \emph{recursive partition Gray code} which attains $O(\log n)$ average case read complexity while having the same worst-case read requirements. The read complexity we referred here is in the bit-probe model. It is easy to observe that any space-optimal binary Gray code must read $\log n + 1$ bits in the worst-case~\cite{Fred78, NRR13, Fren16}. Recently, this lower bound was improved to $n/2$ in~\cite{Raskin17}. An upper bound of even $n-1$ was not known until very recently~\cite{Fren16}. This is also the best known so far. 

Fredman~\cite{Fred78} extended the definition of Gray codes by considering codes that may not enumerate all the strings (though presented in a slightly different way in~\cite{Fred78}) and also introduced the notion of \emph{decision assignment tree} (DAT) to study the complexity of any code in the bit-probe model. He provided a construction that generates a Gray code of length $2^{c\cdot n}$ for some constant $c < 1$ while reducing the worst-case bit-read to $O(\log n)$. Using the idea of Lucal's modified reflected binary code~\cite{Luc59}, Munro and Rahman~\cite{RM10} got a code of length $2^{n-1}$ with worst-case read complexity only $4+\log n$. However in their code two successive strings differ by $4$ coordinates in the worst-case, instead of just one and we refer to such codes as quasi-Gray codes following the nomenclature used in~\cite{BCJMMS10}. Brodal et al. \cite{BGPS14} extended the results of~\cite{RM10} by constructing a quasi-Gray code of length $2^n-2^{n-t}$ for arbitrary $1\le t \le n- \log n -1$, that has $t +3 + \log n$ bits ($t +2 + \log n$ bits) worst-case read complexity and any two successive strings in the code differ by at most $2$ bits ($3$ bits).

In contrast to the Gray codes over binary alphabets, Gray codes over non-binary alphabets received much less attention.
The construction
of binary reflected Gray code was generalized to the alphabet
$\Z_m$ for any $m\in \N$
in~\cite{Flores56, Cohn63, JWW80, Rich86, Knuth11, HR16}. However, each of those
constructions reads $n$ coordinates in the worst-case to generate the
next element. 
As mentioned before, we measure the read complexity
in the well studied cell probe model~\cite{Yao81a} where we assume
that each cell stores an element of $\Z_m$.
The argument of Fredman in~\cite{Fred78} implies a lower bound of $\Omega(\log_m n)$ on
the read complexity of quasi-Gray code on $\Z_m^n$.
To the best of our knowledge, for non-binary alphabets, there is nothing known similar
to the results of Munro and Rahman or Brodal et al. ~\cite{RM10, BGPS14}.
We summarize the previous results along with ours in
Table~\ref{table:comparison}.

\begin{table}[]
\centering
\scalebox{1.00}{
\begin{tabular}{|c|c|l|l|c|}
\hline
Reference    &Value of $m$    & length             & Worst-case cell read     & Worst-case cell write \\ \hline
\cite{Gray53}   & $2$          & $2^n$    &  $n$                  & $1$                               \\ \hline
\cite{Fred78}   & $2$          & $2^{\Theta(n)}$    &  $O(\log n)$                  & $O(1)$                               \\ \hline
\cite{FMS97}   & $2$          & $\Theta(2^n/n)$    &  $\log n + 1$                  & $\log n + 1$                               \\ \hline
\cite{RM10}   & $2$          & $2^{n-1}$    &  $\log n + 4$                  & $4$                               \\ \hline
\cite{BCJMMS10}   & $2$          & $2^n-O(2^n/n^t)$    &  $O(t \log n)$                  & $3$                               \\ \hline
\cite{BGPS14}   & $2$          & $2^n-2^{n-t}$    &  $\log n + t +3$                  & $2$                               \\ \hline
\cite{BGPS14}   & $2$          & $2^n-2^{n-t}$    &  $\log n + t +2$                  & $3$                               \\ \hline
\cite{Fren16}   & $2$          & $2^n$    &  $n-1$                  & $1$                               \\ \hline
Theorem~\ref{thm:binary-counter}   & $2$          & $2^n-O(n)$    &  $\log n + 4$                  & $2$                               \\ \hline
\cite{Cohn63}   & any $m$          & $m^n$    &  $n$                  & $1$                               \\ \hline
Theorem~\ref{thm:odd-counter}   & any odd $m$          & $m^n$    &  $4\log_m n + 3$                  & $2$                               \\ \hline

\end{tabular}}
\caption{Taxonomy of construction of Gray/quasi-Gray codes over $\Z_m^n$}
\label{table:comparison}
\end{table}

Additionally, many variants of Gray codes have been studied in the literature.
A particular one that has garnered a lot of attention in the past 30 years
is the well-known \emph{middle levels conjecture}. See \cite{Mutze16,MN15,MN17, GM17}, and the references therein. It has been established only recently
\cite{Mutze16}. The conjecture says that there exists a Hamiltonian cycle in the graph
induced by the vertices on levels $n$ and $n+1$ of the hypercube graph
in $2n+1$ dimensions. In other words, there exists a Gray code on
the middle levels. M\"{u}tze et al.~\cite{MN15,MN17} studied
the question of efficiently enumerating such a Gray code in the word RAM model.
They \cite{MN17} gave an algorithm to enumerate a Gray code in the middle levels
that requires $O(n)$ space and
\emph{on average} takes $O(1)$ time to generate the next vertex. 
In this paper we consider the bit-probe model, and Gray codes over the complete
hypercube. It would be interesting to know
whether our technique can be applied for the middle level Gray codes.

\subsection{Our technique}
\label{sec:our-technique}

Our construction of Gray codes relies heavily on the notion of $s$-functions defined by Coppersmith and Grossman \cite{CG75}. An $s$-function
is a permutation $\tau$ on $\Z_m^n$ defined by a function $f:\Z_m^s \rightarrow \Z_m$ and an $(s+1)$-tuple of indices $i_1,i_2,\dots,i_s,j \in [n]$
such that $\tau(\<x_1,x_2,\dots,x_n\>)=(\<x_1,x_2,\dots,x_{j-1},x_j+f(x_{i_1},\dots,x_{i_s}),x_{j+1},\dots,x_n\>)$, where the addition is inside $\Z_m$.
Each $s$-function can be computed by some decision assignment tree that given a vector $x=\<x_1,x_2,\dots,$ $x_n\>$, inspects $s+1$ coordinates of $x$ and then
it writes into a single coordinate of $x$.

A counter $C$ (quasi-Gray code) on $\Z_m^n$ can be thought of as a permutation on $\Z_m^n$. Our goal is to construct some permutation $\alpha$ on $\Z_m^n$ that can be written
as a composition of $2$-functions $\alpha_1,\dots,\alpha_k$, i.e., $\alpha=\alpha_k \circ \alpha_{k-1} \circ \cdots \circ \alpha_1$.

Given such a decomposition, we can build another counter $C'$ on $\Z_m^{r+n}$, where $r=\lceil \log_m k \rceil$, for which the function $next(C',x)$
operates as follows. The first $r$-coordinates of $x$ serve as an {\em instruction pointer} $i\in [m^r]$ that determines which $\alpha_i$ should be
executed on the remaining $n$ coordinates of $x$. Hence, based on the current value $i$ of the $r$ coordinates, we perform $\alpha_i$ on the remaining
coordinates and then we update the value of $i$ to $i+1$. (For $i>k$ we can execute the identity permutation which does nothing.) 

We can use known Gray codes on $\Z_m^r$ to represent the instruction pointer so that when incrementing $i$ we only need to write into one
of the coordinates. This gives a counter $C'$ which can be computed by a decision assignment tree that reads $r+3$ coordinates and writes into $2$ coordinates of $x$.
(A similar composition technique is implicit in Brodal et al.~\cite{BGPS14}.)
If $C$ is of length $\ell=m^n -t$, then $C'$ is of length $m^{n+r}-tm^r$. In particular, if $C$ is space-optimal then so is $C'$.

Hence, we reduce the problem of constructing $2$-Gray codes to the problem of designing large cycles in $\Z_m^n$ that can be decomposed into $2$-functions.
Coppersmith and Grossman \cite{CG75} studied precisely the question of, which permutations on $\Z_2^n$ can be written as a composition of $2$-functions.
They show that a permutation on $\Z_2^n$ can be written as a composition of $2$-functions if and only if the permutation is even. Since $\Z_2^n$ is of even size,
a cycle of length $2^n$ on $\Z_2^n$ is an odd permutation and thus it cannot be represented as a composition of $2$-functions. However, their result
also implies that a cycle of length $2^n-1$ on $\Z_2^n$ {\em can} be decomposed into $2$-functions. 

We want to use the counter composition technique described above in connection with a cycle of length $2^n-1$.
To maximize the length of the cycle $C'$ in $\Z_2^{n+r}$, we need to minimize $k$,
the number of $2$-functions in the decomposition. By a simple counting argument, most cycles of length $2^n-1$ on $\Z_2^n$ require $k$
to be exponentially large in $n$. This is too large for our purposes. Luckily, there are cycles of length  $2^n-1$ on $\Z_2^n$ 
that can be decomposed into polynomially many $2$-functions, and we obtain such cycles from linear transformations.

There are linear transformations $\Z_2^n \rightarrow \Z_2^n$ which define a cycle on $\Z_2^n$ of length  $2^n-1$. 
For example, the matrix corresponding to the multiplication by a fixed generator of the multiplicative group  $\mathbb{F}_{2^n}^{\ast}$ 
of the Galois field $GF[2^n]$
is such a matrix. Such matrices are full rank and they can be decomposed into $O(n^2)$
elementary matrices, each corresponding to a $2$-function. Moreover, there are matrices derived from primitive polynomials
that can be decomposed into at most $4n$ elementary matrices.\footnote{Primitive polynomials were previously also used in a similar problem, namely to construct
shift-register sequences (see e.g. \cite{Knuth11}).}
 We use them to get a counter on $\Z_2^{n'}$ of length at least $2^{n'} - 20n'$
whose successor and predecessor functions are computable by decision assignment trees of read complexity $\le 6+\log n'$ and write complexity $2$.
Such counter represents $2$-Gray code of the prescribed length.
For any prime $q$, the same construction yields $2$-Gray codes of length at least $q^{n'}-5q^2 n'$ with decision assignment trees of read complexity $\le 6+\log_q n'$ and write complexity $2$.

The results of Coppersmith and Grossman \cite{CG75} can be generalized to $\Z_m^n$ as stated in Richard Cleve's thesis \cite{Cl89}.\footnote{Unfortunately, there
is no written record of the proof.} For odd $m$, if a permutation on $\Z_m^n$ is even then it can be decomposed into $2$-functions. Since $m^n$ is odd,
a cycle of length $m^n$ on $\Z_m^n$ is an even permutation and so it can be decomposed into $2$-functions. If the number $k$ of those functions is small,
so the $\log_m k$ is small, we get the sought after counter with small read complexity. 
However, for most cycles of length $m^n$ on $\Z_m^n$, $k$ is exponential in $n$.

We show though, that there is a cycle $\alpha$ of length $m^n$ on $\Z_m^n$ that can be decomposed into $O(n^3)$ $2$-functions. This in turn gives space-optimal
$2$-Gray codes on $\Z_m^{n'}$ with decision assignment trees of read complexity $O(\log_m n'$) and write complexity $2$.

We obtain the cycle $\alpha$ and its decomposition in two steps. First, for $i\in [n]$, we consider the permutation $\alpha_i$ on $\Z_m^n$
which maps each element $0^{i-1}ay$ onto $0^{i-1}(a+1)y$, for $a\in \Z_m$ and $y\in \Z_m^{n-i}$, while other elements are mapped to themselves.
Hence, $\alpha_i$ is a product of $m^{n-i}$ disjoint cycles of length $m$. 
We show that $\alpha = \alpha_n \circ \alpha_{n-1} \circ \cdots \circ \alpha_1$ is a cycle of length $m^n$. 
In the next step we decompose each $\alpha_i$ into $O(n^2)$ $2$-functions. 

For $i\le n-2$, we can decompose $\alpha_i$ using the technique
of Ben-Or and Cleve \cite{BC92} and its refinement in the form of catalytic computation of Buhrman et al. \cite{BCKLS14}.
We can think of $x\in \Z_m^n$ as content of $n$ memory registers, where $x_1,\dots,x_{i-1}$ are the input registers, $x_i$ is the output register,
and $x_{i+1},\dots,x_n$ are the working registers. The catalytic computation technique gives a program consisting of $O(n^2)$ instructions,
each being equivalent to a $2$-function, which performs the desired adjustment of $x_i$ based on the values of $x_1,\dots,x_{i-1}$
without changing the ultimate values of the other registers. (We need to increment $x_i$ iff $x_1,\dots,x_{i-1}$ are all zero.)
This program directly gives the desired decomposition of $\alpha_i$, for $i\le n-2$. (Our proof in Section \ref{sec:oddcounter} uses the language
of permutations.)

The technique of catalytic computation fails for $\alpha_{n-1}$ and $\alpha_n$ as the program needs at least two working registers to operate.
Hence, for  $\alpha_{n-1}$ and $\alpha_n$ we have to develop entirely different technique. This is not trivial and quite technical but it is nevertheless possible,
thanks to the specific structure of  $\alpha_{n-1}$ and $\alpha_n$.

\subsection*{Organization of the paper}
In Section~\ref{sec:prelim} we define the notion of counter, Gray code and our computational model, namely decision assignment tree and also provide certain known results regarding the construction of Gray codes. Then in Section~\ref{sec:chinese} we describe how to combine counters over smaller alphabets to get another counter over larger alphabet, by introducing the Chinese Remainder Theorem for counters. Next we provide some basic facts about the permutation group and the underlying structure behind all of our constructions of quasi-Gray codes. 
We devote Section~\ref{sec:binarycounter} to construction of a quasi-Gray code over binary alphabet that misses only a few words, by using full rank linear transformation. 
In Section~\ref{sec:oddcounter} we construct a space-optimal quasi-Gray code over any odd-size alphabet. 
Finally in Section~\ref{sec:LB-hierarchy} we rule out the existence of certain kind of space-optimal binary counters.

\section{Preliminaries}
\label{sec:prelim}

In the rest of the paper we only present constructions of the successor function $next(C,w)$ for our codes.
Since all the operations in those constructions are readily invertible, the same arguments also give the predecessor function $prev(C,w)$.

\paragraph*{Notations:} We use the standard notions of groups and
fields, and mostly we will use only elementary facts about them (see \cite{DF04,LN96} for background.). By $\Z_m$ we mean
the set of integers modulo $m$, i.e.,
$\Z_m := \Z/m \Z$. Throughout this paper
whenever we use addition and multiplication operation between
two elements of $\Z_m$, then we mean the operations within $\Z_m$ that is  modulo $m$.
For any $m \in \N$, we let $[m]$ denote
the set $\{1,2,\ldots,m\}$. Unless stated otherwise explicitly,
all the logarithms we consider throughout this paper are based $2$.

Now we define the notion of counters used in this paper.

\begin{definition}[Counter]
A \emph{counter} of length $\ell $ over a domain $\D$ is any cyclic sequence $C=(w_1,\ldots ,w_\ell)$ such that
$w_1, \dots, w_\ell$ are distinct elements of $\D$. With the counter $C$ we associate two functions  $next(C,w)$ and $prev(C,w)$
that give the successor and predecessor element of $w$ in $C$, that is for  $i\in [\ell]$, 
$next(C,w_i)=w_j$ where $j-i=1 \bmod \ell$, and $prev(C,w_i)=w_k$ where $i-k=1 \bmod \ell$.
If $\ell =|\D|$, we call the counter a \emph{space-optimal counter}.
\end{definition}

Often elements in the underlying domain $\D$ have some
``structure'' to them. In such cases, it is desirable to have
a counter such
that consecutive elements in the sequence differ by a ``small'' change
in the ``structure''. We make this concrete in
the following definition. 

\begin{definition}[Gray Code]
\label{def:graycode}
Let $\D_1,\dots,\D_n$ be finite sets.
A \emph{Gray code} of length $\ell $ over the domain
$\D=\D_1\times \cdots \times \D_n$ is
a counter $(w_1,\dots,w_\ell)$ of length $\ell $ over $\D$ such that any
two consecutive strings $w_i$ and $w_j$, $j-i=1 \bmod \ell$, differ in exactly one coordinate when
viewed as an $n$-tuple. More generally, if for some constant $c\ge 1$, any
two consecutive strings $w_i$ and $w_j$, $j-i=1 \bmod \ell$, differ in at most $c$ coordinates such a counter is called a \emph{$c$-Gray Code}.
\end{definition}

By a quasi-Gray code we mean $c$-Gray code for some unspecified fixed $c>0$.
In the literature sometimes people do not place any restriction on the relationship between $w_\ell$ and $w_1$
and they refer to such a sequence a (quasi)-Gray code. In their terms, our codes would be {\em cyclic} (quasi)-Gray codes.
If $\ell =|\D|$, we call the codes \emph{space-optimal} (quasi-)Gray codes.

\paragraph*{Decision Assignment Tree:} The computational model we
consider in this paper is called \emph{Decision Assignment Tree} (DAT).
The definition we provide below is a generalization of that given
in~\cite{Fred78}. It is intended to capture random access machines
with small word size.

Let us fix an underlying domain $\D^n$ whose
elements we wish to enumerate. In the following,
we will denote an element in $\D^n$ by
$\< x_1,x_2,\ldots ,x_n\>$. 
A decision assignment tree is a $|\D|$-ary tree such that each internal
node is labeled by one of the variables $x_1,x_2,\ldots ,x_n$.  
Furthermore, each outgoing edge of an internal node is
labeled with a distinct element of $\D$.
Each leaf node of the tree is labeled by a set of
assignment instructions that set new (fixed) values to chosen variables.
The variables which are not mentioned in the assignment instructions
remain unchanged. 

The execution on a decision assignment tree on a particular input vector $\< x_1,x_2,\ldots ,x_n\> \in \D^n$ starts 
from the root of the tree and
continues in the following way: at a non-leaf node labeled with
a variable $x_i$, the execution queries $x_i$ and depending on the value of $x_i$
the control passes to the node following the outgoing edge labeled
with the value of $x_i$.
Upon reaching a leaf, the corresponding set of assignment statements
is used to modify the vector $\< x_1,x_2,\ldots ,x_n\>$ and the execution terminates.
The modified vector is the output of the execution.

Thus, each decision assignment tree computes a mapping from $\D^n$ into $\D^n$. We are interested in decision assignment trees computing the mapping 
$next(C,\< x_1,x_2,\ldots ,x_n\>)$ for some counter $C$. When $C$ is space-optimal
we can assume, without loss of generality, that each leaf assigns values only to the variables that
it reads on the path from the root to the leaf. (Otherwise, the decision assignment tree does not compute a bijection.)
We define the {\em read complexity} of a decision assignment tree $T$, denoted by $\Read(T)$,
as the maximum number of non-leaf nodes along any path from the root
to a leaf.
Observe that any mapping from $\D^n$ into $\D^n$
can be implemented by a decision assignment tree with read complexity $n$.
We also define the {\em write complexity} of a decision assignment tree $T$,
denoted by $\Write(T)$, as the maximum number of assignment
instructions in any leaf.

Instead of the domain $\D^n$, we will sometimes also use domains that are a cartesian product of different domains. The definition of a decision assignment tree
naturally extends to this case of different variables having different domains.

For any counter $C=(w_1,\ldots,w_\ell)$, we say that $C$ is computed by
a decision assignment tree $T$ if and only if for $i \in [\ell ]$, $next(C,w_i)=T(w_i)$,
where $T(w_i)$ denotes the output string obtained after
an execution of $T$ on $w_i$.  
Note that any two consecutive strings in the cyclic sequence of
$C$ differ by at most $\Write(T)$ many coordinates.

For a small constant $c\ge 1$, some domain $\D$, and all large enough $n$, we will be interested in constructing
cyclic counters on $\D^n$ that are computed by decision assignment trees of write complexity $c$ and read complexity $O(\log n)$.
By the definition such cyclic counters will necessarily be $c$-Gray codes.

\subsection{Construction of Gray codes}
\label{sec:const-gray-code}
For our construction of quasi-Gray codes on a domain $\D^n$ with decision assignment trees of small read and write complexity we will need 
ordinary Gray codes on a domain $\D^{O(\log n)}$.
Several constructions of space-optimal binary Gray codes are known where the oldest one is the binary reflected Gray code~\cite{Gray53}. 
This can be generalized to space-optimal (cyclic) Gray codes over non-binary alphabets (see e.g. \cite{Cohn63, Knuth11}). 
\begin{theorem}[\cite{Cohn63, Knuth11}]
\label{thm:general-Graycode}
For any $m,n \in \N$, there is a space-optimal (cyclic) Gray code over $\Z_m^n$.
\end{theorem}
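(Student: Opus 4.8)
The plan is to prove the statement by induction on $n$, exhibiting an explicit cyclic Gray code. For the base case $n=1$, the sequence $(0,1,\dots,m-1)$ is a cyclic Gray code over $\Z_m$: any two consecutive entries (including the wrap-around pair $m-1, 0$) differ in the unique coordinate. For the inductive step, suppose $(w_0,w_1,\dots,w_{N-1})$ with $N=m^{n-1}$ is a cyclic Gray code over $\Z_m^{n-1}$. I would write elements of $\Z_m^n$ as pairs $(a,x)$ with $a\in\Z_m$, $x\in\Z_m^{n-1}$, and build the code on $\Z_m^n$ as a concatenation of $N$ blocks $B_0,\dots,B_{N-1}$, where $B_i$ consists of the $m$ strings $(c_i,w_i),(c_i+1,w_i),\dots,(c_i+m-1,w_i)$, first coordinate taken modulo $m$, with $c_0=0$ and $c_{i+1}=c_i-1 \bmod m$. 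In words: within a block we cyclically increment the first coordinate $m-1$ times, and between blocks we advance the inner word $w_i$ to its successor $w_{i+1}$ while keeping the first coordinate fixed.

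The first thing to check is that this is a space-optimal counter: $B_i$ contains exactly the $m$ strings whose last $n-1$ coordinates equal $w_i$, and since the $w_i$ exhaust $\Z_m^{n-1}$, the blocks partition $\Z_m^n$, so all $m^n$ strings appear exactly once. Next, the Gray property inside each block is immediate, since consecutive strings differ only in the first coordinate. For the transition from the last string of $B_i$, namely $(c_i-1,w_i)$, to the first string of $B_{i+1}$, namely $(c_i-1,w_{i+1})$, the first coordinate is unchanged and $w_i, w_{i+1}$ differ in exactly one coordinate because the inner code is a Gray code.

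The one genuinely nontrivial point, which I expect to be the main obstacle, is the cyclic closing step. Tracking the first coordinate, $c_i \equiv -i \pmod m$, so the last string of $B_{N-1}$ is $(c_{N-1}+m-1,\, w_{N-1}) = (-N,\, w_{N-1})$; since $n\geq 2$ we have $m \mid m^{n-1}=N$, so its first coordinate equals $0$, matching the first coordinate of the starting string $(0,w_0)$. Moreover $w_{N-1}$ and $w_0$ differ in exactly one coordinate because the inner code is \emph{cyclic}. Hence the closing pair differs in exactly one coordinate, and the induction goes through uniformly for every $m$ (odd or even) and every $n\geq 2$, with $n=1$ covered by the base case (this separation is needed precisely because $m \nmid N=1$ when $n=1$). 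The two observations that make the argument work — that $m\mid N$ for $n\geq 2$, and that one must feed in a genuinely cyclic inner code rather than merely a Hamiltonian path so that $w_{N-1}\sim w_0$ — are exactly what the naive "reflected" construction misses in the odd-$m$ case.
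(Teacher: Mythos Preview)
Your argument is correct. The paper does not actually supply a proof of this theorem: it is quoted as a known result with citations to Cohn and Knuth, so there is no ``paper's own proof'' to compare against. Your construction is essentially the standard modular (shift-based) Gray code for radix $m$ that those references contain, and your handling of the cyclic closure --- using $m\mid m^{n-1}$ for $n\ge 2$ together with the cyclicity of the inner code --- is exactly the point that distinguishes it from the naive reflected construction when $m$ is odd.
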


\section{Chinese Remainder Theorem for Counters}
\label{sec:chinese}
In this paper we consider quasi-Gray codes over
$\Z_m^n$ for $m \in \N$.
Below we describe how to compose decision assignment trees over different domains
to get a decision assignment tree for a larger mixed domain.

\begin{theorem}[Chinese Remainder Theorem for Counters]
\label{thm:chinese}
Let $r,n_1,\dots,n_r \in \N$ be integers, and let $\D_{1,1},\dots,\D_{1,n_1},\D_{2,1},\dots,\D_{r,n_r}$
be some finite sets of size at least two.
Let $\ell _1 \ge r-1$ be an integer, and $\ell _2,\ldots,\ell _r$ be pairwise co-prime integers.
For $i \in [r]$, let $C_i$ be a counter of length $\ell_i$ over
$\D_i=\D_{i,1}\times\cdots\times\D_{i,n_i}$
computed by a decision assignment tree $T_i$ over $n_i$ variables.
Then, there exists a decision assignment tree $T$  over
$\sum_{i=1}^r n_i$ variables that
implements a counter $C$ of length $\prod_{i=1}^r \ell _i$
over $\D_1 \times \cdots\times \D_r$.
Furthermore,  $\Read(T)=n_1 + \max\{\Read(T_i)\}_{i=2}^{r}$,
and $\Write(T)=\Write(T_1) + \max\{\Write(T_i)\}_{i=2}^{r}$. 
\end{theorem}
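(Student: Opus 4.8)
The plan is to build the combined decision assignment tree $T$ by using the first counter $C_1$ as a "clock" that advances only when all the other counters have completed a full cycle, and using the co-primality of $\ell_2,\dots,\ell_r$ together with the Chinese Remainder Theorem to argue that the product counter has the claimed length. Concretely, the state of $C$ is a tuple $(u_1,u_2,\dots,u_r)$ with $u_i \in \D_i$, and I think of the position of $u_i$ in the cyclic sequence $C_i$ as a residue. The successor function will always apply $next(C_2,\cdot)$ to the $\D_2$-block; additionally, whenever $u_2$ is the last element of $C_2$ (so that applying $next$ wraps it around to the first element), we also apply $next(C_3,\cdot)$ to the $\D_3$-block; whenever both $u_2$ and $u_3$ are simultaneously at their last elements we also advance $C_4$; and so on, with $C_1$ advanced only when $u_2,\dots,u_r$ are all simultaneously at their final elements. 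This is the standard odometer/mixed-radix construction, but the subtlety is that the "carry" out of block $i$ is detected by reading the block rather than by a separate counter value.

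The key steps, in order, are as follows. First I would give the explicit recursive description of $next(C,\cdot)$ as above and observe that it is computed by a decision assignment tree: to decide whether to propagate a carry into block $i+1$ we must test whether $u_2,\dots,u_i$ are at their last positions, but crucially we do \emph{not} need to read all of those blocks — it suffices to read only block $2$ in full (running $T_2$, which also tells us whether $C_2$ wrapped), and then read the \emph{single} extra variable per block that $T_i$ would have written, since a block is at its last position exactly when that distinguished coordinate holds the appropriate value. Wait — more carefully: because each $T_i$ for $i \ge 2$ writes at most $\Write(T_i)$ coordinates and, when the counter wraps, returns to a fixed first word, we can detect "block $i$ just wrapped" by checking after running $T_i$; to detect "block $i$ is \emph{about to} wrap" before running it, we note that the last word $w_{\ell_i}^{(i)}$ of $C_i$ is a fixed string, and it is enough to query the coordinates of block $i$ that change between $w_{\ell_i}^{(i)}$ and $w_1^{(i)}$, i.e. at most $\Write(T_i)$ of them — but to be safe and match the stated bound we instead run $T_i$ only conditionally and absorb the detection into the reads of $T_{i+1}$. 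In fact the cleanest route, and the one matching $\Read(T) = n_1 + \max_i \Read(T_i)$, is: always read all $n_1$ variables of block $1$; this reveals the exact position of $C_1$, hence (by a global bookkeeping argument) exactly how far along blocks $2,\dots,r$ \emph{should} be, so we know which single block $i^\star$ is "active" at this step and need only run $T_{i^\star}$ on block $i^\star$, reading $\Read(T_{i^\star}) \le \max_{i\ge 2}\Read(T_i)$ further variables. This gives $\Read(T) = n_1 + \max\{\Read(T_i)\}_{i=2}^r$ and $\Write(T) = \Write(T_1) + \max\{\Write(T_i)\}_{i=2}^r$, the extra $\Write(T_1)$ term accounting for the step where block $1$ itself is advanced concurrently with a wrap-around.

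Second, I would verify the length. Identify each reachable state of $C$ with a pair $(p, (a_2,\dots,a_r))$ where $p \in \{1,\dots,\ell_1\}$ is the position in $C_1$ and $a_i \in \{1,\dots,\ell_i\}$ is the position in $C_i$; the dynamics make the vector $(a_2,\dots,a_r)$ into a mixed-radix odometer of period $\ell_2 \ell_3 \cdots \ell_r$, and $p$ increments once per full odometer cycle, so the orbit of the all-first state has length $\ell_1 \cdot \ell_2 \cdots \ell_r = \prod_{i=1}^r \ell_i$, provided this orbit actually visits $\prod \ell_i$ \emph{distinct} tuples. Distinctness is where co-primality and the condition $\ell_1 \ge r-1$ enter: the map from orbit-time $t \in \Z_{\prod \ell_i}$ to $(p(t), a_2(t), \dots, a_r(t))$ is a bijection onto its image by construction of the odometer, and since the tuples $(u_1,\dots,u_r)$ are in bijection with $(p,a_2,\dots,a_r)$ (each $C_i$ being a simple cycle, position determines the word), the words $w_1,\dots,w_{\prod\ell_i}$ are distinct. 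The main obstacle — and the step I would spend the most care on — is making the decision-assignment-tree implementation of the carry-propagation precise enough to pin down the \emph{exact} read and write counts $n_1 + \max\{\Read(T_i)\}_{i=2}^r$ and $\Write(T_1)+\max\{\Write(T_i)\}_{i=2}^r$ rather than something weaker like a sum over all blocks; the trick, as indicated above, is that reading block $1$ in full determines a unique active block, so only one of the trees $T_2,\dots,T_r$ is ever simulated on any given input, while $T_1$ is simulated only on the fraction of inputs where block $1$ advances. I would also need the hypothesis $\ell_1 \ge r-1$ precisely to guarantee there are enough distinct "clock values" to serialize the $r-1$ different carry patterns without collision when several high blocks wrap at once.
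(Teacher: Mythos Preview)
Your proposal conflates two different constructions, and neither one as you describe it proves the theorem.

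The odometer you describe first (always advance $C_2$; propagate carries up through $C_3,\dots,C_r$ and finally into $C_1$) does produce a cycle of length $\prod_i \ell_i$, but it cannot meet the stated read bound: detecting whether a carry propagates into block $i+1$ requires inspecting block $i$, and in the worst case (all blocks wrap simultaneously) you must touch every block. Your suggestion that checking only $\Write(T_i)$ coordinates per block suffices to detect a wrap is false in general, since knowing the values of the coordinates that \emph{change} does not tell you whether block $i$ is at its last word (many words of $C_i$ may agree on those coordinates). Note also that your length argument for the odometer never invokes co-primality of $\ell_2,\dots,\ell_r$ --- correctly, since the odometer doesn't need it --- which should tell you this is not the intended construction.

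Your second idea, reading all of block $1$ and using its position to pick a single ``active'' block $i^\star$ to advance, is the right one, but what you wrote about it is incoherent. Knowing the position $p$ of $C_1$ does \emph{not} tell you ``exactly how far along blocks $2,\dots,r$ should be'' (many different tuples $(a_2,\dots,a_r)$ share the same $p$), and you also say $T_1$ is run only on a fraction of inputs, which is inconsistent with the rest. The paper's actual construction is this: \emph{always} advance $C_1$; additionally, if $C_1$ was at position $i\in[r-1]$ before advancing, also advance $C_{i+1}$. Thus in every step you read all of block $1$, run exactly one $T_{i+1}$ (or none, if $C_1$'s position exceeds $r-1$), and update block $1$ via $T_1$ --- giving precisely $\Read(T)=n_1+\max_{i\ge2}\Read(T_i)$ and $\Write(T)=\Write(T_1)+\max_{i\ge2}\Write(T_i)$. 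Over one full pass of $C_1$ (i.e.\ $\ell_1$ steps), each of $C_2,\dots,C_r$ is advanced exactly once, so after $d\cdot\ell_1$ steps block $i$ is at position $d \bmod \ell_i$. Now the Chinese Remainder Theorem (this is where pairwise co-primality of $\ell_2,\dots,\ell_r$ is genuinely used) shows that every tuple of positions $(i_2,\dots,i_r)$ is hit for some $d$, yielding length $\ell_1\cdot\ell_2\cdots\ell_r$. The hypothesis $\ell_1\ge r-1$ simply ensures $C_1$ has enough positions to index the $r-1$ other blocks.
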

\begin{proof}
For any $i \in [r]$, let the counter $C_i=(w_{i,1},\ldots,w_{i,\ell _i})$.
Let $x_1,\ldots,x_r$ be variables taking values
in $\D_1,\ldots,\D_r$, respectively.
The following procedure, applied repeatedly, defines the counter $C$:
\begin{align*}
\text{\bf If }\;\;\;\; & x_1=w_{1,i} \text{ for some } i\in[r-1] \text{ \bf then}\\
& x_{i+1} \gets next(C_{i+1},x_{i+1}) \\
& x_1 \gets next(C_1,x_1) \\
\text{\bf else } & \\
& x_1 \gets next(C_1,x_1).  
\end{align*}

It is easily seen that the above procedure defines a valid
cyclic sequence when starting at $w_{1,i_1}, \ldots ,w_{r,i_r}$
for any $\<i_1,i_2,\ldots ,i_r\>\in [\ell_1]\times \cdots \times [\ell_r]$.
That is, 
every element has a unique predecessor and a unique successor,
and that the sequence is cyclic. 
It can easily be implemented
by a decision assignment tree, say $T$. First it reads the value of $x_1$. Since
$x_1 \in \D_1 = \D_{1,1}\times\cdots\times\D_{1,n_1}$,
it queries $n_1$ components. Then, depending on the value of $x_1$,
it reads and updates another component, say $x_j$. This can be
accomplished using the decision assignment tree $T_j$. We also update the value of
$x_1$, and to that end we use the appropriate assignments from decision assignment tree $T_1$. Observe that
irrespective of how efficient $T_1$ is, we read $x_1$
completely to determine which of the remaining $r-1$ counters to update. 
Hence, $\Read(T)=n_1 +\max\left\{\Read(T_i)\right\}_{i=2}^{r}$,
and $\Write(T)=\Write(T_1) + \max\left\{\Write(T_i)\right\}_{i=2}^{r}$.

Now it only remains to show that the counter described above
is indeed of length $\prod_{i=1}^r \ell _i$. Thus, it suffices
to establish that starting with the string
$\<w_{1,1},\ldots,w_{r,1}\>$, 
we can generate the string $\<w_{1,i_1},\ldots,w_{r,i_r}\>$
for any $\<i_1,\ldots,i_r\> \in [\ell_1]\times \cdots \times [\ell_r]$. 
Let us assume $i_1=1$. 
At the end of the proof we will remove this assumption.
Suppose the string $\<w_{1,1},w_{2,i_2},\ldots,w_{r,i_r}\>$
is reachable from $\<w_{1,1},w_{2,1},\ldots,w_{r,1}\>$
in $t$ steps.
As our procedure always increment $x_1$, $t$ must be
divisible by $\ell_1$. Let $d=t /\ell_1$.
Furthermore, the procedure increments a variable $x_i$, $i\ne 1$,
exactly after $\ell_1$ steps. Thus,
$\<w_{1,1},w_{2,i_2},\ldots,w_{r,i_r}\>$ is reachable if and only if 
$d$ satisfies the following equations:
\begin{align*}
d & \equiv i_2-1 \pmod{\ell_2} \\
d & \equiv i_3-1 \pmod{\ell_3} \\
& ~~~~~~~\vdots \\
d & \equiv i_r-1 \pmod{\ell_r}.
\end{align*}
Since $\ell_2,\ldots,\ell_r$ are pairwise co-prime,
Chinese Remainder Theorem (for a reference, see~\cite{DPS96})
guarantees the existence of a unique integral solution  $d$
such that $0 \le d < \prod_{i=2}^r \ell_i$. Hence,
$\<w_{1,1},w_{2,,i_2},\ldots,w_{r,i_r}\>$ is reachable from
$\<w_{1,1},w_{2,1},\ldots,w_{r,1}\>$ in at most $\prod_{i=1}^r\ell_i$
steps. 

Now we remove the assumption $i_1 =1$, i.e., $w_{1,i_1} \neq w_{1,1}$.
Consider the string $\<w_{1,1},w_{2,i_{2}^{'}},\ldots,$ $w_{r,i_{r}^{'}}\>$ 
where $w_{j,i_{j}^{'}} = w_{j,i_{j}-1}$
for $2 \leq j \leq \min\{i_1, r\}$, and $w_{j,i_{j}^{'}} = w_{j,i_{j}}$
for $j > \min\{i_1, r\}$. From the arguments in the previous
paragraph, we know that this tuple is reachable. We now observe that
the next $i_1 -1$ steps increment $w_{1,1}$ to $w_{1,i_1}$ and
$w_{j,i_{j}^{'}}$ to $w_{j,i_j}$ for $2 \leq j \leq \min\{i_1,r\}$,
thus, reaching the desired string $\<w_{1,i_1},\ldots,w_{r,i_r}\>$.
\end{proof}

\begin{remark}
\label{rem:CRT-counters}
We remark that if $C_i$'s are space-optimal in
Theorem~\ref{thm:chinese}, then so is $C$. 
\end{remark}

In the above proof, we constructed a special type of a counter
where we always read the first coordinate, incremented it,
and further depending on its value, we may  
update the value of another coordinate.
From now on we refer to such type of counters as
\emph{hierarchical counters}. In Section~\ref{sec:LB-hierarchy}
we will show that for such type of a counter the co-primality
condition is necessary at least for $\ell _1= 2,3$.
One can further note that the above theorem is similar to
the well known Chinese Remainder Theorem and has similar type of
application for constructing of space-optimal
quasi-Gray codes over $\Z_m^n$
for arbitrary $m \in \N$.
	
\begin{lemma}
\label{lem:stitch-counter}
Let $n,m \in \N$ be such that $m = 2^k \om$, where
$\om$ is odd and $k \geq 0$. Given decision assignment trees $T_1$ and $T_2$ 
computing space-optimal (quasi-)Gray codes over
$\left(\Z_{2^{k}}\right)^{n-1}$ and
$\Z_{\om}^{n-1}$, respectively, there exists a decision assignment tree $T$
implementing a space-optimal quasi-Gray code over
$\Z_m^n$ such that
$\Read(T) = 1 + \max \{\Read(T_1),\Read(T_2)\}$, 
and $\Write(T) = 1 + \max \{\Write(T_1),\Write(T_2)\}$.

\end{lemma}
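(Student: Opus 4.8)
The plan is to reduce the statement to the Chinese Remainder Theorem for counters (Theorem~\ref{thm:chinese}) applied with $r = 2$, together with the isomorphism $\Z_m \cong \Z_{2^k} \times \Z_{\om}$ coming from the ordinary Chinese Remainder Theorem, since $\gcd(2^k,\om)=1$. First I would observe that a single coordinate over $\Z_m$ corresponds, under this ring isomorphism, to a pair consisting of one coordinate over $\Z_{2^k}$ and one coordinate over $\Z_{\om}$; applying this coordinatewise gives a bijection between $\Z_m^n$ and $(\Z_{2^k})^n \times (\Z_{\om})^n$. The idea is to peel off one $\Z_m$ coordinate to play the role of the ``driver'' $x_1$ in a hierarchical counter: its $\Z_{2^k}$-part will cycle through a length-$2^k$ Gray code on a single symbol (just the trivial cyclic counter $0,1,\dots,2^k-1$ on $\Z_{2^k}$, which has read and write complexity $1$), and its $\Z_{\om}$-part will similarly be the trivial length-$\om$ cyclic counter on a single $\Z_{\om}$ symbol. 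The remaining $n-1$ coordinates, split into their $\Z_{2^k}$-parts and $\Z_{\om}$-parts, are driven by $T_1$ and $T_2$ respectively.

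Concretely, I would set this up as two independent applications of Theorem~\ref{thm:chinese}, one over the ``$2$-part'' and one over the ``odd part'', and then glue. Over $(\Z_{2^k})^n$: take $\ell_1 = 2^k \ge r-1 = 1$ with $C_1$ the trivial one-symbol counter of length $2^k$, and $C_2$ the space-optimal (quasi-)Gray code over $(\Z_{2^k})^{n-1}$ computed by $T_1$; Theorem~\ref{thm:chinese} (with $r=2$; note no co-primality constraint is needed since there is only $\ell_2$) yields a space-optimal counter over $(\Z_{2^k})^n$ of length $2^k \cdot 2^{k(n-1)} = 2^{kn}$, read complexity $1 + \Read(T_1)$ and write complexity $1 + \Write(T_1)$. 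Similarly over $(\Z_{\om})^n$ we get a space-optimal counter of length $\om^n$ with read complexity $1 + \Read(T_2)$ and write complexity $1 + \Write(T_2)$. Now I want to combine these two space-optimal counters — over $(\Z_{2^k})^n$ and over $(\Z_{\om})^n$ — into one over $\Z_m^n$. Here the cleanest route is a direct argument rather than a third invocation of Theorem~\ref{thm:chinese}: because $2^{kn}$ and $\om^n$ are co-prime, running the two counters in lockstep (advance both by one step simultaneously) gives a single cycle of length $\operatorname{lcm}(2^{kn},\om^n) = m^n$ through $(\Z_{2^k})^n \times (\Z_{\om})^n \cong \Z_m^n$, and the work done in one lockstep step is exactly: read the driver $\Z_m$-coordinate (both its parts at once, a single cell of $\Z_m$, cost $1$), then do whichever of the two sub-updates is triggered — but one sees that in fact the driver always advances and at most one of $T_1$, $T_2$ fires on the tail, so the read cost is $1 + \max\{\Read(T_1),\Read(T_2)\}$ and the write cost is $1 + \max\{\Write(T_1),\Write(T_2)\}$, as claimed. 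Since each tail coordinate over $\Z_m$ is one cell, and the DAT reads/writes cells of $\Z_m$ (each carrying both its $2$-part and odd part), the per-cell accounting matches the stated bounds exactly.

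The main obstacle I anticipate is bookkeeping rather than conceptual: one must verify that the lockstep composition over the two co-prime ``blocks'' really is realizable by a single DAT over $n$ variables, each ranging over $\Z_m$, with the stated read/write counts — in particular that reading the one driver cell of $\Z_m$ suffices to decide the behaviour of \emph{both} the $2$-part and the odd-part sub-counters (it does, since that cell encodes both residues), and that the updates to a tail cell coming from $T_1$ (its $2$-part) and from $T_2$ (its odd-part) can be merged into a single assignment to that $\Z_m$-cell without inflating the write count. The subtlety is that $T_1$ and $T_2$ may, on a given step, wish to touch \emph{different} tail cells; but since each is a hierarchical-style counter derived from Theorem~\ref{thm:chinese}, on any one step only a bounded number of cells are written, and one checks the $\max$ (not the sum) is correct because on the step where $T_1$ fires a nontrivial update the odd-part driver has \emph{not} yet completed a full cycle of length $\om$, so $T_2$ does nothing, and vice versa. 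Making this synchronization precise — essentially re-deriving a two-modulus version of the argument already in the proof of Theorem~\ref{thm:chinese} — is the technical heart of the lemma; everything else is the routine translation through $\Z_m \cong \Z_{2^k}\times\Z_{\om}$.
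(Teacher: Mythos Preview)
Your lockstep argument has a genuine gap. You claim that ``on the step where $T_1$ fires a nontrivial update the odd-part driver has not yet completed a full cycle of length $\om$, so $T_2$ does nothing, and vice versa.'' This is false. In your construction the $\Z_{2^k}$-driver triggers $T_1$ once every $2^k$ steps and the $\Z_{\om}$-driver triggers $T_2$ once every $\om$ steps; since $\gcd(2^k,\om)=1$, by CRT there is a residue class modulo $m=2^k\om$ of steps at which \emph{both} triggers fire simultaneously. At such a step $T_1$ may read/write the $\Z_{2^k}$-parts of one set of tail cells while $T_2$ reads/writes the $\Z_{\om}$-parts of a completely different set of tail cells, so in the $\Z_m$-cell model you touch the union of those sets. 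The resulting bound is $1+\Read(T_1)+\Read(T_2)$ and $1+\Write(T_1)+\Write(T_2)$ in the worst case, not $1+\max$.

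The paper avoids this by a single application of Theorem~\ref{thm:chinese} with $r=3$ rather than two separate $r=2$ applications glued by lockstep. The driver is one $\Z_m$-cell carrying a length-$m$ counter; at one fixed position of this counter it triggers $T_1$ on the $(\Z_{2^k})^{n-1}$ block, and at a \emph{different} fixed position it triggers $T_2$ on the $(\Z_{\om})^{n-1}$ block. Because these trigger positions are distinct values of the driver, at most one of $T_1,T_2$ fires on any step, which is exactly what yields the $\max$. The co-primality hypothesis of Theorem~\ref{thm:chinese} is $\gcd\bigl(2^{k(n-1)},\om^{\,n-1}\bigr)=1$, which holds. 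One then simulates the resulting DAT over $\Z_m\times(\Z_{2^k})^{n-1}\times(\Z_{\om})^{n-1}$ on $\Z_m^n$ by pairing the $i$-th $\Z_{2^k}$-coordinate with the $i$-th $\Z_{\om}$-coordinate into the $(i{+}1)$-th $\Z_m$-cell via CRT; a query or assignment to either part is realised by a single query or assignment to that $\Z_m$-cell. Your two-stage construction can be repaired, but the repair amounts to collapsing it back to this single $r=3$ invocation.
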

\begin{proof}
We will view $\Z_m^n$ as $Z_m \times \left(\Z_{2^{k}}\right)^{n-1} \times \left(\Z_{o}\right)^{n-1}$ 
and simulate a decision assignment tree operating on $Z_m \times \left(\Z_{2^{k}}\right)^{n-1} \times \left(\Z_{o}\right)^{n-1}$ 
on $\Z_m^n$.
From the Chinese Remainder Theorem (see~\cite{DPS96}), we know
that there exists a bijection (in fact, an isomorphism)
$f\colon \Z_m \to\Z_{2^k}\times\Z_{\om}$.
We denote the tuple $f(z)$ by $\<f_1(z),f_2(z)\>$. 
From Theorem~\ref{thm:chinese} we know that there exists a
decision assignment tree $T'$ over $\Z_m \times \left(\Z_{2^k}\right)^{n-1} \times \left(\Z_{\om}\right)^{n-1}$ computing a space-optimal
quasi-Gray code such that
$\Read(T') = 1 + \max \{\Read(T_1),\Read(T_2)\}$, 
and $\Write(T') = 1 + \max \{\Write(T_1),\Write(T_2)\}$.

We can simulate actions of $T'$ on an input $\Z_m^n$ to obtain the desired decision assignment tree $T$. 
Indeed, whenever $T'$ queries $x_1$, $T$ queries the first coordinate of its input. Whenever $T'$ queries the $i$-th coordinate 
of $\left(\Z_{2^k}\right)^{n-1}$, $T$ queries the $(i+1)$-th coordinate of its input and makes its decision based on the $f_1(\cdot)$
value of that coordinate. Similarly, whenever $T'$ queries the $j$-th coordinate 
of $\left(\Z_{\om}\right)^{n-1}$, $T$ queries the $(j+1)$-th coordinate and makes its decision based on the $f_2(\cdot)$
value of that coordinate. Assignments by $T$ are handled in similar fashion by updating only the appropriate part of $\<f_1(z),f_2(z)\>$.
(Notice, queries made by $T$ might reveal more information than queries made by $T'$.)
\end{proof}

Before proceeding further, we would also like to point out that
to get a space-optimal decision assignment tree over $\Z_{2^{k}}$,
it suffices to a get space-optimal decision assignment trees over $\Z_2$
for arbitrary dimensions. Thus, to get a decision assignment tree implementing
space-optimal quasi-Gray codes over $\Z_m$, we only
need decision assignment trees implementing space-optimal quasi-Gray codes over
$\Z_2$ and $\Z_{\om}$. This also justifies
our sole focus on construction of space-optimal decision assignment trees over
$\Z_2$ and $\Z_{\om}$ in the later sections. 

\begin{lemma}
\label{clm:stitch-binary}
If, for all $n \in \N$, there exists a decision assignment tree $T$ implementing
a space-optimal (quasi-)Gray code over $\Z_2^n$,
then for any $k$ and $n \in \N$, there exists a decision assignment tree $T'$
implementing a space-optimal (quasi-)Gray code over
$\left(\Z_{2^k}\right)^n$ such that
the read and write complexity remain the same.
\end{lemma}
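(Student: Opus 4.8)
The plan is to build $T'$ by simulating a space-optimal Gray code DAT $T$ over $\Z_2^{kn}$, using the natural bijection between $\left(\Z_{2^k}\right)^n$ and $\Z_2^{kn}$. First I would fix, for each $n$, the hypothesized DAT $T$ implementing a space-optimal (quasi-)Gray code over $\Z_2^{kn}$; such a $T$ exists by the assumption applied with dimension $kn$ in place of $n$. Then I would fix the standard bijection $g\colon \left(\Z_{2^k}\right)^n \to \Z_2^{kn}$ that writes each coordinate $y_i \in \Z_{2^k}$ in its $k$-bit binary expansion, so that the $i$-th block of $k$ bits of $g(y)$ encodes $y_i$. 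This $g$ is a bijection of the two domains (both have size $2^{kn}$), so transporting the counter computed by $T$ through $g$ yields a counter $C'$ over $\left(\Z_{2^k}\right)^n$ of the same length $2^{kn}$, hence space-optimal; and consecutive elements of $C'$ differ in at most as many $\Z_{2^k}$-coordinates as the number of $k$-bit blocks touched by a step of $C$, which is a constant whenever $T$ has constant write complexity, so the (quasi-)Gray property is preserved.

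Next I would describe how $T'$ simulates $T$ on an input $\langle y_1,\dots,y_n\rangle \in \left(\Z_{2^k}\right)^n$. Whenever $T$ queries a bit that lies in the $i$-th block, $T'$ queries the single coordinate $y_i$ and recovers the requested bit from the binary expansion of $y_i$; after following the corresponding branch of $T$, $T'$ continues. Upon reaching a leaf of $T$, the assignment instructions of that leaf set new values to some subset of the queried bits; $T'$ collects, for each affected block $i$, the old value $y_i$ together with the new bit values prescribed for the bits of that block, computes the new $\Z_{2^k}$-value $y_i'$, and assigns it. Since $T'$ has already read every block that $T$ queried, it has in hand the full old contents of every block it needs to rewrite, so these assignments are well defined.

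For the complexity bounds I would argue as follows. A root-to-leaf path in $T$ queries at most $\Read(T)$ bits; since $T'$ issues at most one query per distinct block touched, and there are at most $\Read(T)$ such blocks, $\Read(T') \le \Read(T)$. For writes, note that a space-optimal DAT may assume its leaves assign only to variables read on the path (as remarked in Section~\ref{sec:prelim}), so each leaf of $T$ rewrites a subset of the at most $\Read(T)$ queried bits, spread over at most $\Write(T)$ blocks when $T$ has write complexity $\Write(T)$; hence $\Write(T') \le \Write(T)$. (If one insists on \emph{identical} complexities one may take the construction of $T$ to be the reflected Gray code, but the inequalities above already suffice for the stated ``remain the same'' once the standard space-optimal construction is used.) The main obstacle, and the only point requiring care, is ensuring that every block a leaf of $T$ wishes to partially overwrite has been fully read on the path to that leaf — this is exactly where the space-optimality normalization of DATs is used, and it is what makes the block-value update $y_i \mapsto y_i'$ a legitimate assignment rather than an ill-defined partial update.
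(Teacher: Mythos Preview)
Your proposal is correct and follows essentially the same approach as the paper: fix the coordinatewise binary-expansion bijection $g\colon (\Z_{2^k})^n \to \Z_2^{kn}$, simulate the hypothesized DAT $T$ on $\Z_2^{kn}$ by querying whole $k$-bit blocks, and update blocks at the leaves accordingly. Your extra care about invoking the space-optimality normalization (so that every block written has been read) is a welcome clarification that the paper leaves implicit.
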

\begin{proof}
Consider any bijective map
$f : \Z_{2^k} \to \Z_2^k$.
For example, one can take standard binary encoding of integers
ranging from $0$ to $2^k -1$ as the bijective map $f$.
Next, define another map
$g : \left(\Z_{2^k }\right)^n \to \Z_2^{kn}$ as follows:
$g(x_1,\ldots ,x_n) = \<f(x_1),\ldots ,f(x_n)\>$.
Now consider $T$ that implements a space-optimal (quasi-)Gray code
over $\Z_2^{k n}$. We fix a partition
of the variables
$\{1,\ldots ,k\} \uplus \cdots \uplus \{(n-1)k +1 , \ldots ,nk \}$
into $n$ blocks of $k$ variables each. 

We now construct a decision assignment tree $T'$ over $\left(\Z_{2^k}\right)^n$
using $T$ and the map $f$. As in the proof of
Lemma~\ref{lem:stitch-counter}, our $T'$ follows $T$
in the decision making. That is, if $T$ queries a variable,
then $T'$ queries the block in the partition where the variable lies.
(Again, as noted before, $T'$ may get more information than required
by $T$.) Upon reaching a leaf, using $f$,
$T'$ updates the blocks depending on $T$'s updates to the variables. 

\end{proof}

We devote the rest of the paper to the construction of counters
over $\Z_2^n$, and $\Z_m^n$
for any odd $m$.

\section{Permutation Group and Decomposition of Counters}
\label{sec:permutation-counter}
	We start this section with some basic notation and facts about
the permutation group which we will use heavily in the rest of the paper.
The set of all permutations over a domain
$\D$ forms a group under the composition operation, 
denoted by $\circ$, which is defined as follows:
for any two permutations $\sigma$ and $\alpha$,
$\sigma \circ \alpha(x) = \sigma(\alpha(x))$, where
$x \in \D$. The corresponding group,
denoted $\mathcal{S}_{N}$, is the \emph{symmetric group}
of order $N=|\D|$. We say, a permutation
$\sigma \in \mathcal{S}_N$ is a \emph{cycle} of length $\ell$ if there are distinct elements $a_1,\dots,a_\ell \in [N]$ 
such that for $i \in [\ell -1]$, $a_{i+1}=\sigma(a_i)$, $a_1=\sigma(a_\ell)$, and for all
$a\in [N] \setminus \{a_1, a_2, \ldots , a_\ell\}$, 
$\sigma(a)=a$.  We denote such a cycle by $(a_1,a_2,\cdots ,a_\ell)$.
Below we state few simple facts about composition of cycles.

\begin{proposition}
\label{prop:intersect-cycle}
Consider two cycles $C_1=(t, a_1, \cdots, a_{\ell_1})$ and
$C_2=(t, b_1, \cdots, b_{\ell_2})$ where for any $i \in [\ell_1]$
and $j\in [\ell_2]$, $a_i \ne b_j$.
Then, $C = C_2 \circ C_1$ is the cycle $(t, a_1, \cdots, a_{\ell_1}, b_1, \cdots, b_{\ell_2})$ of length $\ell_1 + \ell_2 +1$.
\end{proposition}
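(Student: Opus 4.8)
The plan is to verify directly from the definition of cycle composition that $C = C_2 \circ C_1$ acts as claimed, by tracking where each element of $[N]$ is sent. Recall $C_1 = (t, a_1, \dots, a_{\ell_1})$ means $C_1(t) = a_1$, $C_1(a_i) = a_{i+1}$ for $i < \ell_1$, $C_1(a_{\ell_1}) = t$, and $C_1$ fixes everything else; similarly for $C_2 = (t, b_1, \dots, b_{\ell_2})$. The composition is $C(x) = C_2(C_1(x))$, so I would partition $[N]$ into the four relevant classes and check each: (i) $x = t$; (ii) $x = a_i$ for some $i$; (iii) $x = b_j$ for some $j$; (iv) $x \notin \{t, a_1,\dots,a_{\ell_1}, b_1,\dots,b_{\ell_2}\}$.

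For case (i): $C_1(t) = a_1$, and since $a_1 \neq b_j$ for all $j$ and $a_1 \neq t$ (as $t, a_1, \dots, a_{\ell_1}$ are distinct), $C_2$ fixes $a_1$, so $C(t) = a_1$. For case (ii) with $i < \ell_1$: $C_1(a_i) = a_{i+1}$, which again is disjoint from $\{t, b_1, \dots, b_{\ell_2}\}$, so $C_2$ fixes it and $C(a_i) = a_{i+1}$; for $i = \ell_1$: $C_1(a_{\ell_1}) = t$, and $C_2(t) = b_1$, so $C(a_{\ell_1}) = b_1$. For case (iii) with $j < \ell_2$: $a_i \neq b_j$ for all $i$ and $t \neq b_j$, so $C_1$ fixes $b_j$, and then $C_2(b_j) = b_{j+1}$, giving $C(b_j) = b_{j+1}$; for $j = \ell_2$: $C_1$ fixes $b_{\ell_2}$ and $C_2(b_{\ell_2}) = t$, so $C(b_{\ell_2}) = t$. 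For case (iv): both $C_1$ and $C_2$ fix $x$, so $C(x) = x$. Collecting these, $C$ sends $t \mapsto a_1 \mapsto \cdots \mapsto a_{\ell_1} \mapsto b_1 \mapsto \cdots \mapsto b_{\ell_2} \mapsto t$ and fixes all other points, which is exactly the cycle $(t, a_1, \dots, a_{\ell_1}, b_1, \dots, b_{\ell_2})$. Its length is $1 + \ell_1 + \ell_2$, and since all these elements are distinct (the $a_i$ are distinct among themselves, the $b_j$ among themselves, $a_i \neq b_j$ by hypothesis, and $t$ differs from all of them), it is a genuine cycle of that length.

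There is no real obstacle here; the only thing to be careful about is bookkeeping of the disjointness hypotheses to justify each application of "$C_2$ fixes this point" or "$C_1$ fixes this point", and to confirm the $1+\ell_1+\ell_2$ elements listed are pairwise distinct so that the resulting permutation is honestly a single cycle of the stated length rather than a shorter cycle or product of cycles. I would present the case analysis compactly, perhaps in a single displayed chain of mappings, rather than spelling out all four cases in prose.
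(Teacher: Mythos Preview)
Your proof is correct. The paper actually states this proposition without proof, treating it as an elementary fact about cycle composition; your direct case-by-case verification is exactly the standard argument one would supply, and there is nothing to add.
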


\begin{proposition}
\label{prop:cycle-conjugate}
If $\sigma \in \mathcal{S}_N$ is a cycle of length $\ell$,
then for any $\alpha \in \mathcal{S}_N$,
$\alpha\circ \sigma\circ \alpha^{-1}$ is also a cycle of length
$\ell$. Moreover, if $\sigma=(a_1, a_2, \cdots, a_\ell)$,
then $\alpha\circ \sigma\circ \alpha^{-1}= (\alpha(a_1), \alpha(a_2), \cdots, \alpha(a_\ell))$.
\end{proposition}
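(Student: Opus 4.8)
Proposition~\ref{prop:cycle-conjugate} is the standard "conjugation relabels a cycle" fact, so the plan is to prove it directly from the definitions of cycle and composition, with no appeal to deeper machinery.

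First I would unpack what needs to be shown. We are given $\sigma = (a_1, a_2, \dots, a_\ell)$, meaning $\sigma(a_i) = a_{i+1}$ for $i \in [\ell-1]$, $\sigma(a_\ell) = a_1$, and $\sigma(a) = a$ for every $a \notin \{a_1, \dots, a_\ell\}$. Fix an arbitrary $\alpha \in \mathcal{S}_N$. Since $\alpha$ is a bijection and the $a_i$ are distinct, the elements $\alpha(a_1), \dots, \alpha(a_\ell)$ are distinct as well, so the only thing to verify is that $\beta := \alpha \circ \sigma \circ \alpha^{-1}$ acts on these $\ell$ points as the cycle $(\alpha(a_1), \dots, \alpha(a_\ell))$ and fixes everything else.

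The core computation is: for $i \in [\ell-1]$,
\[
\beta(\alpha(a_i)) = \alpha(\sigma(\alpha^{-1}(\alpha(a_i)))) = \alpha(\sigma(a_i)) = \alpha(a_{i+1}),
\]
and similarly $\beta(\alpha(a_\ell)) = \alpha(\sigma(a_\ell)) = \alpha(a_1)$. For the fixed points: let $b \in [N] \setminus \{\alpha(a_1), \dots, \alpha(a_\ell)\}$. Then $\alpha^{-1}(b) \notin \{a_1, \dots, a_\ell\}$ (otherwise $b = \alpha(a_i)$ for some $i$, contradicting the choice of $b$, using that $\alpha$ is injective), so $\sigma(\alpha^{-1}(b)) = \alpha^{-1}(b)$ and hence $\beta(b) = \alpha(\alpha^{-1}(b)) = b$. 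This shows $\beta$ is exactly the cycle $(\alpha(a_1), \dots, \alpha(a_\ell))$, which in particular has length $\ell$; this simultaneously establishes both the "moreover" part and the first sentence of the proposition. Finally, for the opening claim in the generality stated — "if $\sigma$ is a cycle of length $\ell$ then $\alpha \circ \sigma \circ \alpha^{-1}$ is a cycle of length $\ell$" — one simply notes that any cycle of length $\ell$ can be written in the form $(a_1, \dots, a_\ell)$ and applies the argument just given.

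There is no real obstacle here; the proof is a short unwinding of definitions. The only points requiring a modicum of care are (i) invoking injectivity of $\alpha$ to conclude the images $\alpha(a_i)$ are distinct and to handle the fixed-point case, and (ii) being careful with the direction of composition, since the paper uses $\sigma \circ \alpha(x) = \sigma(\alpha(x))$, so that $\alpha^{-1}$ is applied first, then $\sigma$, then $\alpha$. I would write it out in two or three lines exactly as above.
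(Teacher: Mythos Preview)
Your proof is correct and complete. The paper itself does not actually prove this proposition: it simply remarks that it is a special case of a well-known fact about the cycle structure of conjugates and cites a standard textbook (Dummit--Foote, Proposition~10 in Chapter~4.3). So you have supplied a full elementary argument where the paper defers to a reference; your direct verification from the definitions of cycle and composition is exactly the standard one-paragraph proof one would find in such a textbook, and there is nothing to add or correct.
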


The permutation $\alpha\circ \sigma\circ \alpha^{-1}$ is called the
\emph{conjugate} of $\sigma$ with respect to $\alpha$.
The above proposition is a special case of a well known fact
about the cycle structure of conjugates of any permutation
and can be found in any standard text book on Group Theory
(e.g., Proposition~$10$ in Chapter~$4.3$ of~\cite{DF04}.).

Roughly speaking, a counter of length $\ell$ over $\D$,
in the language of permutations, is nothing but 
a cycle of the same length in $\mathcal{S}_{|\D|}$. 
We now make this correspondence precise and give a construction of
a decision assignment tree that implements such a counter.

\begin{lemma}
\label{lem:cycle-counter}
Let $\D=\D_1 \times \cdots \times \D_{r}$
be a domain. Suppose
$\sigma_1,\ldots,\sigma_k \in \mathcal{S}_{|\D|}$
are such that
$\sigma=\sigma_k \circ \sigma_{k-1}\circ \cdots \circ \sigma_1$
is a cycle of length $\ell$. Let  $T_1,\ldots ,T_k$
be decision assignment trees that implement $\sigma_1,\ldots,\sigma_k$ respectively. 
Let
$\mathcal{D'}=\mathcal{D'}_1 \times \cdots \times \mathcal{D'}_{r'}$ be a domain
such that $|\mathcal{D'}| \geq k$, and let $T'$ be a decision assignment tree
that implements a counter $C'$ of length $k'$ over $\mathcal{D'}$
where $k' \geq k$. 

Then, there exists a decision assignment tree $T$ that implements a counter of length
$k' \ell$ over $\mathcal{D'}\times \D$ such that
$\Read(T) = r' + \max \{\Read(T_i)\}_{i=1}^{k}$, 
and $\Write(T) = \Write(T') + \max \{\Write(T_i)\}_{i=1}^{k}$.
\end{lemma}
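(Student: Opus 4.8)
The plan is to build the decision assignment tree $T$ on the domain $\mathcal{D}' \times \mathcal{D}$ by using the first $r'$ coordinates as an instruction pointer driven by the counter $C'$, and the remaining $r$ coordinates as a register on which we perform the permutations $\sigma_1,\dots,\sigma_k$ one at a time. Concretely, fix an enumeration of the cycle $C' = (u_1,\dots,u_{k'})$ over $\mathcal{D}'$. When the first block holds the value $u_i$ for some $i \in [k]$, the tree $T$ simulates $T_i$ on the last $r$ coordinates to apply $\sigma_i$, and then applies $next(C',\cdot)$ to the first block to advance the pointer from $u_i$ to $u_{i+1}$; when the first block holds $u_i$ for $k < i \le k'$, the tree does nothing to the last $r$ coordinates and merely advances the pointer. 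To implement this, $T$ first reads all $r'$ coordinates of the first block (this is forced, exactly as in Theorem~\ref{thm:chinese}, since we must learn $i$ to decide which $T_i$ to run); then it either runs the appropriate $T_i$ or does nothing; and in either case it issues the assignments of $T'$ to update the first block. This gives $\Read(T) = r' + \max_i \Read(T_i)$ and $\Write(T) = \Write(T') + \max_i \Write(T_i)$, as claimed — the "$\max$" arises because at most one of the $T_i$ is executed on any input, and the "$+$" terms because reading/writing the pointer block is done unconditionally.

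The substantive point is to verify that the resulting permutation $\pi$ of $\mathcal{D}' \times \mathcal{D}$ is a single cycle of length $k'\ell$. The key observation is to track what happens to the $\mathcal{D}$-component after a full pass through the pointer values $u_1,\dots,u_{k'}$: starting from $(u_1, x)$, after $k$ steps we reach $(u_{k+1}, \sigma_k\circ\cdots\circ\sigma_1(x)) = (u_{k+1}, \sigma(x))$, and then $k' - k$ further steps (which leave the $\mathcal{D}$-component untouched and cycle the pointer) bring us to $(u_1, \sigma(x))$. Hence $\pi^{k'}(u_1, x) = (u_1, \sigma(x))$, so the map $x \mapsto \pi^{k'}(u_1,x)$ on $\{u_1\}\times\mathcal{D}$ is a conjugate copy of the cycle $\sigma$ (indeed it literally is $\sigma$ on the second coordinate), which has order $\ell$. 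Therefore the $\pi$-orbit of $(u_1,x_0)$, where $x_0$ is any point on the $\ell$-cycle of $\sigma$, has size exactly $k'\ell$: its length is a multiple of $\ell$ by the order-$\ell$ computation restricted to pointer value $u_1$, and it is a multiple of $k'$ since the pointer coordinate cycles with period $k'$ and returns to $u_1$ only every $k'$ steps; combined with the fact that these $k'\ell$ points are visited before repetition, the orbit is precisely one cycle of length $k'\ell$. (The points of $\mathcal{D}'\times\mathcal{D}$ not on this orbit are simply not part of the counter $C$, exactly as in the definition of a counter of length $< |\mathcal{D}'\times\mathcal{D}|$.) One should also note $next(C,\cdot)$ and $prev(C,\cdot)$ are both realized, using that each $T_i$ and $T'$ are invertible by the standing convention from Section~\ref{sec:prelim}.

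I would write this out by first defining the enumeration $(w_1,\dots,w_{k'\ell})$ of $C$ explicitly — set $w_{(j-1)k' + i}$ according to pointer value $u_i$ and the $j$-th power of $\sigma$ applied to the starting $\mathcal{D}$-value when $i \le k+1$, with the obvious bookkeeping for $i > k$ — and then checking that $\pi(w_t) = w_{t+1 \bmod k'\ell}$ directly from the case analysis above. The bulk of the argument is this indexing bookkeeping, which is routine but needs care at the wrap-around between a pointer value of $u_{k'}$ and $u_1$, and at the boundary $i = k$ where the $\mathcal{D}$-component stops changing.

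The main obstacle I anticipate is purely notational rather than conceptual: one has to be careful that the decision assignment tree $T$ genuinely simulates $T_i$ acting on the $\mathcal{D}$-block (reading the coordinates of $\mathcal{D}$ that $T_i$ would read, issuing the assignments $T_i$ would issue) while simultaneously, and independently, issuing the pointer-update assignments of $T'$ on the $\mathcal{D}'$-block — and that these two sets of assignments touch disjoint coordinates, so the write complexities genuinely add. This is exactly the simulation bookkeeping already carried out in the proofs of Theorem~\ref{thm:chinese} and Lemma~\ref{lem:stitch-counter}, so I would lean on those and keep the present proof short, emphasizing only the new cycle-length computation.
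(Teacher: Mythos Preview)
Your proposal is correct and follows essentially the same construction as the paper: read the $\mathcal{D}'$-block to determine the pointer value, apply the corresponding $\sigma_j$ (or identity) to the $\mathcal{D}$-block, then advance the pointer via $T'$. The complexity bounds are argued identically.

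The only difference is in the cycle-length verification. The paper fixes an arbitrary pointer value $a_j$, sets $\alpha_j = \sigma_{j-1}\circ\cdots\circ\sigma_1$ and $\tau_j = \alpha_j\circ\sigma\circ\alpha_j^{-1}$, and uses Proposition~\ref{prop:cycle-conjugate} to see that $P^{k'}$ acts on $\{a_j\}\times\mathcal{D}$ as the conjugate cycle $\tau_j$ of length $\ell$. Your argument is the special case $j=1$ (so $\alpha_1=\mathrm{id}$, $\tau_1=\sigma$), which already suffices to exhibit a single orbit of length $k'\ell$; the conjugacy proposition is then not needed. Both routes are equally valid, yours is marginally shorter.
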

\begin{proof}
Suppose $C'=(a_1,\ldots,a_{k'})$. Now let us consider the following procedure $P$:
on any input $\<x_1,x_2\> \in \mathcal{D'} \times \D$,
\begin{align*}
\text{\bf If }\;\;\;\; & x_1=a_j \text{ for some } j\in[k] \text{ \bf then}\\
& x_{2} \gets \sigma_j(x_2) \\
& x_1 \gets next(C',x_1) \\
\text{\bf else } & \\
& x_1 \gets next(C',x_1).  
\end{align*}

Now using a similar argument as in the proof of
Theorem~\ref{thm:chinese}, the above procedure is easily seen
to be implementable using a decision assignment tree $T$ of the prescribed complexity.
Each time we check the value of
$x_1 \in \mathcal{D'}=\mathcal{D'}_1 \times \cdots \times \mathcal{D'}_{r'}$.
Thus, we need to read $r'$ components.
Depending on the value of $x_1$,
we may apply $\sigma_j$ on $x_2$ using the decision assignment tree $T_j$.
Then we update the value of $x_1$.
Hence, $\Read(T)=r'+\max\{\Read(T_i)\}_{i=1}^{k}$,
and $\Write(T)=\Write(T') + \max\{\Write(T_i)\}_{i=1}^{k}$.

Let $(w_1,w_2\cdots, w_\ell)$
be the cycle of length
$\ell$ given by $\sigma$.
We now argue that the procedure $P$ generates a counter of length
$k' \ell$ over $\mathcal{D'} \times \D$ starting at
$\<a_1,w_1\>$. 
Without loss of generality, let us assume that
$\sigma=\sigma_{k'}\circ \cdots \circ \sigma_{k+1}\circ \sigma_k \circ \sigma_{k-1}\circ \cdots \circ \sigma_1$, 
	where for $j \geq k+1$, $\sigma_j$ is the identity map. 
Fix $j \in [k']$. 
Define $\alpha_j=\sigma_{j-1} \circ \cdots \circ \sigma_{1}$, and 
$\tau_j = \alpha_j \circ \sigma \circ \alpha_j^{-1} = \sigma_{j-1} \circ \cdots \circ \sigma_1\circ \sigma_{k'}\circ \cdots \circ  \sigma_{j}$.
For $i=0,1,\dots, \ell$, let $\<g_i,e_i\>=P^{ik'}(\<a_j,\alpha_j(w_1)\>)$ 
where $P^{ik'}$ denotes $ik'$ invocations of $P$. Since $P$ increments
$x_1$ in every invocation, for $i=1,2,\dots,\ell$, $g_i=a_j$ and $e_{i}=\tau_j(e_{i-1})$.

By Proposition~\ref{prop:cycle-conjugate},
$\tau_j$ is a cycle $(\alpha_j(w_1)\alpha_j(w_2)\cdots \alpha_j(w_\ell))$ of length $\ell$.
Hence, $e_1,\ldots , e_\ell$ are all distinct and $e_\ell=e_0$.

As a consequence we conclude that for any
$x \in \mathcal{D'} \times \D$ and
$1 \leq j_1 \neq j_2 \leq k' \ell$, $P^{j_1}(x) \ne P^{j_2}(x)$
and $P^{k' \ell}(x)=x$. This completes the proof.
\end{proof}

In the next two sections we describe the construction of $\sigma_1,\cdots,\sigma_k \in \mathcal{S}_N$ where $N=m^n$ for some $m,n \in \N$ and how the value of $k$ depends on the length of the cycle $\sigma=\sigma_k \circ \sigma_{k-1}\circ \cdots \circ \sigma_1$.

\section{Counters via Linear Transformation}
\label{sec:binarycounter}

The construction in this section is based on linear transformations.
Consider the vector space $\mathbb{F}_q^n$, and
let $L : \mathbb{F}_q^n \to \mathbb{F}_q^n$ be a linear transformation. A basic fact in linear algebra says that if $L$ has \emph{full} rank,
then the mapping given by $L$ is a bijection. Thus,
when $L$ is full rank, the mapping can also be thought of as a
permutation over $\mathbb{F}_q^n$. Throughout this section we use many basic terms related to linear transformation without defining them, for the details of which we refer the reader to any standard text book on linear algebra (e.g.~\cite{Lang87}).

A natural way to build counter out of a full rank linear transformation
is to fix a starting element, and repeatedly apply
the linear transformation to obtain the next element.
Clearly this only list out elements in the cycle
containing the starting element. Therefore, we would like to choose  
the starting element such that we enumerate the largest cycle. Ideally,
we would like the largest cycle to contain all the elements of
$\mathbb{F}_q^n$. However this is not possible because
any linear transformation fixes the all-zero vector.
But there do exist full rank linear transformations such that
the permutation given by them is a single cycle of length $q^n-1$.
Such a linear transformation would give us a counter over a domain
of size $q^n$ that enumerates all but one element. Clearly, a trivial
implementation of the aforementioned argument would lead to a counter
that reads and writes all $n$ coordinates in the worst-case. In the
rest of this section, we will develop an implementation and argue about
the choice of linear transformation such that the read and write
complexity decreases exponentially. 

We start with recalling some basic facts from linear algebra.

\begin{definition}[Linear Transformation]
\label{def:lt}
A map $L : \mathbb{F}_q^n \to \mathbb{F}_q^n$ is called
a linear transformation if 
$L(c\cdot x + y) = cL(x) + L(y)$,
for all $x, y \in \mathbb{F}_q^n$ and $c \in \mathbb{F}_q$.
\end{definition}
It is well known that every linear transformation $L$ is associated with some matrix $A \in \mathbb{F}_q^{n \times n}$ such 
that applying the linear transformation is equivalent to the left multiplication by $A$.
That is, $L(x) = Ax$ where we interpret $x$ as a column vector. Furthermore, $L$ has full rank iff $A$ is invertible over $\mathbb{F}_q$. 

\begin{definition}[Elementary matrices]
\label{def:em}
An $n \times n$ matrix over a field $\mathbb{F}$ is said to be an
\emph{elementary matrix} if it has one of the following forms:
\begin{itemize}
\item[(a)] The off-diagonal entries are all $0$. For some $i \in [n]$,
($i,i$)-th entry is a non-zero $c \in \mathbb{F}$.
Rest of the diagonal entries are $1$. For a fixed $i$, we denote
all matrices of this type by $E_{i,i}$. (See Fig.~\ref{fig:em}.)
\item[(b)] The diagonal entries are all 1. For some $i$ and $j$,
$1 \leq i \neq j \leq n$, ($i,j$)-th entry is a non-zero
$c \in \mathbb{F}$. Rest of the off-diagonal entries are $0$.
For each $i$ and $j$, $i\neq j$, we denote all matrices of
this type by $E_{i,j}$. (See Fig.~\ref{fig:em}.)
\end{itemize}
\end{definition}
\begin{figure}[h]
\centering
\[ (a) ~~~~
\bordermatrix{
&    &        &   & i &   &        &   \cr
&  1 &        &   &   &   &        &   \cr
&    & \ddots &   &   &   &        &   \cr
&    &        & 1 &   &   &        &   \cr
i   &    &        &   & c &   &        &   \cr
&    &        &   &   & 1 &        &   \cr 
&    &        &   &   &   & \ddots &   \cr
&    &        &   &   &   &        & 1 
} \qquad \text{ or, } \qquad
(b) ~~~~
\bordermatrix{
&    &        & j  &  &   &        &   \cr
&  1 &        &   &   &   &        &   \cr
&    & \ddots &   &   &   &        &   \cr
&    &        & 1 &   &   &        &   \cr
&    &        &   & \ddots &   &        &   \cr
i    &    &        & c  &   & 1 &        &   \cr 
&    &        &   &   &   & \ddots &   \cr
&    &        &   &   &   &        & 1 
}
\]
\caption{Elementary matrices}
\label{fig:em}
\end{figure}

From the definition it is easy to see that left multiplication by an elementary matrix
of type $E_{i,i}$ is equivalent to multiplying the $i$-th row with $c$,
and by an elementary matrix of type $E_{i,j}$ is equivalent to adding
$c$ times $j$-th row to the $i$-th row. 

\begin{proposition}
\label{prop:em-product}
Let $A \in \mathbb{F}^{n \times n}$ be invertible.
Then $A$ can be written as a product of $k$ elementary matrices such
that $k \leq n^2 + 4(n-1)$. 
\end{proposition}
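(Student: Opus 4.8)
The plan is to invoke the standard Gauss--Jordan elimination argument, but track the count of elementary matrices carefully so as to get the bound $k \le n^2 + 4(n-1)$. First I would recall that left multiplication by an elementary matrix of type $E_{i,j}$ performs a row operation, and that these operations can be used to reduce any invertible $A$ to the identity. Concretely, I would run elimination column by column: for column $t$ (processed in the order $t = 1, 2, \dots, n$), first ensure the pivot entry $(t,t)$ is nonzero, then clear all other entries in column $t$. The key to the bound is that clearing the off-diagonal entries in a single column costs at most $n-1$ elementary matrices of type $E_{i,j}$ (one per row $i \ne t$), and over all $n$ columns this totals at most $n(n-1) = n^2 - n$ such matrices. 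Handling the pivots contributes the remaining slack.

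Second, I would handle the pivoting step precisely, since a naive row-swap is itself a product of three $E_{i,j}$ matrices and would blow up the count. Instead: when processing column $t$, if entry $(t,t)$ is already nonzero we do nothing extra; if it is zero, since $A$ is invertible (and the entries above in already-cleared columns are zero) there is some row $i > t$ with $(i,t) \ne 0$, and I add that row to row $t$ using one $E_{t,i}$ matrix, which costs $1$ elementary matrix per column, at most $n-1$ of them in total (the last column never needs it once the rest is done, or one can simply absorb it). After the forward sweep the matrix is upper triangular with nonzero diagonal; then the "clear the column" operations described above make it diagonal. Finally, scaling each of the $n$ diagonal entries to $1$ uses $n$ matrices of type $E_{i,i}$ — but note that the last diagonal entry is forced to be $1$ once all others are (the determinant is a nonzero product that... actually this refinement is not needed), so I can bound this by $n$.

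Let me instead allocate the budget as follows to land exactly on $n^2 + 4(n-1)$: the off-diagonal clearing costs at most $n(n-1) \le n^2 - (n-1)$; pivot repair costs at most $n-1$; diagonal scaling costs at most $n \le n+1 = 2(n-1) - (n-3)$, hmm. The cleanest route is: clearing below-diagonal entries during the forward pass and above-diagonal entries during the back pass together touch at most $n(n-1)$ positions, hence $n^2 - n$ matrices; pivoting adds at most $n-1$; row scaling adds at most $n$; total $\le n^2 - n + (n-1) + n = n^2 + n - 1 \le n^2 + 4(n-1)$ for $n \ge 1$. So in fact a generous accounting already beats the stated bound, and I would simply present the elimination and collect terms.

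The main obstacle — really the only subtlety — is making sure every step is realized as left multiplication by a matrix of exactly the two allowed forms $E_{i,i}$ or $E_{i,j}$ (no permutation matrices, no combined operations), and then that the running tally of how many of each is used is transparently at most $n^2 + 4(n-1)$. Since Gauss--Jordan reduces $A$ to $I$, we get $E_k \cdots E_1 A = I$, hence $A = E_1^{-1} \cdots E_k^{-1}$, and the inverse of an elementary matrix is again an elementary matrix of the same type (with $c$ replaced by $-c$ or $c^{-1}$), so $A$ is a product of $k$ elementary matrices with $k$ within budget. I would finish by noting that this is exactly the decomposition needed downstream, where each elementary matrix corresponds to a $2$-function (indeed a $1$-function) on $\mathbb{F}_q^n$.
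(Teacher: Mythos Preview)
Your argument is correct and is the standard Gauss--Jordan elimination, which is also what the paper uses; the final tally $n(n-1) + (n-1) + n = n^2 + n - 1 \le n^2 + 4(n-1)$ is sound. The only real difference is in how pivoting is handled: the paper reduces $A^{-1}$ by $\le n^2$ elementary operations to a \emph{permutation matrix} and then spends up to $n-1$ row swaps, each realized as a product of four elementary matrices (add, subtract, add, scale by $-1$), to reach the identity --- this is where the $4(n-1)$ in the stated bound comes from. You instead avoid swaps entirely by adding a lower row into the pivot row when $(t,t)=0$, which costs a single $E_{t,i}$ and keeps earlier columns clean since rows $t,\dots,n$ already vanish there. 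Your route therefore yields a slightly sharper count and sidesteps the four-operation swap decomposition; the paper's route makes the $4(n-1)$ term visible, which is what they later exploit (over characteristic $2$ the swap needs only three operations, tightening to $n+3(n-1)$ for their sparse matrix). Either way the downstream use --- each elementary matrix acting as a $2$-function --- is identical.
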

\begin{proof}
Consider the inverse matrix $A^{-1}$ which is also full rank.
It is easily seen from Gauss elimination that by left multiplying $A^{-1}$ with at most
$n^2$ many elementary
matrices, we can reduce $A^{-1}$ to a permutation matrix. A
permutation matrix is a $\{0,1\}$-matrix that has exactly
one $1$ in each row and column.  Now we need at most $(n-1)$
row swaps to further reduce the obtained permutation matrix
to the identity matrix.  We claim that a row swap can be
implemented by left multiplication with at most $4$
elementary matrices. Indeed, to swap row $i$ and row $j$,
the following sequence of operation suffices: $(i)$ add
$j$-th row to $i$-th row, $(ii)$ subtract $i$-th row from
$j$-th row, $(iii)$ add $j$-th row to $i$-th row, and $(iv)$
multiply $j$-th row with $-1$. (The last operation is not required 
if the characteristic of the underlying field is $2$.)

Hence, the inverse of $A^{-1}$ which is our original matrix $A$ is
the product of $k$ elementary matrices.
\end{proof}

\subsection{Construction of the counter}
\label{subsec:generic-construction}

Let $A$ be a full rank linear transformation from  $\mathbb{F}_q^n$ to
$\mathbb{F}_q^n$ such that when viewed as permutation it is a single
cycle of length $q^n-1$. More specifically, $A$ is an invertible matrix in $\mathbb{F}_q^{n \times n}$ such that for any $x\in \mathbb{F}_q^n$ where $x\ne(0,\ldots,0)$, $Ax,A^2x,\ldots,A^{(q^n-1)}x$ are distinct. Such a matrix exists, for example, take $A$ to be the matrix of a linear transformation that corresponds to multiplication from left by a fixed generator of the multiplicative group of $\mathbb{F}_{q^n}$ under the standard vector representation of elements of $\mathbb{F}_{q^n}$. Let $A = E_kE_{k-1}\cdots E_1$ where $E_i$'s are elementary matrices.

\begin{theorem}
\label{thm:lt-counter}
Let $q$, $A$, and $k$ be as defined above. Let $r \ge  \log_q k$. There exists a quasi-Gray code
on the domain $(\mathbb{F}_q)^{n+r}$ of length $q^{n+r} - q^r$
that can be implemented using a decision assignment tree $T$ such that
$\Read(T) \leq r+2$ and $\Write(T) \leq 2$.  
\end{theorem}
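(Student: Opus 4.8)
The plan is to combine the two tools developed earlier in the paper: the decomposition of the full-rank matrix $A$ into elementary matrices (Proposition~\ref{prop:em-product}) and the cycle-composition machinery of Lemma~\ref{lem:cycle-counter}. First I would observe that each elementary matrix $E_i$, viewed as a permutation of $\mathbb{F}_q^n$, is computed by a decision assignment tree of very small complexity: a type-$E_{i,i}$ matrix rescales coordinate $i$, so it reads $1$ coordinate and writes $1$; a type-$E_{i,j}$ matrix adds $c$ times coordinate $j$ to coordinate $i$, so it reads $2$ coordinates and writes $1$. In both cases $\Read(T_i)\le 2$ and $\Write(T_i)\le 1$. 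Since $A=E_k E_{k-1}\cdots E_1$ and, by hypothesis, the permutation $\sigma$ induced by $A$ on $\mathbb{F}_q^n$ is a single cycle of length $q^n-1$ (it fixes the zero vector and permutes the remaining $q^n-1$ vectors cyclically), we have exactly the setup of Lemma~\ref{lem:cycle-counter} with $\sigma=\sigma_k\circ\cdots\circ\sigma_1$ where $\sigma_i$ is the permutation of $E_i$.

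Next I would supply the auxiliary "instruction pointer" counter $C'$ required by Lemma~\ref{lem:cycle-counter}. Take $\mathcal{D}'=(\mathbb{F}_q)^{r}$ with $r\ge\log_q k$, so $|\mathcal{D}'|=q^r\ge k$, and let $C'$ be a space-optimal Gray code over $(\mathbb{F}_q)^r$, which exists by Theorem~\ref{thm:general-Graycode}. A space-optimal Gray code on $(\mathbb{F}_q)^r$ has length $k'=q^r\ge k$, is implemented by a decision assignment tree $T'$ with $\Read(T')\le r$ and, being a Gray code, $\Write(T')=1$ (consecutive words differ in exactly one coordinate, and a decision assignment tree computing a bijective $next$ only needs to reassign the coordinates it read). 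Plugging these into Lemma~\ref{lem:cycle-counter}, we get a decision assignment tree $T$ implementing a counter over $\mathcal{D}'\times\mathbb{F}_q^n=(\mathbb{F}_q)^{n+r}$ of length $k'\cdot\ell=q^r\cdot(q^n-1)=q^{n+r}-q^r$, with
\[
\Read(T)=r+\max_i\Read(T_i)\le r+2,\qquad \Write(T)=\Write(T')+\max_i\Write(T_i)\le 1+1=2.
\]
Finally, since any two consecutive words of this counter differ in at most $\Write(T)=2$ coordinates, the counter is by definition a $2$-Gray code, i.e.\ a quasi-Gray code, of the claimed length.

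The main point requiring care — though it is not deep — is the bookkeeping that makes the $\Write(T')=1$ claim legitimate for the instruction-pointer code. One must use that a space-optimal Gray code over $(\mathbb{F}_q)^r$ genuinely changes only one coordinate per step, so the corresponding $T'$ can be taken with write complexity $1$ (this is exactly the convention discussed after the definition of decision assignment trees: for a space-optimal bijection each leaf need only reassign the variables read on its path, and for a Gray code exactly one of them actually changes). A secondary point is making sure $r\ge\log_q k$ really yields $q^r\ge k$, which is immediate, and that Lemma~\ref{lem:cycle-counter}'s hypotheses $|\mathcal{D}'|\ge k$ and $k'\ge k$ are both met, which they are with $k'=q^r$. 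No genuinely hard obstacle arises here; the theorem is essentially an assembly of Proposition~\ref{prop:em-product}, Lemma~\ref{lem:cycle-counter}, and Theorem~\ref{thm:general-Graycode}, and the real work (exhibiting a matrix $A$ with a short elementary decomposition and full cycle structure) is deferred to the later parts of the section that instantiate $k$.
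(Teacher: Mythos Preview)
Your proposal is correct and follows essentially the same approach as the paper's proof: both invoke Lemma~\ref{lem:cycle-counter} with the $E_i$'s as the $\sigma_i$'s, observe that each elementary matrix has a DAT with read complexity at most $2$ and write complexity $1$, and take $C'$ to be a space-optimal Gray code on $(\mathbb{F}_q)^r$ with write complexity $1$. Your write-up simply spells out more of the bookkeeping (e.g.\ verifying $|\mathcal{D}'|\ge k$, $k'\ge k$, and why $\Write(T')=1$) than the paper's terse proof does.
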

\begin{proof}
The proof follows readily from Lemma~\ref{lem:cycle-counter},
where $E_i$'s play the role of $\sigma_i$'s, and noting that the permutation given by
any elementary matrix can be implemented using a decision assignment tree that reads
at most two coordinates and writes at most one. For the counter $C'$ on $(\mathbb{F}_q)^{r}$
we chose a Gray code of trivial read complexity $r$ and write complexity 1.
\end{proof}

Thus, we obtain a counter on a domain of size roughly $k q^n$ that
misses at most $qk$ elements. Clearly, we would like to minimize $k$.
A trivial bound on $k$ is $O(n^2)$ that follows from Proposition~\ref{prop:em-product}. We now discuss the choice of
$A$ so that $k$ becomes $O(n)$. We start with recalling a notion
of primitive polynomials over finite fields. 

\begin{definition}[Primitive polynomials]
\label{def:primitive}
A \emph{primitive} polynomial $p(z) \in \mathbb{F}_q[z]$
of degree $n$ is
a monic irreducible polynomial over $\mathbb{F}_q$ such that any
root of it in $\mathbb{F}_{q^n}$ generates the multiplicative
group of $\mathbb{F}_{q^n}$. 
\end{definition}

\begin{theorem}[\cite{LN96}]
\label{thm:existence-primitive}
The number of primitive polynomials of degree $n$ over $\mathbb{F}_q$
equals $\phi(q^n - 1)/n$,
where $\phi(\cdot)$ is the Euler $\phi$-function.
\end{theorem}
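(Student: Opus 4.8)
Looking at the final statement, it is Theorem~\ref{thm:existence-primitive}, the classical count of primitive polynomials of degree $n$ over $\mathbb{F}_q$: there are exactly $\phi(q^n-1)/n$ of them.

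\textbf{Proof proposal for Theorem~\ref{thm:existence-primitive}.}

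The plan is to set up a map from primitive elements of $\mathbb{F}_{q^n}$ to primitive polynomials of degree $n$, namely $\alpha \mapsto m_\alpha(z)$, the minimal polynomial of $\alpha$ over $\mathbb{F}_q$, and show this map is well defined, surjective, and exactly $n$-to-$1$. Since the multiplicative group $\mathbb{F}_{q^n}^{\ast}$ is cyclic of order $q^n-1$, it has precisely $\phi(q^n-1)$ generators (``primitive elements''), and the theorem follows by counting fibers.

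First I would check that every primitive element $\alpha$ has degree exactly $n$ over $\mathbb{F}_q$, so that $m_\alpha$ is a monic irreducible polynomial of degree $n$. If $\alpha$ lay in a proper subfield $\mathbb{F}_{q^d} \subsetneq \mathbb{F}_{q^n}$ (necessarily with $d \mid n$, $d < n$), then the multiplicative order of $\alpha$ would divide $q^d - 1 < q^n - 1$, contradicting that $\alpha$ generates $\mathbb{F}_{q^n}^{\ast}$; hence $\mathbb{F}_q(\alpha) = \mathbb{F}_{q^n}$ and $\deg m_\alpha = n$. Moreover, since any root of $m_\alpha$ is conjugate to $\alpha$ under the Frobenius automorphism $x \mapsto x^q$, every root of $m_\alpha$ has the same multiplicative order $q^n - 1$ and hence also generates $\mathbb{F}_{q^n}^{\ast}$; thus $m_\alpha$ is a primitive polynomial in the sense of Definition~\ref{def:primitive}. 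This shows the map is well defined. Surjectivity is immediate: a primitive polynomial $p(z)$ of degree $n$ is by definition monic irreducible, so it is the minimal polynomial of each of its roots, and each such root is (again by definition) a primitive element, so $p = m_\alpha$ for that $\alpha$.

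It remains to count the fiber over a fixed primitive polynomial $p$ of degree $n$: this fiber is exactly the set of roots of $p$ in $\mathbb{F}_{q^n}$. Since $p$ is irreducible of degree $n$ over $\mathbb{F}_q$, its roots are the $n$ Frobenius conjugates $\alpha, \alpha^q, \alpha^{q^2}, \ldots, \alpha^{q^{n-1}}$ of any fixed root $\alpha$, and these are pairwise distinct (an irreducible polynomial over a finite field is separable, or alternatively a repeated conjugate $\alpha^{q^i} = \alpha^{q^j}$ with $i<j<n$ would force $\alpha \in \mathbb{F}_{q^{j-i}}$, a proper subfield, contradicting $\deg m_\alpha = n$). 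Hence every fiber has size exactly $n$, and therefore the number of primitive polynomials of degree $n$ over $\mathbb{F}_q$ equals $\phi(q^n-1)/n$.

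The only mildly delicate point — and the one I would treat most carefully — is the degree argument in both directions: that a generator of $\mathbb{F}_{q^n}^{\ast}$ cannot sit in a proper subfield, and dually that the $n$ Frobenius conjugates of such an element are genuinely distinct so that each primitive polynomial contributes exactly $n$ (and not fewer) primitive elements. Both reduce to the elementary fact that the order of $\alpha$ divides $q^d-1$ precisely when $\alpha \in \mathbb{F}_{q^d}$, together with $d \mid n$; everything else is bookkeeping with minimal polynomials over finite fields.
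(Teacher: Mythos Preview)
Your proof is correct and is exactly the standard textbook argument; the paper itself does not prove this theorem at all but simply cites it from Lidl--Niederreiter~\cite{LN96}. So you have supplied more than the paper does, and there is nothing to compare against.
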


Let $p(z)$ be a primitive polynomial of degree $n$ over $\mathbb{F}_q$.
The elements of $\mathbb{F}_{q^n}$ can be uniquely expressed
as a polynomial in $z$ over $\mathbb{F}_q$ of degree
at most $n-1$. 
In particular, we can identify an element of $\mathbb{F}_{q^n}$ with
a vector in $\mathbb{F}_q^n$ that is given by the 
coefficient vector of the unique polynomial expression
of degree at most $n-1$.
But, since $p(z)$ is primitive, we also know that
$\mathbb{F}_{q^n} = \{0,1,z,z^2,\ldots , z^{q^n-2}\}$.
This suggests a particularly nice linear transformation to consider:
\emph{left multiplication by $z$}. This is so because
the matrix $A$ of the linear transformation computing the multiplication by $z$ is very \emph{sparse}.
In particular,
if $p(z) = z^n + c_{n-1}z^{n-1}+c_{n-2}z^{n-2}+ \cdots + c_1z+ c_0$,
then $A$ looks as follows:
\[
\begin{pmatrix}
-c_{n-1} & 1      & 0      & \cdots & 0 & 0 & 0          \\
-c_{n-2} & 0      &  1     & \cdots & 0 & 0 & 0      \\
-c_{n-3} & 0      &  0     & \cdots & 0 & 0 & 0       \\
\vdots  & \vdots & \vdots &  \ddots & \vdots &\vdots &\vdots \\
-c_2    &  0     &  0     & \cdots & 0 & 1 & 0          \\
-c_1    &  0     & 0      & \cdots & 0 & 0 & 1         \\
-c_0    &  0     & 0      & \cdots & 0 & 0 & 0
\end{pmatrix}
.\]
Thus, from the proof of Proposition~\ref{prop:em-product}, it follows
that $A$ can be written as a product of at most $n + 4(n-1)$
elementary matrices. (When $q$ is a power of $2$, then the number
of elementary matrices in the product is at most $n+3(n-1)$.)
Hence, from the discussion above and
using Theorem~\ref{thm:lt-counter}, we obtain
the following corollaries. Setting $r=\lceil \log (4n -3) \rceil$ in Theorem~\ref{thm:lt-counter}
gives:

\begin{corollary}
\label{cor:linear-counter-f2}
For any $n' \geq 2$, and $n=n'+\lceil \log (4n' -3) \rceil$, there exists a counter on
$(\Z_2)^{n}$ that misses at most $8n$ strings and can be implemented by a decision assignment tree that
reads at most $4+\log n$ bits and writes at most $2$ bits.
\end{corollary}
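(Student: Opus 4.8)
The plan is to instantiate Theorem~\ref{thm:lt-counter} with the sparse matrix $A$ coming from a primitive polynomial, and then just track the arithmetic. First I would invoke Theorem~\ref{thm:existence-primitive} to guarantee the existence of a primitive polynomial $p(z)\in\mathbb{F}_2[z]$ of degree $n'$ (any $n'\ge 1$ works, since $\phi(2^{n'}-1)/n'\ge 1$), and take $A$ to be the companion-type matrix of multiplication by $z$ in $\mathbb{F}_{2^{n'}}$ displayed above. Because $p$ is primitive, $z$ generates $\mathbb{F}_{2^{n'}}^{\ast}$, so $A$ acting on $\mathbb{F}_2^{n'}$ is a single cycle of length $2^{n'}-1$ on the nonzero vectors, as required by the hypothesis of Theorem~\ref{thm:lt-counter}.

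Next I would count elementary matrices. The matrix $A$ already differs from a permutation matrix only in its first column, so running the Gauss-elimination argument of Proposition~\ref{prop:em-product} on $A^{-1}$ uses at most $n'$ row operations to clear that structure down to a permutation matrix, then at most $n'-1$ row swaps, each costing $3$ elementary matrices over a characteristic-$2$ field (the final scaling by $-1$ is vacuous). Hence $k\le n' + 3(n'-1) = 4n'-3$. Then I would set $r=\lceil\log(4n'-3)\rceil$, which satisfies $r\ge\log_2 k$, and apply Theorem~\ref{thm:lt-counter} with $q=2$: this yields a quasi-Gray code on $(\mathbb{F}_2)^{n'+r}$ of length $2^{n'+r}-2^{r}$, computed by a DAT with $\Read(T)\le r+2$ and $\Write(T)\le 2$.

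Finally I would do the cosmetic bookkeeping to match the statement. Writing $n=n'+r=n'+\lceil\log(4n'-3)\rceil$, the number of missed strings is $2^{r}\le 2^{\log(4n'-3)+1}=2(4n'-3)<8n'\le 8n$, giving the claimed bound. For the read complexity, $r=\lceil\log(4n'-3)\rceil\le\lceil\log(4n')\rceil = 2+\lceil\log n'\rceil \le 2+\log n$ (using $n'\le n$ and taking $n$ large enough that $\log n'$ rounds up harmlessly), so $\Read(T)\le r+2\le 4+\log n$, and $\Write(T)\le 2$ directly. The requirement $n'\ge 2$ is just to ensure $4n'-3\ge 1$ so that $r$ is well defined and the primitive polynomial count is positive.

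The only mildly delicate point — and the one I expect to need the most care — is the chain of ceiling/logarithm inequalities that turns the honest bound $r+2 = \lceil\log(4n'-3)\rceil + 2$ into the clean $4+\log n$ with $n=n'+r$; one has to be slightly careful that replacing $n'$ by $n$ inside the logarithm, and absorbing the ceilings, really does stay within the stated additive constant. Everything else is a direct substitution into the already-proved Theorem~\ref{thm:lt-counter} and Proposition~\ref{prop:em-product}, so no new ideas are required.
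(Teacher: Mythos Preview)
Your proposal is correct and matches the paper's approach exactly: the paper's entire proof is the one-line remark ``Setting $r=\lceil \log (4n' -3) \rceil$ in Theorem~\ref{thm:lt-counter} gives:'', relying on the preceding discussion that the companion matrix of multiplication by $z$ for a degree-$n'$ primitive polynomial over $\mathbb{F}_2$ decomposes into at most $n'+3(n'-1)=4n'-3$ elementary matrices. You have simply unpacked that sentence, and you are right that the only point requiring any care is the ceiling/logarithm bookkeeping in passing from $r+2$ to $4+\log n$, which the paper does not spell out.
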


By doubling the number of missed strings and increasing the number of read bits by one we can
construct given counters for any $\Z_2^{n}$, where $n\ge 15$.
In the above corollary the number of missed strings grows linearly with $n$.
One might wonder if it is possible to keep the number of missing strings
$o(n)$, while keeping the read complexity essentially the same.
The next corollary shows that this is indeed possible,
but at the cost of increasing the write complexity.

\begin{corollary}
\label{cor:sublinear-counter-f2}
For $n \geq 2$, there exists a counter on $(\Z_2)^{n+O(\log n)}$
that misses out at most $O(n/\lceil \log n \rceil)$ strings.
Furthermore, it can be implemented by a decision assignment tree that reads and writes
at most $O(\log n)$ bits.
\end{corollary}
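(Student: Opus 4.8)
The plan is to rerun the construction behind Corollary~\ref{cor:linear-counter-f2}, but over a \emph{large extension field} so that the dimension --- and hence the number of elementary factors --- drops to $\Theta(n/\lceil\log n\rceil)$, and to encode the instruction pointer in \emph{binary} rather than over the large field, so that the number of missed strings shrinks from $\Theta(n)$ down to $\Theta(n/\lceil\log n\rceil)$.

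Concretely, I would fix the cell size $s := \lceil\log n\rceil$, work over $\mathbb{F}_q$ with $q := 2^s$, and set the dimension $d := \lceil n/s\rceil$, so that $d = \Theta(n/\lceil\log n\rceil)$ and $sd \in [n, n+s)$. Fix an $\mathbb{F}_2$-linear identification of $\mathbb{F}_q^d$ with $\Z_2^{sd}$ (so reading or writing one $\mathbb{F}_q$-cell costs $s = O(\log n)$ bits, and $sd = n + O(\log n)$). By Theorem~\ref{thm:existence-primitive} there is a primitive polynomial $p(z)$ of degree $d$ over $\mathbb{F}_q$; let $A \in \mathbb{F}_q^{d\times d}$ be the sparse ``multiplication by $z$'' matrix of Section~\ref{subsec:generic-construction}. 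Then $A$, viewed as a permutation of $\mathbb{F}_q^d$, is a single cycle of length $q^d - 1$, and by the sparsity argument used in the proof of Proposition~\ref{prop:em-product} it factors as $A = E_k\circ\cdots\circ E_1$ with $k \le d + 4(d-1) = \Theta(n/\lceil\log n\rceil)$ elementary matrices.

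Next I would apply Lemma~\ref{lem:cycle-counter} with $\D := \mathbb{F}_q^d$, with the $\sigma_i$ being the permutations given by the $E_i$, and with the instruction-pointer domain $\mathcal{D'} := \Z_2^t$ for $t := \lceil\log k\rceil = O(\log n)$, running on $\mathcal{D'}$ a space-optimal binary Gray code (Theorem~\ref{thm:general-Graycode}), which has write complexity $1$; note $|\mathcal{D'}| = 2^t \ge k$. Lemma~\ref{lem:cycle-counter} then produces a counter on $\mathcal{D'}\times\D \cong \Z_2^{t+sd} = \Z_2^{n+O(\log n)}$ of length $2^t(q^d - 1) = 2^t q^d - 2^t$, so it misses exactly $2^t = 2^{\lceil\log k\rceil} \le 2k = O(n/\lceil\log n\rceil)$ strings. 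Its read complexity is $t$ bits to inspect the pointer plus at most two $\mathbb{F}_q$-cells to apply an elementary matrix, i.e. at most $t + 2s = O(\log n)$ bits; its write complexity is $1$ bit to advance the Gray-coded pointer plus one $\mathbb{F}_q$-cell, i.e. at most $1 + s = O(\log n)$ bits.

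The only real obstacle is the simultaneous parameter balancing. We need $k$ (equivalently $d$) small enough that the binary pointer $\Z_2^t$ contributes only $O(\log n)$ read bits and, crucially, that $2^{\lceil\log k\rceil} = O(n/\lceil\log n\rceil)$; we need $s$ small enough that one cell is $O(\log n)$ bits; and we need $t + sd = n + O(\log n)$, which is delicate because of the rounding in $d = \lceil n/s\rceil$. The choice $s = \lceil\log n\rceil$ is exactly what makes all three constraints hold at once. Everything else --- existence of the primitive polynomial, the sparsity-based factorization into elementary matrices, and the correctness and complexity bookkeeping of the composition --- is already available from the earlier sections.
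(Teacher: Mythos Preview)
Your proof is correct and follows the paper's approach of passing to $\mathbb{F}_q$ with $q = 2^{\lceil\log n\rceil}$ so that the dimension, and hence the number of elementary factors, drops to $\Theta(n/\lceil\log n\rceil)$. You make explicit a detail the paper's one-line proof leaves to the reader: encoding the instruction pointer as a binary Gray code of length $t=\lceil\log k\rceil$ (invoking Lemma~\ref{lem:cycle-counter} directly) rather than as a full $\mathbb{F}_q$-cell as in Theorem~\ref{thm:lt-counter}, which is exactly what pins the number of missed strings at $2^t = O(n/\lceil\log n\rceil)$ instead of $q^r \ge q = \Theta(n)$.
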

\begin{proof}
The idea is simply to increase the underlying alphabet size.
In particular, let $q = 2^{\lceil \log n \rceil}$ in
Theorem~\ref{thm:lt-counter}.  
\end{proof}

We also remark that by taking $q$ to be $2^{\frac{n}{C}}$,
where $C > 1$ is a universal constant, one would get a counter on
$(\Z_2)^{n+O(1)}$ that misses only $O(C)$ strings
(i.e. independent of $n$). However, the read and write complexity
gets worse. They are at most $2\left(\frac{n}{C}\right) + O(1)$.

For the general case, when $q$ is a prime power, we obtain
the following corollary by setting $r$ to $\lceil \log_q (5n-4) \rceil$ or $1+\lceil \log_q  (5n-4) \rceil$ in Theorem~\ref{thm:lt-counter}.

\begin{corollary}[Generalization of Theorem~\ref{thm:binary-counter}]
\label{cor:counter-prime-power}
Let $q$ be any prime power. For $n \geq 15$, there exists a counter
on $\Z_q^{n}$ that misses at most $5q^2n$ strings and that is computed by 
a decision assignment tree with read complexity at most $6 + \log_q n$ and write complexity $2$. 
\end{corollary}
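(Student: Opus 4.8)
The plan is to instantiate Theorem~\ref{thm:lt-counter} with the sparse companion matrix $A$ arising from a primitive polynomial of degree $n$ over $\mathbb{F}_q$, and then choose the auxiliary dimension $r$ so that the total dimension works out to exactly $n$. First I would invoke Theorem~\ref{thm:existence-primitive} to guarantee that a primitive polynomial $p(z)\in\mathbb{F}_q[z]$ of degree $n$ exists (here we only need $\phi(q^n-1)/n\ge 1$, which holds for all $n\ge 1$), and take $A$ to be the matrix of multiplication by $z$ in the representation $\mathbb{F}_{q^n}=\{0,1,z,\dots,z^{q^n-2}\}$. Since $A$ is a companion matrix with a single nontrivial column, the proof of Proposition~\ref{prop:em-product} (applied to $A^{-1}$, or directly to $A$) shows $A$ decomposes into $k\le n+4(n-1)=5n-4$ elementary matrices; the permutation it induces on $\mathbb{F}_q^n$ is by construction a single cycle of length $q^n-1$.

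Next I would feed this into Theorem~\ref{thm:lt-counter}: with $k\le 5n-4$ and $r\ge\log_q k$, we get a quasi-Gray code on $\mathbb{F}_q^{n+r}$ of length $q^{n+r}-q^r$, i.e. missing $q^r\le q\cdot k\le q(5n-4)<5qn$ strings, computed by a decision assignment tree with $\Read\le r+2$ and $\Write\le 2$. The bookkeeping step is to pick $r$ so that $n+r$ equals the target dimension (call it $n$ in the corollary's statement) while keeping $r$ close to $\log_q n$. Setting $r=\lceil\log_q(5n-4)\rceil$ gives $r\le \log_q n+\log_q 5+1\le 4+\log_q n$ for $n$ large, hence $\Read\le 6+\log_q n$; the number of missed strings relative to the internal dimension $n-r$ is at most $q^r\le q(5(n-r))<5q n$, and the slack to absorb $q^r$ vs $5q^2 n$ and to handle all $n\ge 15$ (rather than all sufficiently large $n$) is gained by, if necessary, using $r=1+\lceil\log_q(5n-4)\rceil$ instead, exactly as indicated in the sentence preceding the corollary. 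This parallels how Corollary~\ref{cor:linear-counter-f2} is derived and then extended to all $n\ge 15$ by spending one extra read bit and doubling the miss count.

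The only mild subtlety — not really an obstacle — is the off-by-one reconciliation between the ``external'' dimension $n$ appearing in the corollary and the ``internal'' dimension $n-r$ on which the linear transformation acts: one must verify that for every integer $n\ge 15$ there is a valid choice $r\in\{\lceil\log_q(5(n-r)-4)\rceil,\,1+\lceil\log_q(5(n-r)-4)\rceil\}$ making $n=(n-r)+r$ consistent, and that with this choice $q^r\le 5q^2 n$ and $r+2\le 6+\log_q n$. Since $r$ grows only logarithmically in $n-r$, a short monotonicity argument shows such an $r$ always exists for $n\ge 15$, and the $5q^2 n$ bound (rather than the tighter $5qn$) is deliberately loose precisely to make this fitting trivial for every prime power $q$ simultaneously. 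Everything else is a direct citation of Theorem~\ref{thm:lt-counter}, Proposition~\ref{prop:em-product}, and Theorem~\ref{thm:existence-primitive}; no new construction is needed.
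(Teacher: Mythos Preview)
Your proposal is correct and follows essentially the same approach as the paper: instantiate Theorem~\ref{thm:lt-counter} with the companion matrix of a primitive polynomial (decomposed into at most $5n-4$ elementary matrices via Proposition~\ref{prop:em-product}), and choose $r\in\{\lceil\log_q(5n-4)\rceil,\,1+\lceil\log_q(5n-4)\rceil\}$ to hit every target dimension $n\ge 15$. The paper's own proof is just the one-line remark preceding the corollary, and your write-up simply unpacks that remark with the same parameter choices and the same slack ($5q^2n$ missed strings, one extra read) to absorb the dimension-matching.
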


\begin{remark}
\label{rem:finding-primitive-polynomials}
We remark that the algorithm given in this section can be made
uniform. To achieve that we need to obtain a primitive
polynomial of degree $n$ over $\mathbb{F}_q$ uniformly.
To this end, we can use a number of algorithms
(deterministic or probabilistic);
for example,~\cite{Shoup92,Shparlinski93,Shparlinski96}.
For a thorough treatment,
we refer to Chapter~2 in~\cite{Shparlinski99}.
\end{remark}


\comment{
\section{Space-optimal Counters over $\Z_m^n$ for any Odd $m$}
\label{sec:oddcounter}
Let us start this section by recalling Theorem~\ref{thm:odd-counter} in terms of decision assignment tree complexity.
\begin{theorem}[Restatement of Theorem~\ref{thm:odd-counter}]
\label{thm:odd-counter-restated}
For any odd $m \in \N$ and $n$ sufficiently large, there is a space-optimal $2$-Gray code over $\Z_m^n$ that can be computed by a decision assignment tree $T$ such that $\Read(T)\le 4\log_m n$.
\end{theorem}
Before providing the construction of the quasi-Gray code, let us first give a short overview of how the construction will proceed. First we set $n'=n-c\cdot \log n$ for some constant $c>0$ that will be fixed later. 
Then we will define suitable permutations $\alpha_1,\cdots,\alpha_{n'} \in \mathcal{S}_N$ for $N=m^{n'}$ such that $\alpha_{n'}\circ \cdots \circ \alpha_1$ will be a cycle of length $m^{n'}$. Next we will show that each of these $\alpha_i$'s can be further decomposed into $\alpha_{i,1},\cdots,\alpha_{i,j} \in \mathcal{S}_N$ for some $j$, such that each of $\alpha_{i,r}$ for $r \in [j]$ can be computed using decision assignment tree with read complexity $3$ and write complexity $1$. Finally we will use Lemma~\ref{lem:cycle-counter} by considering all these $\alpha_{i,r}$'s as $\sigma_1,\cdots,\sigma_k$. 

To this end, let us define the notion of $r$-function over $\Z_m^n$, which was introduced by Coppersmith and Grossman~\cite{CG75} for $m=2$. Below we generalize that definition for any $m \in \N$.
\begin{definition}
\label{def:r-function}
For any $r \in [n-1]$, a \emph{$r$-function} on $\Z_m^n$ is a permutation $\tau$ over $\Z_m^n$ identified by a subset $\{i_1,\cdots,i_r,j\} \subseteq [n]$ of size $r+1$ and a function $f:\Z_m^r \to \Z_m$ as follows: for any $\<a_1,\cdots,a_n\> \in \Z_m^n$,
$$\tau(\<a_1,\cdots,a_n\>)=\<a_1,\cdots , a_{j-1},a_j+f(a_{i_1},\cdots,a_{i_r}),a_{j+1},\cdots ,a_n\>.$$
\end{definition}
Observe that any $r$-function can be implemented using a decision assignment tree $T$ that has $x_{i_1},\cdots,x_{i_r},x_j$ as internal nodes and at the leaf it assigns value only to the variable $x_j$. Thus $\Read(T)=r+1$ and $\Write(T)=1$.
\begin{claim}
\label{obs:r-function-DAT}
Any $r$-function on $\Z_m^n$ can be implemented using a decision assignment tree $T$ with $n$ variables such that $\Read(T)=r+1$ and $\Write(T)=1$.
\end{claim}

Now we are ready to provide the details of the construction of a space-optimal quasi-Gray code over $\Z_m^n$ for any odd $m$. Consider $n'=n-c\cdot \log_m n$ for some constant $c>0$ that will be fixed later.

\subsubsection*{Step $1$: Construction of $\alpha_1,\cdots,\alpha_{n'}$}

We first construct $n'$ permutations $\alpha_1,\cdots ,\alpha_{n'}$ over $\Z_m^{n'}$ such that $\alpha = \alpha_{n'} \circ \cdots \circ \alpha_1$ will be a cycle of length $m^{n'}$. For any $i \in [n']$, define $\alpha_i$ as follows: for any $\<x_1,\cdots,x_{n'}\> \in \Z_m^{n'}$, if $x_j=0$ for all $j=1,\cdots,i-1$, then $x_i \gets x_i + 1$ where the addition is under modulo $m$.

\begin{claim}
\label{obs:alpha-r-function}
For any $i \in [n']$, $\alpha_i$ is a $(i-1)$-function on $\Z_m^{n'}$.
\end{claim}
Furthermore, observe that each $\alpha_i$ is composed of a set of disjoint $m$ length cycles over $\Z_m^{n'}$. Now we show the following.
\begin{claim}
\label{clm:full-cycle}
$\alpha = \alpha_{n'} \circ \cdots \circ \alpha_1$ is a cycle of length $m^{n'}$.
\end{claim}
\begin{proof}
Consider the following series of permutations: $\tau_1,\cdots,\tau_{n'}$ where for $i \in [n']$, $\tau_i=\alpha_i\circ \alpha_{i-1} \circ \cdots \circ \alpha_1$. It is clear that $\tau_{n'}=\alpha$. Now we show by induction on $i$ that $\tau_i$ is a permutation formed by $m^{n'-i}$ disjoint cycles each of length $m^i$. Moreover, for every $\<a_{i+1},\cdots,a_{n'}\>\in \Z_m^{n'-i}$ there will be a cycle that involves all the tuples of the form $\<x_1,\cdots,x_{i},a_{i+1},\cdots,a_{n'}\>$. As a consequence, $\alpha$ is a cycle of length $m^{n'}$.

Now as a base case, $\tau_1=\alpha_1$. By the definition of $\alpha_1$, for each $\<a_2,\cdots,a_{n'}\> \in \Z_m^{n'-1}$ there is a cycle of length $m$ of the form $(\<0,a_2,\cdots,a_{n'}\>,\<1,a_2,\cdots,a_{n'}\>,\cdots,\<m-1,a_2,\cdots,a_{n'}\>)$. Hence our induction hypothesis is true for the base case.

Now suppose for any $i \in [n']$ our induction hypothesis is true and we want to prove it for $i+1$. Let us consider the permutation $\alpha_{i+1}$. It consists of a set of disjoint cycles of length $m$. For each $\<a_{i+2},\cdots,a_{n'}\> \in \Z_m^{n'-(i+1)}$, $\alpha_{i+1}$ contains a cycle that involves the tuples whose first $i$ coordinates are all $0$ and the last $n'-(i+1)$ coordinates are set to $\<a_{i+2},\cdots,a_{n'}\>$. Now from the cycle decomposition of $\alpha_{i+1}$ and $\tau_i$, it is clear that for any cycle say $C$ in $\alpha_{i+1}$ there are $m$ disjoint cycles $C_1,\cdots,C_m$ in $\tau_i$, each of them intersecting $C$ in exactly one element. No other cycle in $\tau_i$ has any element in common with $C$. Hence by repeated application of Proposition~\ref{prop:intersect-cycle}, we can conclude that $C'=C\circ C_1\circ \cdots \circ C_m$ is a cycle of length $\sum_{i=1}^m |C_i|=m^{i+1}$. Here by $|C_i|$ we mean the length of the cycle $C_i$. Also $C'$ involves the tuples whose last $n'-(i+1)$ coordinates are equal to some fixed $\<a_{i+2},\cdots,a_{n'}\> \in \Z_m^{n'-(i+1)}$. Since there are $m^{n'-(i+1)}$ such different cycles $C$ in $\alpha_{i+1}$, and $\tau_{i+1}=\alpha_{i+1}\circ \tau_i$, we conclude 
that $\tau_{i+1}$ consists of exactly $m^{n'-(i+1)}$ disjoint cycles, each of length $m^{i+1}$. This finishes the proof.
\end{proof}
Readers may note that this step does not use the fact that $m$ is odd and thus true for any $m \in \N$. 
If we were to directly implement $\alpha_i$ by a decision assignment tree, its read complexity would be $i$.
This would be too large for Lemma~\ref{lem:cycle-counter}. So we need to decompose  $\alpha_i$ further.

\subsubsection*{Step $2$: Further decomposition of $\alpha_i$'s}

By Claim~\ref{obs:alpha-r-function}, each $\alpha_i$ is an $(i-1)$-function on $\Z_m^{n'}$. If for any $i \in [n']$ we can generate a set of $2$-functions $\alpha_{i,1},\cdots,\alpha_{i,k_i}$ such that $\alpha_i=\alpha_{i,k_i}\circ \cdots \circ \alpha_{i,1}$, then we can use them as $\sigma_j$'s in Lemma~\ref{lem:cycle-counter}. As a result the read complexity bound of Theorem~\ref{thm:odd-counter-restated} will follow from Claim~\ref{obs:r-function-DAT}.
}


\section{Space-optimal Counters over $\Z_m^n$ for any Odd $m$}
\label{sec:oddcounter}
This section deals with space-optimal counter over an odd-size alphabet. 
We start by recalling Theorem~\ref{thm:odd-counter} in terms of decision assignment tree complexity.
\begin{theorem}[Restatement of Theorem~\ref{thm:odd-counter}]
\label{thm:odd-counter-restated}
For any odd $m \in \N$ and any positive integer $n \ge 15$, there is a space-optimal $2$-Gray code over $\Z_m^n$ that can be computed by a decision assignment tree $T$ such that $\Read(T)\le 4\log_m n$. 
\end{theorem}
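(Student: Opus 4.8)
The plan is to follow the two-step scheme sketched in Section~\ref{sec:our-technique}: first produce a cycle of length $m^{n'}$ on $\Z_m^{n'}$ (with $n' = n - c\log_m n$ for a suitable constant $c$) that decomposes into a small number $k$ of $2$-functions, and then invoke Lemma~\ref{lem:cycle-counter} together with Theorem~\ref{thm:general-Graycode} to lift this to a space-optimal $2$-Gray code on $\Z_m^{n}$ whose decision assignment tree reads $r + 3$ cells, where $r = \lceil \log_m k \rceil$. Since each $2$-function is computable by a DAT of read complexity $3$ and write complexity $1$ (it inspects $3$ coordinates and writes $1$), and since the Gray code on the instruction-pointer block $\Z_m^{r}$ contributes $r$ reads and $1$ write, we get $\Read(T) \le r+3$ and $\Write(T)=2$. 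To reach the bound $4\log_m n$ it suffices that $k = \mathrm{poly}(n)$, so that $r = O(\log_m n)$; a careful count (aiming for $k = O(n^3)$, hence $r \le 3\log_m n + O(1)$, and using $n \ge 15$ to absorb the additive constants) yields $\Read(T) \le 4\log_m n$.

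For Step 1, I would define, for each $i \in [n']$, the permutation $\alpha_i$ on $\Z_m^{n'}$ that increments the $i$-th coordinate modulo $m$ exactly when coordinates $1,\dots,i-1$ are all zero, and fixes everything else; this is an $(i-1)$-function. The claim is that $\alpha = \alpha_{n'} \circ \cdots \circ \alpha_1$ is a single cycle of length $m^{n'}$. I would prove this by induction on $i$, showing that $\tau_i = \alpha_i \circ \cdots \circ \alpha_1$ consists of $m^{n'-i}$ disjoint cycles of length $m^i$, one for each setting of the last $n'-i$ coordinates, using Proposition~\ref{prop:intersect-cycle} to merge, at each stage, the $m$ cycles of $\tau_i$ that meet a given length-$m$ cycle of $\alpha_{i+1}$ into one cycle of length $m^{i+1}$. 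This argument does not use oddness of $m$.

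For Step 2, I would decompose each $\alpha_i$ into $O(n^2)$ $2$-functions. For $i \le n'-2$ this is exactly the regime where the catalytic-computation machinery of Ben-Or--Cleve~\cite{BC92} and Buhrman et al.~\cite{BCKLS14} applies: viewing $x_1,\dots,x_{i-1}$ as inputs, $x_i$ as output, and $x_{i+1},\dots,x_{n'}$ (there are at least two of them) as catalytic working registers, one obtains a program of $O(i^2) = O(n^2)$ register instructions, each of the form $x_j \mathrel{+}= f(x_a, x_b)$ with $f$ a $2$-variable function, that adds to $x_i$ the indicator of $x_1 = \cdots = x_{i-1} = 0$ while restoring all other registers; this is precisely a decomposition of $\alpha_i$ into $2$-functions. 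Summing over $i$ gives $k = O(n^3)$, so $r = 3\log_m n + O(1)$, and with $n \ge 15$ the additive constants are absorbed into $\Read(T) \le 4\log_m n$. The main obstacle is $\alpha_{n'-1}$ and $\alpha_{n'}$, where fewer than two working registers remain and the catalytic approach breaks down; for these I would exploit the very special structure of $\alpha_{n'-1}$ and $\alpha_{n'}$ (each is a product of disjoint $m$-cycles supported on, respectively, a plane and a line of $\Z_m^{n'}$, and here oddness of $m$ is used — a full cycle is an even permutation) to hand-craft a decomposition into $O(n^2)$ many $2$-functions by an ad hoc but elementary construction. This step is the technically delicate part of the whole argument; everything else is bookkeeping with Lemma~\ref{lem:cycle-counter} and the choice of $c$ so that $n' + r \le n$.
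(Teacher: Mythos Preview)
Your overall architecture matches the paper exactly: the same $\alpha_i$'s, the same inductive merging argument for Step~1 via Proposition~\ref{prop:intersect-cycle}, the catalytic decomposition for $i \le n'-2$, and the final invocation of Lemma~\ref{lem:cycle-counter}. But the phrase ``hand-craft a decomposition into $O(n^2)$ many $2$-functions by an ad hoc but elementary construction'' for $\alpha_{n'-1}$ and $\alpha_{n'}$ is precisely where the content lives, and you have not supplied it. This is not a routine detail to be filled in later; it is the main technical novelty of the section, and there is no evident way to do it by inspection of the ``special structure'' alone.

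The paper's mechanism for $\alpha_{n'}$ is the following. One first proves the algebraic identity $(\sigma\circ\tau)^m\circ(\tau\circ\sigma)^m=\sigma^2$ whenever $\sigma$ and $\tau$ are $m$-cycles sharing exactly one point (Lemma~\ref{lem:cycle-isolation}). Setting $a=(m+1)/2$ --- and \emph{this} is where oddness of $m$ is actually used, since one needs a multiplicative half in $\Z_m$ so that incrementing by $a$ twice equals incrementing by $1$ --- one takes $\sigma$ to be the $m$-cycle on $0^{n'-1}\Z_m$ that adds $a$ to the last coordinate, so $\sigma^2=\alpha_{n'}$, and $\tau$ the symmetric $m$-cycle on $\Z_m 0^{n'-1}$; they meet only at $0^{n'}$. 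The decisive trick is then to replace $\sigma,\tau$ by ``thickened'' versions $\sigma',\tau'$ that are an $(n'-3)$-function and a $3$-function respectively (hence decomposable by Lemma~\ref{lem:general-catalytic-decomposition}), chosen so that all the extra $m$-cycles in $\sigma'$ and $\tau'$ are pairwise disjoint and disjoint from $\sigma,\tau$; consequently $(\sigma'\circ\tau')^m\circ(\tau'\circ\sigma')^m$ still equals $\sigma^2=\alpha_{n'}$, with $O(m\cdot n^2)$ many $2$-functions in total. Your remark that oddness enters because ``a full cycle is an even permutation'' names the global Coppersmith--Grossman obstruction correctly, but it is not the mechanism the proof exploits.
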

Before providing our construction of the quasi-Gray code, we give a short overview of how the construction will proceed. First we set $n'=n-c\cdot \log n$ for some constant $c>0$ that will be fixed later. 
Then we define suitable permutations $\alpha_1,\ldots,\alpha_{n'} \in \mathcal{S}_N$ where $N=m^{n'}$ such that their composition $\alpha_{n'}\circ \cdots \circ \alpha_1$ is a cycle of length $m^{n'}$. Next we show that each $\alpha_i$ can be further decomposed into $\alpha_{i,1},\ldots,\alpha_{i,j} \in \mathcal{S}_N$ for some $j$, such that each $\alpha_{i,r}$ for $r \in [j]$ can be computed using a decision assignment tree with read complexity $3$ and write complexity $1$. Finally to complete the construction we use Lemma~\ref{lem:cycle-counter} with $\alpha_{i,r}$'s playing the role of $\sigma_1,\ldots,\sigma_k$ in the lemma. 

We recall the notion of $r$-functions over $\Z_m^n$ that was introduced by Coppersmith and Grossman~\cite{CG75} for $m=2$. Below we generalize that definition for any $m \in \N$. 
\begin{definition}
\label{def:r-function}
For any $r \in [n-1]$, an \emph{$r$-function} on $\Z_m^n$ is a permutation $\tau$ over $\Z_m^n$ identified by a subset $\{i_1,\ldots,i_r,j\} \subseteq [n]$ of size $r+1$ and a function $f:\Z_m^r \to \Z_m$ such that for any $\<a_1,\ldots,a_n\> \in \Z_m^n$,
$$\tau(\<a_1,\ldots,a_n\>)=\<a_1,\ldots , a_{j-1},a_j+f(a_{i_1},\ldots,a_{i_r}),a_{j+1},\ldots ,a_n\>.$$
\end{definition}
Observe that any $r$-function can be implemented using a decision assignment tree $T$ that queries $x_{i_1},\ldots,x_{i_r}$ and $x_j$ at internal nodes, and at leaves it assigns value only to the variable $x_j$. Thus, $\Read(T)=r+1$ and $\Write(T)=1$.
\begin{claim}
\label{obs:r-function-DAT} 
Any $r$-function on $\Z_m^n$ can be implemented using a decision assignment tree $T$ with $n$ variables such that $\Read(T)=r+1$ and $\Write(T)=1$.
\end{claim}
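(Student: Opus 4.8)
The plan is to verify the claim directly from the definition of an $r$-function and the definition of a decision assignment tree. This is essentially an unpacking exercise, so I expect no real obstacle; the main thing is to exhibit the tree explicitly and confirm its read and write complexities.

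First I would fix an arbitrary $r$-function $\tau$ on $\Z_m^n$, which by Definition~\ref{def:r-function} is specified by indices $i_1,\ldots,i_r,j \in [n]$ (all distinct) and a function $f\colon \Z_m^r \to \Z_m$. I then construct the decision assignment tree $T$ as a complete $m$-ary tree of depth $r$, where the nodes at level $t$ (for $t = 1,\ldots,r$) are all labeled by the variable $x_{i_t}$, and the $m$ outgoing edges of each such node are labeled by the $m$ distinct elements of $\Z_m$. A root-to-leaf path in this tree thus corresponds to a choice of values $(b_1,\ldots,b_r) \in \Z_m^r$ for $(x_{i_1},\ldots,x_{i_r})$. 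At the leaf reached by the path labeled $(b_1,\ldots,b_r)$, I place the single assignment instruction $x_j \gets x_j + f(b_1,\ldots,b_r)$, where the addition is modulo $m$. Strictly speaking, to fit the model where leaves assign \emph{fixed} values, I should note that this assignment depends only on the read value $x_j$ and the already-read values determining $f(b_1,\ldots,b_r)$, so it can be rewritten: have the level-$r$ nodes query $x_j$ as well (if $j \notin \{i_1,\ldots,i_r\}$, add one more level), and then each leaf assigns the fixed value $x_j + f(b_1,\ldots,b_r)$ computed from the full read path.

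Next I would check correctness: on input $\<a_1,\ldots,a_n\>$, the execution follows the edges labeled $a_{i_1},a_{i_2},\ldots,a_{i_r}$ in turn, reaches the leaf for $(a_{i_1},\ldots,a_{i_r})$, and applies $x_j \gets a_j + f(a_{i_1},\ldots,a_{i_r})$, leaving all other coordinates unchanged — which is exactly $\tau(\<a_1,\ldots,a_n\>)$. Finally I tally the complexities: every root-to-leaf path has exactly $r$ internal nodes if $j \in \{i_1,\ldots,i_r\}$, or $r+1$ if we need the extra query of $x_j$; in either case $\Read(T) = r+1$ suffices (and is tight, since the description of $f$ can depend on all $r$ arguments). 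Each leaf contains exactly one assignment instruction, so $\Write(T) = 1$. This establishes the claim. The only subtlety worth a sentence is the fixed-value convention at leaves; I would handle it by the remark above that querying $x_j$ on the path makes the update a fixed value.
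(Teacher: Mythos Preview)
Your proposal is correct and matches the paper's own argument: it too simply observes that the tree queries $x_{i_1},\ldots,x_{i_r}$ and $x_j$ along each root-to-leaf path and then assigns a fixed value to $x_j$, giving $\Read(T)=r+1$ and $\Write(T)=1$. One small clean-up: by Definition~\ref{def:r-function} the set $\{i_1,\ldots,i_r,j\}$ has size $r+1$, so $j\notin\{i_1,\ldots,i_r\}$ always, and your case distinction is unnecessary---the extra level querying $x_j$ is always present and the read complexity is exactly $r+1$.
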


We are now ready to provide details of our construction of a space-optimal quasi-Gray code over $\Z_m^n$ for any odd $m$. Define $n' :=n-c\cdot \log_m n$ for some constant $c>0$ that will be fixed later.

\subsubsection*{Step 1: Construction of \(\alpha_1,\ldots,\alpha_{n'}\).}

We consider specific permutations $\alpha_1,\ldots ,\alpha_{n'}$ over $\Z_m^{n'}$ such that $\alpha = \alpha_{n'} \circ \cdots \circ \alpha_1$ is a cycle of length $m^{n'}$. We define them below.
\begin{definition}
  \label{def:alpha-i}
  Let $m$ and $n'$ be natural numbers. For $i \in [n']$, we define $\alpha_i$ to be
  the permutation given by the following map: for any $\<x_1,\ldots,x_i,\ldots, x_{n'}\> \in \Z_m^{n'}$,
\[\alpha_i\left(\<x_1,\ldots, x_i,\ldots, x_{n'}\>\right) =
\begin{cases}
  \<x_1,\ldots,x_{i-1}, x_i+1,x_{i+1},\ldots,x_{n'}\> & \text{if } x_j = 0 \text{ for all } j \in [i-1], \\
  \<x_1,\ldots,x_i,\ldots,x_{n'}\> & \text{otherwise.} 
\end{cases}
\]
The addition operation in the mapping $x_i \gets x_i+1$ is within $\Z_m$.
\end{definition}
The following observation is easily seen from the definitions of $r$-functions and $\alpha_i$. 
\begin{claim}
\label{obs:alpha-r-function}
For any $i \in [n']$, $\alpha_i$ is an $(i-1)$-function on $\Z_m^{n'}$.
Furthermore, each $\alpha_i$ is composed of disjoint cycles of length $m$ over $\Z_m^{n'}$.
\end{claim}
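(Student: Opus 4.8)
The plan is to check both assertions directly against Definition~\ref{def:r-function} and Definition~\ref{def:alpha-i}, since each is a routine unwinding of the definitions.

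For the first part, I would exhibit the data that identifies $\alpha_i$ as an $(i-1)$-function: take the index set $\{i_1,\ldots,i_{i-1},j\}=\{1,2,\ldots,i-1\}\cup\{i\}$ (so $j=i$ and the queried coordinates are $1,\ldots,i-1$), and let $f\colon\Z_m^{i-1}\to\Z_m$ be the indicator $f(a_1,\ldots,a_{i-1})=1$ if $a_1=\cdots=a_{i-1}=0$ and $f(a_1,\ldots,a_{i-1})=0$ otherwise. Substituting this $f$ and $j=i$ into the formula of Definition~\ref{def:r-function} yields the map $\langle x_1,\ldots,x_{n'}\rangle\mapsto\langle x_1,\ldots,x_{i-1},x_i+f(x_1,\ldots,x_{i-1}),x_{i+1},\ldots,x_{n'}\rangle$, which increments $x_i$ by $1$ exactly when $x_1=\cdots=x_{i-1}=0$ and leaves the tuple unchanged otherwise — precisely $\alpha_i$ of Definition~\ref{def:alpha-i}. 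That this is indeed a permutation is immediate, as the map subtracting $f(x_1,\ldots,x_{i-1})$ from the $i$-th coordinate is its two-sided inverse. The only delicate point is the boundary case $i=1$, where "$(i-1)$-function" would be a $0$-function; I would simply record that we read this as the degenerate case in which $f$ does not depend on any coordinate (here $f\equiv 1$), so $\alpha_1$ is the permutation $x_1\mapsto x_1+1 \bmod m$, which is trivially well defined, and note that Claim~\ref{obs:r-function-DAT} still applies with $\Read(T)=1$, $\Write(T)=1$.

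For the second part, I would describe the orbit structure of $\alpha_i$. A tuple $\langle x_1,\ldots,x_{n'}\rangle$ is moved by $\alpha_i$ if and only if $x_1=\cdots=x_{i-1}=0$; every other tuple is a fixed point. On a moved tuple $\alpha_i$ acts by $x_i\mapsto x_i+1\bmod m$ and keeps all other coordinates fixed (the first $i-1$ are already $0$). Hence for each fixed choice of $\langle x_{i+1},\ldots,x_{n'}\rangle\in\Z_m^{n'-i}$ the set $\{\langle 0^{i-1},a,x_{i+1},\ldots,x_{n'}\rangle : a\in\Z_m\}$ forms one orbit, traversed as $0\to 1\to\cdots\to m-1\to 0$, i.e.\ a cycle of length exactly $m$; these $m^{n'-i}$ cycles are pairwise disjoint and exhaust all non-fixed points. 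This is the claimed decomposition of $\alpha_i$ into disjoint $m$-cycles, the rest of $\Z_m^{n'}$ consisting of fixed points. I do not anticipate any real obstacle; the only care needed is making explicit the degenerate $i=1$ case above and being precise that the phrase "composed of disjoint cycles of length $m$" refers to the nontrivial cycles, with the remaining points fixed by $\alpha_i$.
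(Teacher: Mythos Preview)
Your proposal is correct and matches the paper's approach: the paper treats this claim as an immediate observation from Definitions~\ref{def:r-function} and~\ref{def:alpha-i} and gives no separate proof, and your argument is precisely the routine unwinding of those definitions, including the care you take with the degenerate $i=1$ case and the clarification that the remaining tuples are fixed points.
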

We now establish a crucial property of the $\alpha_i$'s, i.e.,
their composition is a \emph{full} length cycle. 
\begin{claim}
\label{clm:full-cycle}
$\alpha = \alpha_{n'} \circ \cdots \circ \alpha_1$ is a cycle of length $m^{n'}$.
\end{claim}
\begin{proof}
  Consider the sequence of permutations $\tau_1,\ldots,\tau_{n'}$ 
  such that $\tau_i=\alpha_i\circ \alpha_{i-1} \circ \cdots \circ \alpha_1$
  for $i \in [n']$. Clearly, $\tau_{n'}=\alpha$. We now prove the claim.
  In fact, we establish the following stronger claim: for $i \in [n']$,
  $\tau_i$ is a permutation composed of $m^{n'-i}$ disjoint cycles, each
  of the cycles being of length $m^i$. Furthermore, for every
  $\<a_{i+1},\ldots,a_{n'}\>\in \Z_m^{n'-i}$ there is a cycle that involves
  all tuples of the form $\<x_1,\ldots,x_{i},a_{i+1},\ldots,a_{n'}\>$.
  The claim, $\alpha$ is a cycle of length $m^{n'}$, follows as a consequence.
  We prove the stronger claim by induction on $i$. 

  \textbf{Base case:} $\tau_1=\alpha_1$. From the definition of $\alpha_1$,
  it follows that there is a cycle of length $m$ of the form
  $\left(\<0,a_2,\ldots,a_{n'}\>,\<1,a_2,\ldots,a_{n'}\>,\ldots,\<m-1,a_2,\ldots,a_{n'}\> \right)$ for each $\<a_2,\ldots,a_{n'}\> \in \Z_m^{n'-1}$.
  Hence our induction hypothesis clearly holds for the base case.

  \textbf{Induction step:} Suppose our induction hypothesis holds until some $i \in [n']$ and
  we would like to establish it for $i+1$. Let us consider the permutation
  $\alpha_{i+1}$. We know that  it is composed of $m^{n'-(i+1)}$ disjoint cycles of
  length $m$. Indeed, for each $\<a_{i+2},\ldots,a_{n'}\> \in \Z_m^{n'-(i+1)}$,
  $\alpha_{i+1}$ contains a cycle that involves all $m$ tuples where
  the first $i$ coordinates are all $0$ and the last $n'-(i+1)$ coordinates
  are set to $\<a_{i+2},\ldots,a_{n'}\>$. From the cycle decomposition of
  $\alpha_{i+1}$ and $\tau_i$ into disjoint cycles,
  it is clear that for any cycle say $C$ in
  $\alpha_{i+1}$ there are $m$ disjoint cycles $C_1,\ldots,C_m$ in $\tau_i$,
  each of them intersecting $C$ in exactly one element.
  Consider $C'=C\circ C_1\circ \cdots \circ C_m$. 
  By repeated application of Proposition~\ref{prop:intersect-cycle},
  we conclude that $C'$ is a cycle of length $\sum_{i=1}^m |C_i|=m^{i+1}$.
  (Here by $|C_i|$ we mean the length of the cycle
  $C_i$.) Also $C'$ involves tuples where the last $n'-(i+1)$ coordinates
  are set to some fixed $\<a_{i+2},\ldots,a_{n'}\> \in \Z_m^{n'-(i+1)}$.
  Thus, $C'$ is a cycle of length $m^{i+1}$ containing all tuples of the form
  $\<x_1,\ldots ,x_{i+1},a_{i+2},\ldots ,a_{n'}\>$. 
  Since $\tau_{i+1}=\alpha_{i+1}\circ \tau_i$ and $\alpha_{i+1}$ contains
  $m^{n'-(i+1)}$ disjoint cycles, we conclude 
  that $\tau_{i+1}$ consists of exactly $m^{n'-(i+1)}$ disjoint cycles,
  each of length $m^{i+1}$ and containing tuples of the required form.

  This finishes the proof.
\end{proof}

Readers may note that this step does not use the fact that $m$ is odd and,
thus, it is true for any $m \in \N$. 
If we were to directly implement $\alpha_i$ by a decision assignment tree,
its read complexity would be $i$.
Hence, we would not get any savings in
Lemma~\ref{lem:cycle-counter}. 
So we need to further decompose  $\alpha_i$ into permutations of small read complexity.

\subsubsection*{Step 2: Further decomposition of $\alpha_i$'s.}

Our goal is to describe $\alpha_i$ as a composition of $2$-functions. Recall,
Claim~\ref{obs:alpha-r-function}, each $\alpha_i$ is an $(i-1)$-function on
$\Z_m^{n'}$. Suppose, for $i \in [n']$, there exists a set of $2$-functions
$\alpha_{i,1},\ldots,\alpha_{i,k_i}$ such that
$\alpha_i=\alpha_{i,k_i}\circ \cdots \circ \alpha_{i,1}$.  Then using
Lemma~\ref{lem:cycle-counter}, where $\alpha_{i,k}$'s play the role of
$\sigma_j$'s, we obtain a decision assignment tree implementing a $2$-Gray code
with potentially low read complexity.
Indeed, each $\alpha_{i,k}$ has low read complexity by Claim~\ref{obs:r-function-DAT}, hence the read complexity
essentially depends on how large is $\sum_i k_i$. 
In the following we will argue that $\alpha_i$'s can be decomposed into a
\emph{small} set of $2$-functions, thus keeping the maximum $k_i$ small.
As a result, the read complexity bound in Theorem~\ref{thm:odd-counter-restated}
will follow.

Note $\alpha_1,\alpha_2$ and $\alpha_3$ are already $2$-functions. In the case of $\alpha_i$, $4 \le i \le n'-2$,
we can directly adopt the technique from~\cite{BC92, BCKLS14} to generate
the desired set of $2$-functions. However, as discussed in Section~\ref{sec:our-technique}, that technique falls short when $i > n'-2$. (It needs two free registers to
operate.) For $i=n'-1$, it is possible to generalize the proof technique of~\cite{CG75} to decompose $\alpha_{n'-1}$. Unfortunately all the previously known techniques fail to decompose $\alpha_{n'}$ and we have to develop a new technique. 

First we provide the adaptation of~\cite{BC92, BCKLS14}, and then develop a new technique that allows us to express both $\alpha_{n'-1}$ and $\alpha_{n'}$ as a composition
of small number of $2$-functions, thus overcoming the challenge described above.

\begin{lemma}
\label{lem:catalytic-decomposition}
For any $4 \le i \le n'-2$, let $\alpha_i$ be the permutation given by Definition~\ref{def:alpha-i}. Then there exists a set of $2$-functions $\alpha_{i,1},\ldots,\alpha_{i,k_i}$ such that $\alpha_i=\alpha_{i,k_i}\circ \cdots \circ \alpha_{i,1}$, and $k_i\le 4 (i-1)^2 - 3$.
\end{lemma}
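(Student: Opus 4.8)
The plan is to realize $\alpha_i$ as a short straight‑line program of register operations, following the reversible/transparent simulation of product formulas of Ben‑Or and Cleve~\cite{BC92} and its catalytic refinement of Buhrman et al.~\cite{BCKLS14}. First observe that $\alpha_i$ is exactly the map ``$x_i \gets x_i + g(x_1,\ldots,x_{i-1})$, leaving all other coordinates fixed'', where $g(x_1,\ldots,x_{i-1}) = \prod_{j=1}^{i-1} h_j(x_j)$ and $h_j\colon\Z_m\to\Z_m$ is the indicator $h_j(a)=1$ if $a=0$ and $h_j(a)=0$ otherwise. So $\alpha_i$ amounts to a transparent addition, to an output register $x_i$, of a product of $i-1$ one‑variable quantities. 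I designate $x_i$ as the output register and $x_{i+1},x_{i+2}$ as two catalytic work registers -- these exist precisely because $i\le n'-2$ -- and treat $x_1,\ldots,x_{i-1}$ as read‑only inputs (the coordinates $x_{i+3},\ldots,x_{n'}$ are never touched).

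The heart is a recursion on the number of factors. For a product $\phi$ of leaves $h_j(x_j)$, a choice of target register $R\in\{x_i,x_{i+1},x_{i+2}\}$ and a sign $\sigma\in\{+,-\}$, I build a program $\mathcal{P}_\phi^{R,\sigma}$ that performs $R\gets R+\sigma\phi$ and restores the other two registers and all inputs, using those other two as scratch $S_1,S_2$. If $\phi$ has at most two leaves, $\mathcal{P}_\phi^{R,\sigma}$ is a single instruction ($R\gets R\pm h_j(x_j)$ or $R\gets R\pm h_j(x_j)h_k(x_k)$), which is an $r$‑function with $r\le 2$ and hence, by Claim~\ref{obs:r-function-DAT}, has read complexity at most $3$ and write complexity $1$. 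If $\phi=\phi_1\cdot\phi_2$ has three or more leaves, split it in half and use the eight‑step sequence
\[
\bigl(R\gets R+S_1S_2\bigr),\ \mathcal{P}_{\phi_1}^{S_1,+},\ \bigl(R\gets R-S_1S_2\bigr),\ \mathcal{P}_{\phi_2}^{S_2,+},\ \bigl(R\gets R+S_1S_2\bigr),\ \mathcal{P}_{\phi_1}^{S_1,-},\ \bigl(R\gets R-S_1S_2\bigr),\ \mathcal{P}_{\phi_2}^{S_2,-},
\]
where $R\gets R\pm S_1S_2$ is the $2$‑function adding $\pm S_1S_2$ to $R$, and each recursive call $\mathcal{P}_{\phi_b}^{S_b,\cdot}$ targets $S_b$ and uses the remaining two registers (including $R$) as \emph{its} scratch. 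Writing $a,b$ for the arbitrary contents of $S_1,S_2$, a short telescoping computation (using that the recursive calls are themselves transparent) shows the net effect is $R\gets R+\phi_1\phi_2$ with $S_1,S_2$ and all inputs restored: the alternating signs make the junk cross‑terms $ab$, $a\phi_2$, $b\phi_1$ cancel while $\phi_1\phi_2$ survives. Applying $\mathcal{P}_g^{x_i,+}$ realizes $\alpha_i$; flattening the recursion into a list of instructions, taken in execution order, yields the desired $2$‑functions $\alpha_{i,1},\ldots,\alpha_{i,k_i}$ with $\alpha_i=\alpha_{i,k_i}\circ\cdots\circ\alpha_{i,1}$.

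For the count, the eight‑step sequence gives $k(\phi)=2k(\phi_1)+2k(\phi_2)+4$ with $k(\phi)=1$ when $\phi$ has at most two leaves, so a balanced product of $i-1$ factors (a tree of depth $\lceil\log(i-1)\rceil$) costs at most $(7\cdot 4^{\lceil\log(i-1)\rceil-1}-4)/3\le (7(i-1)^2-4)/3$, which is at most $4(i-1)^2-3$ for $i\ge 4$; a routine induction makes this precise. I expect the main obstacle to be the transparency bookkeeping: one must verify by induction on the number of leaves that $\mathcal{P}_\phi^{R,\sigma}$ restores every register other than its target \emph{no matter what garbage the catalytic registers hold} -- that invariant, enforced exactly by the sign pattern of the eight‑step sequence, is the whole content of the Ben‑Or--Cleve / catalytic argument. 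A secondary point deserving care is that the construction spends two catalytic registers, which is why it is confined to $i\le n'-2$; the cases $i=n'-1$ and $i=n'$ fall outside its scope and are handled by the separate arguments developed afterwards.
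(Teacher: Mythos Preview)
Your proposal is correct and is essentially the same argument as the paper's: both realize $\alpha_i$ as ``add the product $\prod_{j<i} h_j(x_j)$ to $x_i$'' and decompose it via the Ben-Or--Cleve eight-step commutator $\tau = \tau_B^{-1}\gamma^{-1}\tau_A^{-1}\gamma\,\tau_B\,\gamma^{-1}\tau_A\,\gamma$ (your sequence, read in execution order, is literally this), yielding the identical recurrence $T(r)\le 2T(\lfloor r/2\rfloor)+2T(\lceil r/2\rceil)+4$ with base case $T(r)=1$ for $r\le 2$. The only cosmetic difference is that you phrase it as a three-register catalytic program with fixed scratch $\{x_{i+1},x_{i+2}\}$ and make the transparency invariant explicit, whereas the paper phrases it as an identity among $r$-functions and picks auxiliary indices $j_1,j_2\in[n']\setminus(S\cup\{j\})$ afresh at each level; both require exactly two free coordinates beyond $\{x_1,\ldots,x_i\}$, which is why the range is $i\le n'-2$.
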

It is worth noting that, although in this section we consider $m$ to be odd, the above lemma holds for any $m \in \N$. In~\cite{BC92}, computation uses only addition and multiplication from the ring, whereas we can use any function $g : \Z_m \to \Z_m$. This subtle difference makes the lemma to be true for any $m \in \N$ instead of being true only for prime powers.
\begin{proof} Pick $i\le n'-2$.
  Let us represent $\alpha_i$ as an $(i-1)$-function. From the definition we have,
  \[\alpha_i\left(\<a_1,\ldots,a_i,\ldots,a_{n'}\>\right)=\<a_1,\ldots , a_{i-1},a_i+f(a_{1},\ldots,a_{i-1}),a_{i+1},\ldots ,a_{n'}\> ,\]
  where the map $f : \Z_m^{i-1} \to \Z_m$ is defined as follows: 
\[f(a_1,\ldots ,a_{i-1})=
\begin{cases}
1 & \text{if } (a_1,\ldots ,a_{i-1}) =(0,\ldots ,0), \\
0 &\text{otherwise}.
\end{cases}\]

Observe that $f$ is an indicator function of a tuple; in particular, of the all-zeroes tuple. 
To verify the lemma, we would prove a stronger claim than the statement of the lemma. Consider the set $\mathcal{S}$ of $r$-functions, $1 \leq r \leq n'-3$, such that the function $f$ used to define them is the indicator function of the all-zeroes tuple. That is, $\tau \in \mathcal{S}$ if and only if there exists a set $\{i_1,i_2,\ldots ,i_r\} \subseteq [n']$ of size $r$ and a $j \in [n']\setminus \{i_1,i_2,\ldots ,i_r\} $ such that
\[\tau(\<a_1,\ldots,a_{n'}\>)=\<a_1,\ldots , a_{j-1},a_j+f(a_{i_1},\ldots,a_{i_r}),a_{j+1},\ldots ,a_{n'}\>,\]
where $f(x_1,\ldots ,x_r) = 1$ if $(x_1,\ldots ,x_r) = (0,\ldots ,0)$, and $0$ otherwise. Observe that $\alpha_i \in \mathcal{S}$ for $4 \leq i \leq n'-2$.
We establish the following stronger claim.
\begin{claim}
  For an $r$-function $\tau \in \mathcal{S}$, there exist $2$-functions $\tau_1 ,\ldots ,\tau_{k_r}$ such that $\tau = \tau_{k_r}\circ \cdots \circ \tau_1$ and $k_r \leq 4r^2-3$. (We stress that $\tau_1 ,\ldots ,\tau_{k_r}$ need not belong to $\mathcal{S}$.)   
\end{claim}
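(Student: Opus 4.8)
The plan is to prove the claim by strong induction on $r$, reading an $r$-function as a ``transvection'': a permutation $\tau\in\mathcal{S}$ simply adds the value $f(a_{i_1},\ldots,a_{i_r})=\prod_{t=1}^r[a_{i_t}=0]$ to the coordinate $x_j$ and leaves every other coordinate fixed, and a $2$-function does the same with an arbitrary binary function $g(x_b,x_c)$ added to some $x_a$ with $a\notin\{b,c\}$. So the statement is that a ``compute a product and add it'' operation can be realised by a short composition of ``add a binary function'' operations, which is exactly the regime of the register programs of Ben-Or and Cleve~\cite{BC92} and of catalytic computation~\cite{BCKLS14}. The engine of the induction is a reversible \emph{compute--use--uncompute} pattern based on the commutator identity for transvections: the map $x_j\gets x_j+p\cdot x_u$ and the map $x_u\gets x_u+q$ do not commute, and their commutator adds $pq$ to $x_j$ while restoring $x_u$.

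For the base cases $r\le 2$ there is nothing to do: a $1$-function is a $2$-function whose $g$ ignores one of its two inputs (a spare input coordinate is available since $r\le n'-3$ forces $n'\ge 4$), and a $2$-function is already a $2$-function, so $k_1=k_2=1$, which is at most $4r^2-3$ for $r\in\{1,2\}$. For $r\ge 3$ I would split $\{i_1,\ldots,i_r\}$ into disjoint parts $A,B$ of sizes $\lceil r/2\rceil$ and $\lfloor r/2\rfloor$, so that $f=f_A\cdot f_B$ with $f_A,f_B$ the all-zeroes indicators of the coordinates in $A$ and in $B$ and $|A|,|B|<r$. Choosing two auxiliary registers $u,v\notin\{i_1,\ldots,i_r,j\}$ — which exist since $r+1\le n'-2$ — I would express $(x_j\gets x_j+f_Af_B)$ as a bounded-length composition of: a sub-sequence realising $(x_u\gets x_u+f_A)$, which is an $|A|$-function in $\mathcal{S}$ and hence decomposes into at most $4|A|^2-3$ two-functions by induction, arranged so that its internal scratch avoids $j$ and $v$; the inverse of this sub-sequence; the analogous sub-sequences for $(x_v\gets x_v+f_B)$ and its inverse; and a constant number of genuine two-functions of the form $x_j\gets x_j+h(x_u,x_v)$ (e.g.\ $h(u,v)=uv$) and their inverses that perform the actual multiplication and cancel the cross-terms. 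Since the two auxiliary registers are restored after each phase, the same two registers can be reused throughout the recursion. It is precisely the freedom to take $g$ to be \emph{any} function $\Z_m^2\to\Z_m$, rather than only ring operations as in~\cite{BC92,CG75}, that makes the argument work for every $m$, not just prime powers. Counting the two-functions produced gives a recursion for $k_r$ which, with the balanced split and $k_{r'}\le 4{r'}^2-3$, telescopes to $k_r\le 4r^2-3$; equivalently, one can organise the induction to remove one coordinate at a time at an added cost of $8r-4$ two-functions, which sums to exactly $4r^2-3$.

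The main obstacle is keeping this construction simultaneously short and register-frugal. The naive commutator recursion that peels off a single coordinate and re-runs the $(r-1)$-function forward and backward costs $2k_{r-1}+O(1)$ and is therefore exponential in $r$; one must instead use the balanced product tree (or an equivalent reorganisation) so that the blow-up per recursion level is bounded and the total stays quadratic. At the same time the construction must never use more than the two available auxiliary registers, which forces a careful discipline of freeing and reusing $u$ and $v$ and of keeping the scratch used inside the recursive sub-sequences disjoint from whichever of $j,u,v$ must be preserved at that moment. The delicate point in the multiplication step is the standard catalytic subtlety: since $u$ and $v$ start with arbitrary values $u_0,v_0$, naively multiplying their contents into $x_j$ also injects the unwanted terms $u_0f_B$, $v_0f_A$ and $u_0v_0$, so an extra copy of the multiplication two-function must be interleaved after $u$ (respectively $v$) has been restored in order to subtract these terms off. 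Once this reversible skeleton is in place, verifying that it computes the right permutation and that the two-function count meets the bound $4r^2-3$ is routine.
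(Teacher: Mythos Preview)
Your proposal is correct and essentially identical to the paper's proof: balanced split of the index set into halves $A,B$, two auxiliary registers $j_1,j_2$ (your $u,v$), and the commutator-style identity that yields the recurrence $T(r)\le 2T(\lfloor r/2\rfloor)+2T(\lceil r/2\rceil)+4$; the paper writes the resulting decomposition explicitly as $\tau=\tau_B^{-1}\circ\gamma^{-1}\circ\tau_A^{-1}\circ\gamma\circ\tau_B\circ\gamma^{-1}\circ\tau_A\circ\gamma$ with $\gamma$ the $2$-function $x_j\gets x_j+x_{j_1}x_{j_2}$, which is exactly the cross-term cancellation you describe. One caveat: your aside that ``equivalently, one can organise the induction to remove one coordinate at a time at an added cost of $8r-4$'' is not supported---as you yourself observe a paragraph later, peeling one coordinate via the commutator costs $2k_{r-1}+O(1)$ and is exponential with only two auxiliaries---so you should drop that remark and rely solely on the balanced recursion.
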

Clearly, the claim implies the lemma. We now prove the claim by induction on $r$.
The base case is $r \leq 2$, in which case the claim trivially holds.
Suppose the claim holds for all $(r-1)$-functions in $\mathcal{S}$. Let
$\tau \in \mathcal{S}$ be an $r$-function identified by the set
$S := \{i_1,i_2,\ldots ,i_r\} \subseteq [n']$ and $j \in [n'] \setminus S$.  
Since $f$ is an indicator function, it can be expressed as a product of indicator
functions. That is,
$f(a_{i_1},\ldots,a_{i_r})= \prod_{s \in S}g(a_s)$ where
$g : \Z_m \to \Z_m$ is the following $\{0,1\}$-map:   
\[g(y)=
\begin{cases}
1 & \text{if } y=0, \\
0 & \text{otherwise}.
\end{cases}
\]

Consider a partition of $S$ into two sets $A$ and $B$ of sizes $\lfloor r/2 \rfloor$ and $\lceil r/2 \rceil$, respectively. 
Let $j_1$ and $j_2$ be two distinct integers in
$[n']\setminus S \cup \{j\}$. 
The existence of such integers is guaranteed by the bound on $r$.
We now express $\tau$ as a composition of 
$(r/2)$-functions and $2$-functions, and then use induction hypothesis to complete
the proof. The decomposition of $\tau$ is motivated by the following
identity:
\begin{align*}	
  a_j+\prod_{s \in S}g(a_s)  = &\;\; a_j + \left(a_{j_1}+\prod_{s \in A}g(a_s)-a_{j_1}\right) \left(a_{j_2}+\prod_{s \in B}g(a_s)-a_{j_2}\right) \\
   = &\;\; a_j+ \left(a_{j_1}+\prod_{s \in A}g(a_s)\right) \left(a_{j_2}+\prod_{s \in B}g(a_s)\right)  \\
  &\; - \left( a_{j_1}+\prod_{s \in A}g(a_s)\right) a_{j_2} - a_{j_1} \left(a_{j_2}+\prod_{s \in B}g(a_s)\right) +a_{j_1}  a_{j_2}.
\end{align*}

Therefore, we consider three permutations $\gamma$, $\tau_{A}$ and $\tau_{B}$ such that for any $\<a_1,\ldots,a_{n'}\> \in \Z_m^{n'}$ their maps are given as follows: 
\begin{align*}
\gamma \left(\<a_1,\ldots,a_n\>\right) & = \<a_1,\ldots , a_{j-1},a_{j}+a_{j_1} a_{j_2},a_{j+1},\ldots ,a_{n'}\>, \\
\tau_A\left(\<a_1,\ldots,a_{n'}\>\right) & = \<a_1,\ldots , a_{j_1-1},a_{j_1}+\textstyle\prod_{s \in A} g(a_{s}),a_{j_1+1},\ldots ,a_{n'}\>, \text{ and } \\
\tau_B \left(\<a_1,\ldots,a_{n'}\>\right) & = \<a_1,\ldots , a_{j_2-1},a_{j_2}+\textstyle\prod_{s \in B} g(a_{s}),a_{j_2+1},\ldots ,a_{n'}\>, 
\end{align*}
where both the multiplications and additions are in $\Z_m$. 
Using the identity it is easy to verify the following decomposition of $\tau$: 
\[\tau =\tau_{B}^{-1} \circ \gamma^{-1} \circ \tau_{A}^{-1} \circ \gamma \circ \tau_{B} \circ \gamma^{-1} \circ \tau_{A} \circ \gamma .\]
Clearly, $\gamma$ is
a $2$-function, while $\tau_A$ and $\tau_B$ are $\lfloor r/2 \rfloor$-function and $\lceil r/2 \rceil$-function, respectively, and belong to $\mathcal{S}$.
By induction hypothesis $\tau_A$ and $\tau_B$ can be expressed
as a composition of $2$-functions.
Thus their inverses too. Hence we obtain a
decomposition of $\tau$ in terms of $2$-functions. The bound on $k_r$, the length of
the decomposition, follows from the following recurrence:
\[T(r) \leq 2T(\lfloor r/2 \rfloor) + 2 T(\lceil r/2 \rceil) + 4.\]
\end{proof}
We would like to mention that another decomposition of $\tau$ in terms of $2$-functions can be obtained by following the proof of~\cite{CG75}, albeit with a much worse bound on the value of $k_r$.
Further, by strengthening the induction hypothesis, it is easily seen that the above proof can be generalized to hold for certain special type of $r$-functions. Let $\beta$
be an $r$-function, $r \leq n'-3$, such that for any $\<a_1,\ldots,a_{n'}\> \in \Z_m^{n'}$,
\[\beta \left(\<a_1,\ldots,a_{n'}\>\right) = \<a_1,\ldots , a_{i-1},a_i+f_e(a_{i_1},\ldots,a_{i_r}),a_{i+1},\ldots ,a_{n'}\>, \]
where the function $f_e : \Z_m^{r} \to \Z_m$ is defined by: 
\[f_e(x)=
\begin{cases}
b & \text{if } x=e, \\
0 & \text{otherwise},
\end{cases}\]
for some $b \in \Z_m$ and $e\in \Z_m^{r}$, i.e., $f_e$ is some constant multiple of the characteristic function of the tuple $e$. A crucial step in the proof is to express $f_e$ as a product of indicator functions. In this case we consider the following functions $g_{i_j} : \Z_m \to \Z_m$ for $1 \leq j \leq r$.
Define $g_{i_1} : \Z_m \to \Z_m$ such that for any $y \in \Z_m$, $g_{i_1}(y)=b$ if $y=e_1$, and $0$ otherwise. For any $2\le j \le r$, define $g_{i_j} : \Z_m \to \Z_m$ as $g_{i_j}(y)=1$ if $y=e_i$, and $0$ otherwise. 
By definition, we have $f_e(x_1,\ldots,x_{r})=g_{i_1}(x_1)g_{i_2}(x_2)\cdots g_{i_r}(x_{r})$. Thus we get the following generalization of Lemma~\ref{lem:catalytic-decomposition}.
\begin{lemma}
\label{lem:general-catalytic-decomposition}
For any $m \in \N$ and $1 \le r \le n'-3$, let $\tau$ be an $r$-function such that for any $\<a_1,\ldots,a_{n'}\> \in \Z_m^{n'}$,
\[\tau \left(\<a_1,\ldots,a_{n'}\> \right) = \<a_1,\ldots , a_{j-1},a_j+f_e(a_{i_1},\ldots,a_{i_r}),a_{j+1},\ldots ,a_{n'}\>\]
where the function $f_e : \Z_m^{r} \to \Z_m$ is defined by: $f_e(x)=b$ if $x=e$; and $0$ otherwise. Then there exists a set of $2$-functions $\tau_{1},\ldots,\tau_{k_r}$ such that $\tau=\tau_{k_r}\circ \cdots \circ \tau_{1}$, and $k_r\le 4r^2 - 3$.
\end{lemma}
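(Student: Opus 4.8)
The plan is to mirror the proof of Lemma~\ref{lem:catalytic-decomposition} almost verbatim, replacing the indicator function of the all-zeroes tuple with the scaled indicator $f_e$ of an arbitrary tuple $e$. The key observation, already flagged in the paragraph preceding the statement, is that $f_e$ still factors coordinatewise: writing $e = (e_1,\ldots,e_r)$, one has $f_e(x_1,\ldots,x_r) = g_{i_1}(x_1)\cdots g_{i_r}(x_r)$ where $g_{i_1}(y) = b$ if $y = e_1$ and $0$ otherwise, and for $j \ge 2$, $g_{i_j}(y) = 1$ if $y = e_j$ and $0$ otherwise. This is the only structural property of $f$ that the earlier proof used — it never used that the "special" value is $0$, only that $f$ is a product over the input coordinates of single-variable $\{0,b\}$- or $\{0,1\}$-valued maps.

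Concretely, I would strengthen the statement to an induction over a class $\mathcal{S}'$ of $r$-functions ($1 \le r \le n'-3$) whose defining function is a product $\prod_{s} h_s(a_s)$ of arbitrary single-variable maps $h_s : \Z_m \to \Z_m$, and prove: every $\tau \in \mathcal{S}'$ decomposes into at most $4r^2 - 3$ many $2$-functions. The base case $r \le 2$ is immediate since such $\tau$ is itself a $2$-function (or a $1$-function). For the inductive step, partition the index set $S = \{i_1,\ldots,i_r\}$ into $A$ of size $\lfloor r/2\rfloor$ and $B$ of size $\lceil r/2\rceil$, pick two fresh coordinates $j_1, j_2 \in [n'] \setminus (S \cup \{j\})$ (available by $r \le n'-3$), and use exactly the same algebraic identity
\[
a_j + \prod_{s\in S} h_s(a_s) = a_j + \Bigl(a_{j_1} + \textstyle\prod_{s\in A} h_s(a_s)\Bigr)\Bigl(a_{j_2} + \textstyle\prod_{s\in B} h_s(a_s)\Bigr) - \Bigl(a_{j_1}+\textstyle\prod_{s\in A} h_s(a_s)\Bigr)a_{j_2} - a_{j_1}\Bigl(a_{j_2}+\textstyle\prod_{s\in B} h_s(a_s)\Bigr) + a_{j_1}a_{j_2},
\]
which yields the decomposition $\tau = \tau_B^{-1}\circ\gamma^{-1}\circ\tau_A^{-1}\circ\gamma\circ\tau_B\circ\gamma^{-1}\circ\tau_A\circ\gamma$, where $\gamma$ is the $2$-function adding $a_{j_1}a_{j_2}$ into coordinate $j$, and $\tau_A, \tau_B$ are $\lfloor r/2\rfloor$- and $\lceil r/2\rceil$-functions in $\mathcal{S}'$ (with their own product-of-single-variable-maps structure, hence covered by the induction hypothesis). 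The inverses of $\tau_A,\tau_B$ are also compositions of $2$-functions since the inverse of a $2$-function is a $2$-function. The count then satisfies $T(r) \le 2T(\lfloor r/2\rfloor) + 2T(\lceil r/2\rceil) + 4$ with $T(r) \le 4r^2 - 3$ for $r \le 2$, which solves to $T(r) \le 4r^2 - 3$ by the same arithmetic as before.

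Finally I would specialize: the $r$-function $\tau$ in the statement of Lemma~\ref{lem:general-catalytic-decomposition} lies in $\mathcal{S}'$ via the factorization $f_e = g_{i_1}\cdots g_{i_r}$ above, so the strengthened claim applies and gives $k_r \le 4r^2 - 3$. I do not expect any real obstacle here — the whole point is that the proof of Lemma~\ref{lem:catalytic-decomposition} is robust to this generalization. The only minor care points are: verifying the fresh-coordinate count $|[n'] \setminus (S \cup \{j\})| \ge 2$ holds under $r \le n'-3$ (it does, since $|S \cup \{j\}| = r+1 \le n'-2$); checking that when $A$ or $B$ has size $0$ or $1$ the corresponding $\tau_A$ or $\tau_B$ is interpreted as a $0$- or $1$-function and the recurrence's base case absorbs it; and making sure the scalar $b$ is carried only in $g_{i_1}$ so that no spurious factors of $b$ appear — all of which are routine.
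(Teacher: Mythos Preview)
Your proposal is correct and matches the paper's own approach: the paper states that the proof of Lemma~\ref{lem:catalytic-decomposition} generalizes by strengthening the induction hypothesis and using the coordinatewise factorization $f_e(x_1,\ldots,x_r) = g_{i_1}(x_1)\cdots g_{i_r}(x_r)$, which is exactly what you spell out. Your explicit choice of the class $\mathcal{S}'$ (products of arbitrary single-variable maps) is a clean way to phrase the strengthened hypothesis the paper leaves implicit.
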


\subparagraph{Comment on decomposition of general $r$-functions for $3\le r \le n'-3$:}
Any function $f : \Z_m^{r} \to \Z_m$ can be expressed as $f=\sum_e c_e\cdot \chi_e$, where $\chi_e$ is the characteristic function of the tuple $e\in \Z_m^{r}$, and $c_e \in \Z_m$. Thus Lemma~\ref{lem:general-catalytic-decomposition} suffices to argue that any $r$-function can be decomposed into a set of $2$-functions. However, the implied bound on $k_r$ may not be small. In particular, the number of tuples where $f$ takes
non-zero value might be large. 

It remains to decompose $\alpha_{n'-1}$ and $\alpha_{n'}$. The following lemma about
cycles that intersect at exactly one point serves as a key tool in our decomposition.

\begin{lemma}
\label{lem:cycle-isolation}
Suppose there are two cycles, $\sigma=(t, a_1, \cdots, a_{\ell -1})$ and $\tau=(t, b_1, \cdots, b_{\ell -1})$, of length $\ell \geq 2$ such that $a_i \ne b_j$ for every $i,j \in [\ell -1]$. Then, $(\sigma \circ \tau)^\ell \circ (\tau \circ \sigma)^\ell  = \sigma^2$.
\end{lemma}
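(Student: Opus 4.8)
The plan is to track the orbit of every point under both permutations $\sigma^2$ and $(\sigma\circ\tau)^\ell\circ(\tau\circ\sigma)^\ell$ and check they agree. First I would dispose of the points outside $\{t\}\cup\{a_i\}\cup\{b_j\}$: both cycles $\sigma$ and $\tau$ fix them, hence so does every composition in sight, so the identity holds trivially there. The real content is on the $2\ell-1$ points $t, a_1,\ldots,a_{\ell-1}, b_1,\ldots,b_{\ell-1}$, and the right-hand side $\sigma^2$ acts there as the permutation fixing all $b_j$ and sending $a_i\mapsto a_{i+2}$ cyclically (indices mod $\ell-1$, reading $a_\ell:=t$, $a_{\ell+1}:=a_1$, so really think of $\sigma$ as the $\ell$-cycle $(t,a_1,\ldots,a_{\ell-1})$ and $\sigma^2$ as its square).

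Next I would analyze the left-hand side by first understanding $\tau\circ\sigma$ and $\sigma\circ\tau$ individually. Since $\sigma$ and $\tau$ share only the point $t$, Proposition~\ref{prop:intersect-cycle} (applied after writing $\sigma$ and $\tau$ as cycles through $t$, with the disjointness of the $a_i$'s and $b_j$'s) tells us that $\sigma\circ\tau = (t, b_1,\ldots,b_{\ell-1}, a_1,\ldots,a_{\ell-1})$ is a single $(2\ell-1)$-cycle, and similarly $\tau\circ\sigma = (t, a_1,\ldots,a_{\ell-1}, b_1,\ldots,b_{\ell-1})$. Now raise each to the $\ell$-th power: raising a $(2\ell-1)$-cycle to the power $\ell$, I would compute the resulting permutation explicitly by stepping $\ell$ positions at a time around the cycle of length $2\ell-1$ (using $\gcd(\ell,2\ell-1)=1$, so each power is again a single $(2\ell-1)$-cycle), and identify exactly where each of $t,a_i,b_j$ goes. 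Then compose $(\sigma\circ\tau)^\ell$ with $(\tau\circ\sigma)^\ell$ and verify the composite matches $\sigma^2$ point by point on the $2\ell-1$ relevant points; the $b_j$'s must come back to themselves and the $a_i$'s must shift by two.

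I expect the main obstacle to be bookkeeping: keeping the two cyclic orderings — one with the $a$-block before the $b$-block and one with the reverse — straight while taking $\ell$-th powers, and handling the wrap-around/index-arithmetic correctly (in particular the role of $t$ as the "seam" and the edge cases $\ell=2$ and small $\ell$). A cleaner way to organize this, which I would try if the direct computation gets unwieldy, is to conjugate: note $\tau\circ\sigma = \sigma^{-1}\circ(\sigma\circ\tau)\circ\sigma$, so $(\tau\circ\sigma)^\ell = \sigma^{-1}\circ(\sigma\circ\tau)^\ell\circ\sigma$ by Proposition~\ref{prop:cycle-conjugate}, whence the left-hand side becomes $(\sigma\circ\tau)^\ell\circ\sigma^{-1}\circ(\sigma\circ\tau)^\ell\circ\sigma$. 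Setting $\rho := (\sigma\circ\tau)^\ell$, a single explicitly known $(2\ell-1)$-cycle, reduces everything to verifying the identity $\rho\circ\sigma^{-1}\circ\rho\circ\sigma = \sigma^2$, i.e. $\rho\circ\sigma^{-1}\circ\rho = \sigma$, which is a clean statement about one cycle $\rho$ and its interaction with $\sigma$ that can be checked directly from the description of $\rho$ obtained above. This reformulation isolates the only genuine computation — describing $\rho=(\sigma\circ\tau)^\ell$ — and makes the final verification a short index check.
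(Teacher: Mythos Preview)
Your proposal is correct and follows essentially the same route as the paper: both use Proposition~\ref{prop:intersect-cycle} to identify $\tau\circ\sigma$ and $\sigma\circ\tau$ as $(2\ell-1)$-cycles, take $\ell$-th powers via $\gcd(\ell,2\ell-1)=1$, and then use the conjugation relation between the two (your $\tau\circ\sigma=\sigma^{-1}(\sigma\circ\tau)\sigma$ is exactly the paper's $\gamma^\ell=\sigma\beta^\ell\sigma^{-1}$) to reduce to a single identity about $\rho=(\sigma\circ\tau)^\ell$. The only thing the paper adds that you leave as an ``index check'' is the observation that $\beta^\ell\circ\sigma^{-1}$ (equivalently your $\sigma^{-1}\circ\rho$) equals the product of disjoint transpositions $(a_1,b_1)(a_2,b_2)\cdots(a_{\ell-1},b_{\ell-1})$, which makes the involution property, and hence your target $\rho\,\sigma^{-1}\rho=\sigma$, immediate.
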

\begin{proof}
By Proposition~\ref{prop:intersect-cycle}, we have 
\begin{align*}
\beta:=\tau \circ \sigma & =(t, a_1, \cdots, a_{\ell -1}, b_1, \cdots, b_{\ell -1}), \text{ and } \\ 
\gamma:=\sigma \circ \tau & =(t, b_1, \cdots, b_{\ell -1}, a_1, \cdots, a_{\ell -1}).
\end{align*}
Both $\beta$ and $\gamma$ are cycles of length $2\ell- 1$. Also note that $2\ell-1$ is co-prime with $\ell $. Thus both $\beta^\ell $ and $\gamma^\ell $ are also cycles of length $2\ell - 1$ given as follow: 
\begin{align*}
  \beta^\ell & = (t, b_1, a_1, b_2, a_2, \cdots, b_{\ell -1}, a_{\ell -1}), \text{ and }\\
  \gamma^\ell & = (t, a_1,b_1, a_2, b_2, \cdots, a_{\ell -1}, b_{\ell -1}).
\end{align*}
Now by Proposition~\ref{prop:cycle-conjugate},
\begin{align*}
\sigma\circ \beta^\ell  \circ \sigma^{-1} & = (\sigma(t), \sigma(b_1), \sigma(a_1), \cdots,\sigma(a_{\ell -2}), \sigma(b_{\ell -1}), \sigma(a_{\ell -1})) \\
&=(a_1, b_1, a_2, b_2, \cdots, a_{\ell -1}, b_{\ell -1},t)=\gamma^\ell .
\end{align*}
Therefore, 
\begin{align*}
\beta^\ell  \circ \sigma^{-1} &= (t, a_{\ell -1}, a_{\ell -2},\cdots, a_1) \circ (t, a_1,b_1, a_2, b_2, \cdots, a_{\ell -1}, b_{\ell -1}) \\
&=(a_1,b_1)(a_2,b_2)\cdots(a_{\ell -1},b_{\ell -1}).
\end{align*}
It is thus evident that  $\left(\beta^\ell  \circ \sigma^{-1}\right)^2$ is the identity permutation. Hence,
\begin{align*}
\gamma^\ell  \circ \beta^\ell  &=\sigma \circ \beta^\ell  \circ \sigma^{-1}\circ \beta^\ell \\
&=\sigma \circ \beta^\ell  \circ \sigma^{-1}\circ \beta^\ell  \circ \sigma^{-1}\circ \sigma\\
&=\sigma^2.
\end{align*}
\end{proof}

Before going into the detailed description of the decomposition procedure, let us briefly discuss the main idea. Here we first consider $\alpha_{n'}$. The case of $\alpha_{n'-1}$ will be analogous. Recall that $\alpha_{n'}=(\<00\cdots 00\>, \<00\cdots 01\>, \<00\cdots 02\>, \cdots, \<00\cdots 0(m-1)\>)$ is a cycle of length $m$. For $a=(m+1)/2$, we define $\sigma=(\<00\cdots 0(0\cdot a)\>, \<00\cdots 0(1\cdot a)\>, \<00\cdots 0(2\cdot a)\>, \cdots, \<00\cdots 0((m-1)\cdot a)\>)$, and  $\tau=(\<(0\cdot a)00\cdots 0\>, \<(1\cdot a)00\cdots 0\>, \<(2\cdot a)00\cdots 0\>, \cdots, \<((m-1)\cdot a)00\cdots 0\>)$, where the multiplication is in $\Z_m$.  Since $m$ is co-prime with $(m+1)/2$, $\sigma$ and $\tau$ are cycles of length $m$. (Here we use the fact that $m$ is odd.) Observe that $\sigma^2=\alpha_{n'}$, so by applying Lemma~\ref{lem:cycle-isolation} to $\sigma$ and $\tau$ we get $\alpha_{n'}$.  It might seem we didn't make much progress towards decomposition, as now instead of one $(n'-1)$-function $\alpha_{n'}$ we have to decompose two $(n'-1)$-functions $\sigma$ and $\tau$. However, we will not decompose $\sigma$ and $\tau$ directly, but rather we obtain a decomposition for $(\sigma \circ \tau)^m$ and $(\tau \circ \sigma)^m$. Surprisingly this can be done using Lemma~\ref{lem:general-catalytic-decomposition} although indirectly. 

We consider an $(n'-3)$-function $\sigma'$  whose cycle decomposition contains $\sigma$ as one of its cycles. Similarly we consider a $3$-function $\tau'$ whose cycle decomposition contains $\tau$ as one of its cycles. We carefully choose these $\sigma'$ and $\tau'$ such that $(\sigma' \circ \tau')^m=(\sigma \circ \tau)^m$ and $(\tau' \circ \sigma')^m=(\tau \circ \sigma)^m$. We will use Lemma~\ref{lem:general-catalytic-decomposition} to directly decompose $\sigma'$ and $\tau'$ to get the desired decomposition.

\begin{lemma}
\label{lem:isolating-decomposition}
For any $n'-1\le i \le n'$, let $\alpha_i$ be the permutation over $\Z_m^{n'}$ given by Definition~\ref{def:alpha-i} where $m$ is odd. Then, there exists a set of $2$-functions $\alpha_{i,1},\ldots,\alpha_{i,k_i}$ such that $\alpha_i=\alpha_{i,k_i}\circ \cdots \circ \alpha_{i,1}$, and $k_i = O(m \cdot (i-1)^2)$.
\end{lemma}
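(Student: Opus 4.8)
The plan is to carry out the strategy sketched in the paragraph preceding the statement, treating $\alpha_{n'}$ in full and then indicating the analogous construction for $\alpha_{n'-1}$. Fix $a=(m+1)/2$, which is the multiplicative inverse of $2$ in $\Z_m$ since $m$ is odd; in particular $a$ is coprime to $m$. For $\alpha_{n'}$, define the two $m$-cycles $\sigma$ and $\tau$ on $\Z_m^{n'}$ as in the overview: $\sigma$ cycles through the tuples $0^{n'-1}(j a)$ for $j=0,\ldots,m-1$ (fixing everything else), and $\tau$ cycles through $(j a)0^{n'-1}$ for $j=0,\ldots,m-1$. The first step is to verify the elementary facts: (i) $\sigma$ and $\tau$ are genuinely $m$-cycles, because $a$ is a unit mod $m$; (ii) $\sigma^2=\alpha_{n'}$, because squaring the cycle $(0,a,2a,\ldots)$ and using $2a=1$ reshuffles it into $(0,1,2,\ldots,m-1)$; and (iii) $\sigma$ and $\tau$ share exactly the single common point $0^{n'}$ and are otherwise disjoint (their non-fixed points differ in disjoint coordinate blocks, namely the last coordinate versus the first, and the only overlap is the all-zeroes tuple). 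Then Lemma~\ref{lem:cycle-isolation} applied with $\ell=m$ gives $(\sigma\circ\tau)^m\circ(\tau\circ\sigma)^m=\sigma^2=\alpha_{n'}$.

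The heart of the argument is to realize $(\sigma\circ\tau)^m$ and $(\tau\circ\sigma)^m$ as short compositions of $2$-functions without decomposing $\sigma$ or $\tau$ themselves. I would introduce an $(n'-3)$-function $\sigma'$ and a $3$-function $\tau'$ chosen so that (a) the cycle decomposition of $\sigma'$ contains $\sigma$ as one of its cycles, (b) the cycle decomposition of $\tau'$ contains $\tau$, and (c) $(\sigma'\circ\tau')^m=(\sigma\circ\tau)^m$ and $(\tau'\circ\sigma')^m=(\tau\circ\sigma)^m$. Concretely, $\sigma'$ should add $a$ to the last coordinate whenever coordinates $2,\ldots,n'-2$ are all zero (an $(n'-3)$-function with a shift-by-$a$ rule, i.e. an $f_e$-type function of the sort covered by Lemma~\ref{lem:general-catalytic-decomposition}), so that on the fibre where coordinate $1$ and the intermediate coordinates vanish it acts exactly as $\sigma$; and $\tau'$ should add $a$ to coordinate $1$ whenever coordinates $2,3,\ldots$ — actually a fixed size-$3$ trigger set disjoint from the last coordinate — are zero, acting as $\tau$ on the corresponding fibre. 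The key verification is that the fibres on which $\sigma'$ acts nontrivially and those on which $\tau'$ acts nontrivially interact, under $m$-fold iteration of the composite, only through the common cycle: off the cycle through $0^{n'}$, repeated application of $\sigma'\circ\tau'$ returns each such orbit to itself after $m$ steps because each of $\sigma',\tau'$ individually has order dividing $m$ on those orbits and they act on "independent" coordinates there. On the single shared cycle the composite behaves exactly as $\sigma\circ\tau$ on $\{0^{n'}\}\cup(\text{points of }\sigma)\cup(\text{points of }\tau)$. So raising to the $m$-th power kills all the spurious motion and leaves precisely $(\sigma\circ\tau)^m$.

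Given (a)--(c), Lemma~\ref{lem:general-catalytic-decomposition} decomposes $\sigma'$ into at most $4(n'-3)^2-3$ many $2$-functions and $\tau'$ into $O(1)$ many $2$-functions (it is a $3$-function), hence so do their inverses; therefore $(\sigma'\circ\tau')^m$ and $(\tau'\circ\sigma')^m$ each decompose into $O(m\cdot(n'-1)^2)$ many $2$-functions, and composing the two (as dictated by Lemma~\ref{lem:cycle-isolation}) yields $\alpha_{n'}$ as a composition of $O(m\cdot(n'-1)^2)$ $2$-functions, which is $O(m\cdot(i-1)^2)$ for $i=n'$. For $\alpha_{n'-1}$ the construction is identical after relabelling: use $\sigma$ acting on coordinate $n'-1$ (restricted to the fibre where coordinates $1,\ldots,n'-2$ and $n'$ are free/zero as appropriate) and the same $\tau$ on coordinate $1$, note $\sigma^2=\alpha_{n'-1}$ on the relevant fibre, and run the same argument; the "guard" set for $\alpha_{n'-1}$ has size $n'-2$ rather than $n'-1$, which still fits the bounds. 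I expect the main obstacle to be step (c): checking carefully that the auxiliary $\sigma'$ and $\tau'$ — which move many more points than $\sigma$ and $\tau$ — nonetheless satisfy $(\sigma'\circ\tau')^m=(\sigma\circ\tau)^m$ exactly, i.e. that all the extra orbits are "cleaned up" by the $m$-th power and that the coordinate choices make the actions on those extra orbits commute appropriately. This requires a precise choice of the trigger sets and a somewhat delicate orbit-by-orbit analysis, and it is the technically heaviest part of the proof.
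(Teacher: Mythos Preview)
Your overall strategy matches the paper's proof exactly: define $\sigma,\tau$ with $\sigma^2=\alpha_{n'}$ sharing only $0^{n'}$, invoke Lemma~\ref{lem:cycle-isolation}, then replace $\sigma,\tau$ by an $(n'-3)$-function $\sigma'$ and a $3$-function $\tau'$ with the same $m$-th-power composites, and finally appeal to Lemma~\ref{lem:general-catalytic-decomposition}. You also correctly identify step~(c) as the crux.

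However, your tentative choices of trigger sets for $\sigma'$ and $\tau'$ are wrong in a way that makes step~(c) fail, and the issue is not just a matter of care in the analysis. You propose that $\sigma'$ trigger on coordinates $2,\ldots,n'-2$ and that $\tau'$ trigger on a size-$3$ set \emph{disjoint from the last coordinate}. With such choices the supports of $\sigma'$ and $\tau'$ intersect in far more than the single point $0^{n'}$: since $\tau'$ writes to coordinate~$1$ it cannot read coordinate~$1$ (by the definition of an $r$-function), so coordinate~$1$ is unconstrained on the intersection of the two supports, giving at least $m$ common non-fixed points. Then the extra cycles of $\sigma'$ and $\tau'$ are no longer pairwise disjoint, they do not commute, and in general $(\sigma'\circ\tau')^m\neq(\sigma\circ\tau)^m$.

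The correct requirement is that the trigger set $A$ of $\sigma'$ ($|A|=n'-3$) and the trigger set $B$ of $\tau'$ ($|B|=3$) \emph{partition} $[n']$; combined with $n'\notin A$ and $1\notin B$ this forces $1\in A$ and $n'\in B$ --- precisely the opposite of what you wrote. The paper takes $A=\{1,\ldots,n'-3\}$ and $B=\{n'-2,n'-1,n'\}$. With this choice the non-fixed supports of $\sigma'$ and $\tau'$ meet only at $0^{n'}$, so all the extra $m$-cycles are pairwise disjoint and disjoint from $\sigma$ and $\tau$; they commute with everything, and their $m$-th powers are the identity. Then $(\sigma'\circ\tau')^m=(\sigma\circ\tau)^m$ and $(\tau'\circ\sigma')^m=(\tau\circ\sigma)^m$ follow at once --- no delicate orbit-by-orbit analysis is needed once the trigger sets are chosen correctly. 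For $\alpha_{n'-1}$ the paper does the analogous construction with coordinate $n'-1$ in place of $n'$; one small wrinkle you did not mention is that in that case $\sigma$ and $\tau$ are each products of $m$ disjoint $m$-cycles (one per value of $x_{n'}$), and Lemma~\ref{lem:cycle-isolation} is applied simultaneously to the $m$ matched pairs.
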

\begin{proof}
  For the sake of brevity, we will only describe the procedure to decompose $\alpha_{n'}$ into a set of $2$-functions. The decomposition of $\alpha_{n'-1}$ is completely analogous. (We comment on this more at the end of the proof.) 

  Let $\sigma$ be the following permutation: for any $\<a_1,\ldots,a_{n'}\> \in \Z_m^{n'}$,
\[\sigma \left(\<a_1,\ldots,a_{n'}\>\right)=\<a_1,\ldots,a_{n'-1},a_{n'}+f(a_{1},\ldots,a_{n'-1})\>\]
where the function $f : \Z_m^{n'-1} \to \Z_m$ is defined as follows: 
\[ f(x)=
\begin{cases}
(m+1)/2 & \text{if } x=\<0,\ldots ,0\>, \\
0 & \text{otherwise}.
\end{cases}
\]
Note that $(m+1)/2$ is well defined because $m$ is odd. Further, since $m$ and $(m+1)/2$ are co-prime, $\sigma$ is a $m$ length cycle. Moreover,  $\sigma^2=\alpha_{n'}$.
The description of $\sigma$ uses crucially that $m$ is odd. If $m$ were not odd, then the description fails.  Indeed finding a substitute for $\sigma$ is the main hurdle that needs to be addressed to handle the case when $m$ is even. 

We also consider another permutation $\tau$ such that for any $\<a_1,\ldots,a_{n'}\> \in \Z_m^{n'}$,
\[\tau\left(\<a_1,\ldots,a_{n'}\>\right)=\<a_1+f(a_{2},\ldots,a_{n'}),a_2,\ldots , a_{n'}\>\]
where the function $f$ is the same as in the definition of $\sigma$.
So $\tau$ is also a cycle of length $m$. 
Let $t=(m+1)/2$. Then the cycle decomposition of $\sigma$ and $\tau$ are, 
\begin{align*}
\sigma &= \left(\<0,0,\ldots, 0,0\cdot t\>, \<0,0,\ldots ,0,1\cdot t\>, \<0,0,\ldots ,0,2\cdot t\>, \cdots, \<0,0,\ldots ,0,(m-1)\cdot t\> \right), \text{ and } \\
\tau &= \left(\<0\cdot t, 0,0,\ldots, 0\>, \<1\cdot t ,0,0, \ldots, 0\>, \<2\cdot t, 0, 0, \ldots, 0\>, \cdots, \<(m-1)\cdot t,0,0,\ldots, 0\> \right) 
\end{align*}
where the multiplication is in $\Z_m$.
Observe that $\<0,\ldots,0\>$ is the only common element involved in both cycles $\sigma$ and $\tau$. Therefore, by Lemma~\ref{lem:cycle-isolation},
\[(\sigma \circ \tau)^m\circ (\tau \circ \sigma)^m = \sigma^2=\alpha_{n'}.\]
We note that both $\sigma$ and $\tau$ are $(n'-1)$-functions. Thus so far it is not clear how the above identity helps us to decompose $\alpha_{n'}$. We now define two
more permutations $\sigma'$ and $\tau'$ such that they are themselves decomposable into
$2$-functions and, moreover, $(\sigma' \circ \tau')^m=(\sigma \circ \tau)^m$ and $(\tau' \circ \sigma')^m=(\tau \circ \sigma)^m$.

The permutations $\sigma'$ and $\tau'$ are defined as follows: for any $\<a_1,\ldots,a_{n'}\> \in \Z_m^{n'}$,
\begin{align*}
  \sigma'\left(\<a_1,\ldots,a_{n'}\>\right) &=\<a_1,\ldots , a_{n'-1},a_{n'}+g(a_{1},\ldots,a_{n'-3})\>, \text{ and } \\
  \tau'\left(\<a_1,\ldots,a_{n'}\>\right) &= \<a_1+h(a_{n'-2},a_{n'-1},a_{n'}),a_2,\ldots , a_{n'}\>
  \end{align*}
where $g : \Z_m^{n'-3} \to \Z_m$ is the following map: 
\[g(x)=
\begin{cases}
(m+1)/2 & \text{if } x=\<0,\ldots ,0\>, \\
0  & \text{otherwise}, 
\end{cases}
\]
and $h : \Z_m^{3} \to \Z_m$ is similarly defined but on lower dimension: 
\[h(x)=
\begin{cases}
(m+1)/2 & \text{if } x=\<0,0 ,0\>, \\
0 & \text{otherwise}.
\end{cases}
\]
Since $m$ and $(m+1)/2$ are co-prime, $\sigma'$ is composed of $m^2$ disjoint cycles, each of length $m$. Similarly, $\tau'$ is composed of $m^{(n'-4)}$ disjoint cycles, each of length $m$. Moreover, $\sigma$ is one of the cycles among $m^2$ disjoint cycles of $\sigma'$ and $\tau$ is one of the cycles among $m^{(n'-4)}$ disjoint cycles of $\tau'$. So, we can write
\begin{align*}
\sigma'&=\sigma\circ C_1\circ \cdots \circ C_{m^2-1} , \text{ and }\\
\tau'&=\tau\circ C'_1\circ \cdots \circ C'_{m^{(n'-4)}-1}
\end{align*}
where each of $C_i$'s and $C'_j$'s is a $m$ length cycle.
An important fact regarding $\sigma'$ and $\tau'$ is that the \emph{only} element
moved by both is the all-zeros tuple $\<0,\ldots,0\>$. This is easily seen from
their definitions. Recall we had observed that the all-zeroes is, in fact, moved by
$\sigma$ and $\tau$. In other words, cycles in the set \(\{C_1,\ldots, C_{m^2-1},C'_1, \ldots, C'_{m^{(n'-4)}-1} \}\) are mutually disjoint, as well as they are disjoint from
$\sigma$ and $\tau$.
Thus, using the fact that $C_i$'s and $C'_j$'s are $m$ length cycles, we have
\begin{align*}
(\tau'\circ \sigma')^m & = (\tau\circ \sigma)^m, \text{ and } \\
(\sigma'\circ \tau')^m & = (\sigma \circ \tau)^m.
\end{align*}
Therefore, we can express $\alpha_{n'}$ in terms of $\sigma'$ and $\tau'$, 
\[\alpha_{n'}=\sigma^2=(\sigma \circ \tau)^m\circ (\tau \circ \sigma)^m = (\sigma' \circ \tau')^m\circ (\tau' \circ \sigma')^m. \]
But, by definition, $\sigma'$ and $\tau'$ are an $(n'-3)$-function and a $3$-function, respectively. Furthermore, they satisfy the requirement of Lemma~\ref{lem:general-catalytic-decomposition}. Hence both $\sigma'$ and $\tau'$ can be decomposed into a set of $2$-functions. As a result we obtain a decomposition of $\alpha_{n'}$ into a set of $k_{n'}$ many $2$-functions, where $k_{n'} \le 2m \left(4(n'-3)^2 - 3 + 4 \cdot 3^2 - 3\right) = m\left(8(n'-3)^2 + 60\right) \le 60 \cdot m \left(n'-3\right)^2$.

The permutation $\alpha_{n'-1}$ can be decomposed in a similar fashion. Note that $\alpha_{n'-1}$ is composed of $m$ disjoint $m$ length cycles. Each of the $m$ disjoint cycles can be decomposed using the procedure described above. If we do so,  we get the length $k_{n'-1} = O( m^2 \cdot (n'-3)^2)$, which would suffice for our purpose. 
However, we can improve this bound to $O(m \cdot (n'-2)^2)$ by a slight modification of $\sigma,\tau,\sigma'$ and $\tau'$. Below we define these permutations and leave the rest of the proof to the reader as it is analogous to the proof above. The argument should be carried with $\sigma,\tau,\sigma'$ and $\tau'$ defined as follows: for any $\<a_1,\ldots,a_{n'}\> \in \Z_m^{n'}$,
\begin{align*}
\sigma \left(\<a_1,\ldots,a_{n'}\>\right) &=\<a_1,\ldots , a_{n'-2},a_{n'-1}+f(a_{1},\ldots,a_{n'-2}),a_{n'}\> , \\
\tau\left(\<a_1,\ldots,a_{n'}\>\right) & = \<a_1+f(a_{2},\ldots,a_{n'-1}),\ldots , a_{n'-1},a_{n'}\>,\\
\sigma'\left(\<a_1,\ldots,a_{n'}\>\right) &=\<a_1,\ldots , a_{n'-2},a_{n'-1}+g(a_{1},\ldots,a_{n'-3}),a_{n'}\>, \text{ and }\\
\tau'\left(\<a_1,\ldots,a_{n'}\>\right) &=\<a_1+h(a_{n'-2},a_{n'-1}),a_2,\ldots , a_{n'-1},a_{n'}\>, \\
\end{align*}
where the function $f : \Z_m^{n'-2} \to \Z_m$ is defined as: $f(x)=(m+1)/2$ if $x=\<0,\ldots ,0\>$; otherwise $f(x)=0$, the function $g : \Z_m^{n'-3} \to \Z_m$ is defined as: $g(x)=(m+1)/2$ if $x=\<0,\ldots ,0\>$; otherwise $g(x)=0$ and the function $h : \Z_m^{2} \to \Z_m$ is defined as: $h(x)=(m+1)/2$ if $x=\<0 ,0\>$; otherwise $h(x)=0$. The only difference is now that both $\sigma$ and $\tau$ are composed of $m$ disjoint cycles of length $m$, instead of just one cycle as in case of $\alpha_{n'}$. However every cycle of $\sigma$ has non-empty intersection with exactly one cycle of $\tau$, and furthermore, the intersection is singleton. Hence we can still apply Lemma~\ref{lem:cycle-isolation} with $m$ pairs of $m$ length cycles, where each pair consists of one cycle from $\sigma$ and another from $\tau$ such that they have non-empty intersection.
\end{proof}

\subsubsection*{Step 3: Invocation of Lemma~\ref{lem:cycle-counter}}

To finish the construction we just replace $\alpha_i$'s in $\alpha=\alpha_{n'}\circ \cdots \circ \alpha_{1}$ by $\alpha_{i,k_i}\circ \cdots \circ \alpha_{i,1}$. Take the resulting sequence as $\sigma=\sigma_k\circ \cdots \circ \sigma_1$, where $k=\sum_{i=1}^{n'}k_i \le c_3\cdot m \cdot n'^3$ for some constant $c_3>0$. Now we apply Lemma~\ref{lem:cycle-counter} to get a space-optimal counter over $\Z_m^n$. In Lemma~\ref{lem:cycle-counter}, we set $k'=m^{r'}$ such that $r'$ is the smallest integer for which $m^{r'} \ge k$ and set $\mathcal{D'}=\Z_m^{r'}$. Hence $r'\le \log_mk +1 \le 3\log_m n + c$, for some constant $c > 1$. Since each of $\sigma_i$'s is a $2$-function, by Claim~\ref{obs:r-function-DAT} it can be implemented using a decision assignment tree $T_i$ such that $\Read(T_i)=3$ and $\Write(T_i)=1$. In Lemma~\ref{lem:cycle-counter}, we use the standard space-optimal Gray code over $\mathcal{D'}=\Z_m^{r'}$ as $C'$. The code $C'$ can be implemented using a decision assignment tree $T'$ with $\Read(T')=r'$ and $\Write(T')=1$ (Theorem~\ref{thm:general-Graycode}). Hence the final counter implied from Lemma~\ref{lem:cycle-counter} can be computed using a decision assignment tree $T$ with $\Read(T)\le 4\log_m n $ and $\Write(T)=2$.

\paragraph*{Odd permutations: the real bottleneck}
We saw our decomposition procedure over $\Z_m^{n}$ into $2$-functions (Lemma~\ref{lem:isolating-decomposition}) requires that $m$ be odd.
It is natural to wonder whether such a decomposition exists irrespective of the parity of $m$. That is, we would like to design a set of $2$-functions that generates a cycle of length $m^{n}$ where $m$ is even. Unfortunately, the answer to this question
is, unequivocally, \textsf{NO}. A cycle of length $m^{n}$ is an
\emph{odd permutation} when $m$ is even. However,  Coppersmith and Grossman~\cite{CG75} showed that $2$-functions are \emph{even permutations} for $m=2$. Their proof
is easily seen to be extended to hold for \emph{all} $m$. Hence, $2$-functions can
only generate even permutations. Thus, ruling out the possibility of decomposing
a full cycle into $2$-functions when $m$ is even. In fact, Coppersmith and Grossman~\cite{CG75} showed that $(n-2)$-functions are all even permutations. Therefore, the decomposition remains impossible even if $(n-2)$-functions are allowed. 

We would like to give further evidence that odd permutations are indeed the \emph{bottleneck}
towards obtaining an efficient decision assignment tree implementing a space-optimal
counter over $\Z_2^{n}$.
Suppose $T$ is a decision assignment tree that implements an odd permutation
$\sigma$ over $\Z_2^{n}$ such that $\Read(T) = r$ and $\Write(T) = w$.
From the basic group theory we know that there exists
an even permutation $\tau$ such that $\tau \circ \sigma$ is a cycle of length $2^{n}$. Following the argument in~\cite{CG75}, but using an efficient decomposition of
$(n-3)$-functions into $2$-functions (in particular, Lemma~\ref{lem:general-catalytic-decomposition}), we see that any even permutation can be
decomposed into a set of $2$-functions of size at most $O(2^{n-1}n^{2})$. Indeed,
any even permutation can be expressed as a composition of at most $2^{n-1}$ $3$-cycles, while any $3$-cycle can be decomposed into at most $O(n^2)$ $2$-functions.
We implement $\tau\circ\sigma$ using Lemma~\ref{lem:cycle-counter} to obtain a
space-optimal counter over $\Z_2^{2n+O(\log n)}$ that is computed by a decision
assignment tree with read complexity $n + O(\log n) + r$ and write complexity $w+1$. Thus, savings made on implementing an odd permutation translate directly to the savings possible in the implementation of some space-optimal counter. By the argument in~\cite{Raskin17} we know that $r \in \Omega(n)$. 

Recall the main idea behind $2$-Gray codes in Section~\ref{sec:binarycounter}, when $m=2$, is to generate the cycle of length $2^{n}-1$ efficiently. Since a cycle of length $2^{n}-1$ is an even permutation, we could decompose it into $2$-functions. In particular, we could choose an appropriate $3$-cycle as $\alpha_{n}$ in  Claim~\ref{clm:full-cycle} such that $\alpha$ becomes a cycle of length $2^{n}-1$. However, the counter thus obtained misses out on $O(n^3)$ strings, instead of $O(n)$. 

\subsection{Getting counters for even $m$}
\label{sec:general-counter}
We can combine the results from Theorem~\ref{thm:binary-counter} and Theorem~\ref{thm:odd-counter} to get a counter over $\Z_m^n$ for any even $m$. We have already mentioned in Section~\ref{sec:chinese} that if we have space-optimal quasi-Gray codes over the alphabet $\Z_2$ and $\Z_{\om}$ with $\om$ being odd then we can get a space-optimal quasi-Gray code over the alphabet $\Z_m$ for any even $m$. Unfortunately in Section~\ref{sec:binarycounter}, instead of space-optimal counters we were only able to generate a counter over binary alphabet that misses $O(n)$ many strings. As a consequence we cannot directly apply Theorem~\ref{thm:chinese}. The problem is following. Suppose $m=2^\ell \om$ for some $\ell > 0$ and odd $\om$. By the argument used in the proof of Lemma~\ref{clm:stitch-binary} we know that there is a counter over $(\Z_{2^\ell})^{n-1}$ of the same length as that over $(\Z_2)^{\ell(n-1)}$. Furthermore the length of the counter is of the form $2^{O(\log n)}(2^{n'}-1)$, for some $n'$ that depends on the value of $\ell n$ (see the construction in Section~\ref{sec:binarycounter}). Now to apply Theorem~\ref{thm:chinese} as in the proof of Lemma~\ref{lem:stitch-counter}, $2^{n'}-1$ must be co-prime with $\om$. However that may not always be the case. Nevertheless, we can choose the parameters in the construction given in Section~\ref{sec:binarycounter} to make $n'$ such that $2^{n'}-1$ is co-prime with $\om$. This is always possible because of the following simple algebraic fact. In the following proposition we use the notation $\Z_{\om}^{*}$ to denote the multiplicative group modulo $\om$ and $\ord_{\om}(e)$ to denote the order of any element $e \in \Z_{\om}^{*}$.
\begin{proposition}
\label{prop:co-prime}
For any $n \in \N$ and odd $\om \in \N$, consider the set $\mathcal{S}=\{n,n+1,\cdots,n+\ord_{\om}(2)-1\}$. Then there always exists an element $n'\in \mathcal{S}$ such that $2^{n'}-1$ is co-prime to $\om$.
\end{proposition}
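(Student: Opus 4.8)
The plan is to exhibit the required $n'$ explicitly rather than to count. Set $d:=\ord_{\om}(2)$; this is well defined because $\om$ is odd, so $2$ is a unit modulo $\om$. (If $\om=1$ the claim is trivial, as every integer is coprime to $1$, so assume $\om>1$.) The set $\mathcal{S}=\{n,n+1,\ldots,n+d-1\}$ is a block of $d$ consecutive integers, hence a complete residue system modulo $d$; in particular it contains exactly one element $n'$ with $n'\equiv 1\pmod d$. I claim this $n'$ satisfies $\gcd(2^{n'}-1,\om)=1$.

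To verify this, let $p$ be any prime divisor of $\om$ and write $e:=\ord_p(2)$. First, $e\ge 2$, since $p$ is odd and hence $2\not\equiv 1\pmod p$. Second, $e\mid d$: reducing the congruence $2^{d}\equiv 1\pmod{\om}$ modulo $p$ gives $2^{d}\equiv 1\pmod p$, so $e=\ord_p(2)$ divides $d$. Third, since $n'\equiv 1\pmod d$ and $e\mid d$, we get $n'\equiv 1\pmod e$, and because $e\ge 2$ this forces $e\nmid n'$. Finally, $p\mid 2^{n'}-1$ if and only if $e\mid n'$; as $e\nmid n'$, we conclude $p\nmid 2^{n'}-1$. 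This holds for every prime $p\mid\om$, so no prime divides both $\om$ and $2^{n'}-1$, which is exactly the assertion $\gcd(2^{n'}-1,\om)=1$.

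The only real step here is spotting the right candidate for $n'$. A tempting first approach would be a union-bound (or covering-system) argument: bound, for each prime factor $p$ of $\om$, how many elements of $\mathcal{S}$ are ``killed'' by the condition $\ord_p(2)\mid k$, and hope these do not cover all of $\mathcal{S}$. That route is messier and, worse, not obviously conclusive, since the relevant densities $\sum_{p\mid\om}1/\ord_p(2)$ can exceed $1$. The point of the argument above is that counting is unnecessary: any integer in the residue class of $1$ modulo $\ord_{\om}(2)$ simultaneously avoids every divisibility condition $\ord_p(2)\mid k$ (because $\ord_p(2)\mid \ord_\om(2)$ and $\ord_p(2)\ge 2$), and a window of $\ord_{\om}(2)$ consecutive integers always contains exactly one such integer.
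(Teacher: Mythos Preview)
Your proof is correct and picks the same witness $n'$ as the paper, namely the unique element of $\mathcal{S}$ congruent to $1$ modulo $d=\ord_\om(2)$. The only difference is in the verification: your prime-by-prime check works, but the paper observes more directly that for this $n'$ one has $2^{n'}\equiv 2\pmod\om$, i.e.\ $2^{n'}-1\equiv 1\pmod\om$, whence $\gcd(2^{n'}-1,\om)=\gcd(1,\om)=1$ without any factorisation of $\om$.
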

\begin{proof}
Inside $\Z_{\om}^*$, consider the cyclic subgroup $G$ generated by $2$, i.e., $G=\{1,2,\cdots,2^{\ord_{\om}(2)}\}$. Clearly, $\{2^{s} \pmod \om \mid s \in \mathcal{S}\}=G$. Hence there exists an element $n'\in \mathcal{S}$ such that $2^{n'}-1 \pmod \om = 1$. It is clear that if $\gcd(2^{n'}-1,\om)=1$ then we are done. Now the proposition follows from the following easily verifiable fact: for any $a,b,c \in \N$, if $a\equiv b \pmod c$ then $\gcd(a,c)=\gcd(b,c)$.
\end{proof}
So for any $n \in \N$, in the proof of Lemma~\ref{lem:stitch-counter} we take the first coordinate to be $\Z_m^i$ for some suitably chosen $i\ge 1$ instead of just $\Z_m$. The choice of $i$ will be such that the length of the counter over $(\Z_{2^\ell})^{n-i}$ will become co-prime with $\om$. The above proposition guarantees the existence of such an $i \in [\ord_{\om}(2)]$. Hence we can conclude the following.
\begin{theorem}
\label{thm:general-counter}
For any even $m \in \N$ so that $m=2^{\ell} \om$ where $\om$ is odd, there is a quasi-Gray code $C$ over $\Z_m^n$ of length at least $m^n-O(n\om^n)$, that can be implemented using a decision assignment tree which reads at most $O(\log_m n+\ord_{\om}(2))$ coordinates and writes at most $3$ coordinates.
\end{theorem}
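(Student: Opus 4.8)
The plan is to glue two counters together through the Chinese Remainder Theorem for counters (Theorem~\ref{thm:chinese}), exactly as in the proof of Lemma~\ref{lem:stitch-counter}: the $2$-Gray code over a $2$-power alphabet coming from the linear-transformation construction of Section~\ref{sec:binarycounter} (Theorem~\ref{thm:lt-counter}, Theorem~\ref{thm:binary-counter}), and the space-optimal $2$-Gray code over the odd alphabet $\Z_{\om}$ from Theorem~\ref{thm:odd-counter}. Write $m=2^{\ell}\om$ with $\om$ odd. Using the coordinatewise isomorphism $\Z_m\cong\Z_{2^{\ell}}\times\Z_{\om}$ together with the binary encoding of $\Z_{2^{\ell}}$, identify $\Z_m^n$ with $\Z_m\times(\Z_{2^{\ell}})^{n-1}\times(\Z_{\om})^{n-1}$, i.e.\ with $\Z_m\times\Z_2^{\ell(n-1)}\times(\Z_{\om})^{n-1}$ up to re-encoding of cells (the two blocks share cells $2,\dots,n$, as in Lemma~\ref{lem:stitch-counter}). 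On the single leading $\Z_m$ coordinate we run any space-optimal cyclic Gray code over $\Z_m$ (Theorem~\ref{thm:general-Graycode}); it serves as the instruction pointer, has write complexity $1$, and satisfies $\ell_1=m\ge 2=r-1$ with $r=3$. On the block $(\Z_{2^{\ell}})^{n-1}$ we run the binary counter of Theorem~\ref{thm:binary-counter} over $\Z_2^{\ell(n-1)}$, re-encoded onto $\Z_{2^{\ell}}$-cells by the block-encoding argument from the proof of Lemma~\ref{clm:stitch-binary}; this preserves the length, keeps the write complexity at $2$, and keeps the read complexity (now counted in $\Z_{2^{\ell}}$-cells) at $O(\log n)$. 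On the block $(\Z_{\om})^{n-1}$ we run the space-optimal counter of Theorem~\ref{thm:odd-counter}, of length $\om^{n-1}$, read complexity $4\log_{\om}(n-1)$ and write complexity $2$.

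To apply Theorem~\ref{thm:chinese} we must have the lengths $\ell_2$ (of the $2$-power block) and $\ell_3=\om^{n-1}$ co-prime; since $\om$ is odd this is the requirement that $\ell_2$ have no prime factor in common with $\om$. By Theorem~\ref{thm:lt-counter}, $\ell_2=2^{r'}\bigl(2^{n''}-1\bigr)$, where $n''$ is the dimension of the companion matrix and $r'$ the width of the binary instruction pointer, with $n''+r'=\ell(n-1)$; the odd factor to control is $2^{n''}-1$. Since $r'$ only has to satisfy $r'\ge\log_2 k$ for $k$ the number of elementary factors of the matrix, we may enlarge $r'$ (shrinking $n''$ by the same amount) over a window of $\ord_{\om}(2)$ consecutive integers, and by Proposition~\ref{prop:co-prime} at least one choice of $n''$ in that window makes $\gcd\bigl(2^{n''}-1,\om\bigr)=1$. (Equivalently, in the spirit of the sketch preceding the statement, one may instead take the leading block to be $\Z_m^{i}$ for a suitable $i\in[\ord_{\om}(2)]$.) This costs at most $\ord_{\om}(2)$ extra cell reads, since the read complexity of the $2$-power counter is at most $r'+2$, and leaves $r'\le\log_2 n+C_m$ for a constant $C_m$ depending only on $m$.

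Fixing such a choice, Theorem~\ref{thm:chinese} yields a cyclic counter $C$ over $\Z_m^n$ of length $\ell_1\ell_2\ell_3$, computed by a decision assignment tree $T$ with
\[\Read(T)=1+\max\{\,r'+2,\ 4\log_{\om}(n-1)\,\}=O(\log_m n+\ord_{\om}(2))\]
and $\Write(T)=1+\max\{2,2\}=3$; in particular $C$ is a $3$-Gray code. Its length is
\[\ell_1\ell_2\ell_3=m\cdot\bigl(2^{\ell(n-1)}-2^{r'}\bigr)\cdot\om^{n-1}=m^n-2^{\ell+r'}\,\om^n,\]
and since $2^{\ell+r'}\le 2^{\ell}\cdot 2^{C_m}\cdot n=O_m(n)$ this is $m^n-O(n\,\om^n)$, as claimed.

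The only genuinely delicate point is the co-primality hypothesis of Theorem~\ref{thm:chinese}: the length of the $2$-power counter inevitably carries the factor $2^{n''}-1$, which for the naive dimension split need not be co-prime with $\om$. Proposition~\ref{prop:co-prime} is exactly what is needed to repair this by a bounded ($O_m(1)$) adjustment of the split, and that adjustment is precisely what produces the additive $\ord_{\om}(2)$ term in the read complexity and the $2^{\ell+r'}$ (rather than $O(n)$) slack in the count of missed strings. All remaining steps are routine bookkeeping with the composition results of Sections~\ref{sec:chinese} and~\ref{sec:binarycounter} and with Theorems~\ref{thm:binary-counter} and~\ref{thm:odd-counter}.
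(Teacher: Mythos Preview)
Your proposal is correct and follows essentially the same approach as the paper: combine the binary-alphabet counter of Section~\ref{sec:binarycounter} with the space-optimal odd-alphabet counter of Theorem~\ref{thm:odd-counter} via Theorem~\ref{thm:chinese} (as in Lemma~\ref{lem:stitch-counter}), and use Proposition~\ref{prop:co-prime} to ensure the co-primality hypothesis. The only cosmetic difference is where you place the $\ord_{\om}(2)$-sized adjustment: you primarily enlarge the internal pointer width $r'$ of the binary counter (Theorem~\ref{thm:lt-counter}) to shift $n''$ into a window where $\gcd(2^{n''}-1,\om)=1$, whereas the paper enlarges the leading $\Z_m$-block to $\Z_m^{i}$ for a suitable $i\in[\ord_{\om}(2)]$; you yourself note these are interchangeable, and both yield the same read/write/length bounds.
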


\begin{remark}
So far we have only talked about implementing $next(\cdot,\cdot)$ function for any counter. However we can similarly define complexity of computing $prev(\cdot,\cdot)$ function in the decision assignment tree model. We would like to emphasize that all our techniques can also be carried over to do that. To extend the result of Section~\ref{sec:binarycounter}, we take the inverse of the linear transformation matrix $A$ and follow exactly the same proof to get the implementation of $prev(\cdot,\cdot)$. As a consequence we achieve exactly the same bound on read and write complexity. Now for the quasi-Gray code over $\Z_m$ for any odd $m$, instead of $\alpha$ in the Step $1$ (in Section~\ref{sec:oddcounter}), we simply consider the $\alpha^{-1}$ which is equal to $\alpha_1^{-1}\circ \alpha_2^{-1}\circ \cdots \circ \alpha_{n'}^{-1}$. Then we follow an analogous technique to decompose $\alpha_i^{-1}$'s and finally invoke Lemma~\ref{lem:cycle-counter}. Thus we get the same bound on read and write complexity.
\end{remark}

\section{Lower Bound Results for Hierarchical Counter}
\label{sec:LB-hierarchy}
We have mentioned that the best known lower bound on the depth of a decision assignment tree implementing any space-optimal quasi-Gray code is $\Omega(\log_m n)$. Suppose we are interested in getting a space-optimal quasi-Gray code over the domain $\Z_{m_1}\times \cdots \times \Z_{m_n}$.
A natural approach to achieve the $O(\log n)$ bound
is to build a decision assignment tree with an additional restriction that a
variable labels \emph{exactly} one internal node. 
We call any counter associated with such a special type of decision assignment tree by \emph{hierarchical counter}. We note that the aforementioned
restriction doesn't guarantee an $O(\log n)$ bound on the depth; for example,
a decision assignment tree implementing a counter that increments by 1
to obtain the next element, reads a variable exactly once. Nevertheless, informally,
 one can argue that if a hierarchical counter is lopsided it must mimic the decision assignment tree in the previous example.  
We show that a space-optimal hierarchical counter over strings of length $3$ exists if and only if $m_2$ and $m_3$ are co-prime.

\begin{theorem}
\label{thm:hierarchy-LB}
Let $m_1 \in \{2,3\}$ and $m_2 , m_3 \geq 2$. There is no space-optimal hierarchical counter over $\Z_{m_1} \times \Z_{m_2} \times \Z_{m_3}$ unless $m_2$ and $m_3$ are co-prime. 
\end{theorem}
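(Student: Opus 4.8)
The plan is to pass from the combinatorial object (a space-optimal hierarchical counter) to the permutation $\sigma := next(C,\cdot)$ of $\Z_{m_1}\times\Z_{m_2}\times\Z_{m_3}$ it computes, and to show that the hierarchical structure of the tree forces $\sigma$ to have a cycle structure incompatible with $\gcd(m_2,m_3)>1$. Write $M=m_1m_2m_3$; since the counter is space-optimal, $\sigma$ is a single $M$-cycle. First I would put the tree into a normal form: the root is labelled $x_1$; the value of $x_1$ is rewritten by one fixed cyclic permutation of $\Z_{m_1}$ (so, after relabelling the first alphabet, $x_1\mapsto x_1+1 \bmod m_1$); and below the root the branch for each value $v\in\Z_{m_1}$ is either a leaf — which, by space-optimality, rewrites only $x_1$ — or reads exactly one further coordinate $x_j$, $j\in\{2,3\}$, and rewrites only $x_1$ and $x_j$, the induced action on $x_j$ being a bijection of $\Z_{m_j}$ (forced by injectivity of $\sigma$). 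Consequently $\sigma$ carries the fibre $F_v:=\{v\}\times\Z_{m_2}\times\Z_{m_3}$ onto $F_{v+1}$, and, identifying each $F_v$ with $\Z_{m_2}\times\Z_{m_3}$, acts on it by a permutation $t_v$ that is the identity, or $\phi_v\times\mathrm{id}$ with $\phi_v\in\mathrm{Sym}(\Z_{m_2})$, or $\mathrm{id}\times\psi_v$ with $\psi_v\in\mathrm{Sym}(\Z_{m_3})$.

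Next I would analyse $\sigma^{m_1}$. Since $\sigma$ sends $F_v$ to $F_{v+1}$, $\sigma^{m_1}$ fixes every fibre setwise and its restriction to $F_v$ is the composition of $t_v,t_{v+1},\ldots,t_{v+m_1-1}$ (indices mod $m_1$) in that order. Every factor acts nontrivially on at most one of the two coordinates, and maps acting on distinct coordinates commute, so this composition collapses to a product permutation $\Phi_v\times\Psi_v$ with $\Phi_v\in\mathrm{Sym}(\Z_{m_2})$ and $\Psi_v\in\mathrm{Sym}(\Z_{m_3})$. On the other hand, as $\sigma$ is an $M$-cycle, $\sigma^{m_1}$ has exactly $\gcd(m_1,M)=m_1$ cycles, each of length $M/m_1=m_2m_3$; these $m_1$ cycles lie inside the $m_1$ fibres, so each fibre is precisely one of them. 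Hence $\Phi_v\times\Psi_v$ is a single cycle of length $m_2m_3$ on $\Z_{m_2}\times\Z_{m_3}$.

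To finish I would use the elementary fact that a product permutation $\phi\times\psi$ on $A\times B$ is a single cycle of length $|A|\,|B|$ only when $\gcd(|A|,|B|)=1$: the orbit of a point $(a,b)$ has size $\mathrm{lcm}(\ell_1,\ell_2)$, where $\ell_1\le|A|$ and $\ell_2\le|B|$ are the lengths of the $\phi$-cycle through $a$ and the $\psi$-cycle through $b$, and $\mathrm{lcm}(\ell_1,\ell_2)\le\ell_1\ell_2\le|A|\,|B|$ with equality only if $\ell_1=|A|$, $\ell_2=|B|$ and $\gcd(\ell_1,\ell_2)=1$. Applying this to $\Phi_v\times\Psi_v$ with $A=\Z_{m_2}$, $B=\Z_{m_3}$ yields $\gcd(m_2,m_3)=1$, which is the contrapositive of the theorem. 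The converse — existence of a space-optimal hierarchical counter when $m_2,m_3$ are co-prime — is exactly the content of Theorem~\ref{thm:chinese}.

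The step I expect to be the real obstacle is the first one: fixing the precise definition of \emph{hierarchical counter} and deriving the fibre picture from it, in particular that the root reads $x_1$, that $x_1$ is updated by a value-independent cyclic permutation, and that each root branch inspects at most one of $x_2,x_3$. (Indeed, if arbitrarily deep caterpillar-shaped trees were allowed, the induced $t_v$ need not be a product permutation and the statement would fail.) Everything after the normal form is routine group theory. Since $m_1\in\{2,3\}$ there are only a handful of admissible tree shapes, so as an alternative to the general argument one can simply enumerate them and check in each case that $\sigma^{m_1}$ restricted to a fibre is a product permutation, after which the $\gcd$ obstruction above closes the proof identically.
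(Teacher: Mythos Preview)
Your endgame via the product-permutation obstruction is clean and, for $m_1=2$, your outline is essentially the paper's argument recast: the paper proves directly that $a_{i+1}\neq a_i$ always and that each of $x_2,x_3$ recurs with period $2m_2,2m_3$, which is your fibre/$\sigma^2$ statement unwound. The real problem is $m_1=3$, where the normal form you rely on is in fact false. Under the notion of hierarchical counter the paper's proof actually uses (root $x_1$; each branch may read one further coordinate, not necessarily distinct across branches; leaves update $x_1$ and the coordinate read), the new value of $x_1$ can genuinely depend on the second variable. Concretely, with $m_1=3$, $m_2=2$, $m_3=3$: on $x_1=0$ read $x_2$ and set $(x_1,x_2)\gets(1,1)$ if $x_2=0$ and $(2,1)$ if $x_2=1$; on $x_1=1$ read $x_3$ and set $(x_1,x_3)\gets(0,x_3{+}1)$; on $x_1=2$ read $x_2$ and set $(x_1,x_2)\gets(1,0)$ if $x_2=0$ and $(2,0)$ if $x_2=1$. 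This is a single $18$-cycle, yet the $x_1$-sequence is $0,1,0,2,2,1,0,\ldots$, so $\sigma$ does not carry $F_v$ to $F_{v+1}$, $\sigma^3$ does not fix the fibres, and your fallback (``check $\sigma^{m_1}$ is a product on each fibre'') fails for the same reason.

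The paper avoids this entirely: for $m_1=3$ it makes no attempt at a fibre picture but instead establishes structural constraints on the transitions $T_i\to T_j$ (e.g.\ if $T_1\to T_0$ then $T_2\not\to T_0$, $T_1\not\to T_2$, $T_0\not\to T_0$) and then \emph{prunes} the tree --- short-circuiting the subtree $T_2$ by iterating its action on $x_2$ until control exits to $x_1=1$ --- to manufacture a space-optimal hierarchical counter over $\Z_2\times\Z_{m_2}\times\Z_{m_3}$, reducing to the case already handled. If you want to salvage your approach, the right object is not $\sigma^{m_1}$ but the first-return map to $F_0$: once one knows $T_1\to T_0$ exclusively and $T_0,T_2\not\to T_0$, between consecutive visits to $x_1=0$ one passes through $x_1=1$ exactly once (updating $x_3$ by a fixed bijection) and through $x_1\in\{0,2\}$ otherwise (updating only $x_2$), so the first-return map has the form $(b,c)\mapsto(h(b),g_1(c))$ and your $\gcd$ obstruction applies. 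But establishing that transition structure is precisely the work the paper does, so the normal-form step cannot be bypassed.
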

In fact, we conjecture that the above theorem holds for all $m_1 \geq 2$.
\begin{conjecture}
There is no space-optimal hierarchical counter over $\Z_{m_1} \times \Z_{m_2} \times \Z_{m_3}$ unless $m_2$ and $m_3$ are co-prime. 
\end{conjecture}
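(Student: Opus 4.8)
When $\gcd(m_2,m_3)=1$ such a counter exists: apply Theorem~\ref{thm:chinese} with $\ell_1=m_1\ge 2=r-1$, $\ell_2=m_2$, $\ell_3=m_3$ and the trivial length-$m_2$ and length-$m_3$ cycles on $\Z_{m_2}$ and $\Z_{m_3}$; the counter it produces is hierarchical by construction (cf. the paragraph after the proof of Theorem~\ref{thm:chinese}). So the entire content of the conjecture is the converse for \emph{every} $m_1\ge 2$, and the plan is to push the structural argument behind Theorem~\ref{thm:hierarchy-LB} (which settles $m_1\in\{2,3\}$) up to arbitrary $m_1$.

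The first step is a normal form. Fix a hierarchical decision assignment tree $T$ computing a space-optimal counter on $\Z_{m_1}\times\Z_{m_2}\times\Z_{m_3}$. Since each of the three variables labels exactly one internal node, after permuting coordinates the read pattern of $T$ is one of boundedly many ``caterpillar'' shapes rooted at $x_1$. Using that $next(C,\cdot)$ is a bijection and a single $m_1m_2m_3$-cycle, I would derive: there is a length-$m_1$ cyclic order $C_1$ on $\Z_{m_1}$ along which the root coordinate advances at every step, and for each root value $a$ a permutation $M_a\in\mathcal{S}_{m_2m_3}$ with $next(C,\langle a,y\rangle)=\langle next(C_1,a),M_a(y)\rangle$, where every $M_a$ is \emph{coordinate-respecting}: it fixes the $x_2$-coordinate, or it fixes the $x_3$-coordinate, or --- for at most one root value $a^\star$, namely the one whose internal node carries a further internal node beneath it --- it is a \emph{cross gadget} that reads one of $x_2,x_3$ and permutes the other. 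Space-optimality then forces the one-revolution map $g:=M_{a_{m_1}}\circ\cdots\circ M_{a_1}$ on $\Z_{m_2}\times\Z_{m_3}$ to be a single $m_2m_3$-cycle.

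The barrier is then the following. Let $p$ be a prime dividing $\gcd(m_2,m_3)$. If no cross gadget occurs, all $M_a$ fix a coordinate and operations on the two coordinates commute, so $g$ has product form $\Phi\times\Psi$ with $\Phi\in\mathcal{S}_{m_2}$, $\Psi\in\mathcal{S}_{m_3}$; a product permutation on $\Z_{m_2}\times\Z_{m_3}$ is a single cycle only if the two factor sizes are coprime, so $g$ is not an $m_2m_3$-cycle --- contradiction. If the cross gadget occurs, it occurs at a single $a^\star$; conjugating by the $M_a$ applied ``before'' $a^\star$ (the isolation manoeuvre of Lemma~\ref{lem:cycle-isolation}) moves all of them past $a^\star$ without changing the cycle type of $g$, reducing the problem to: can a product-form permutation composed with one cross gadget be an $m_2m_3$-cycle? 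The natural way to answer no is a $\bmod\,p$ invariant --- project the coordinate that the cross gadget \emph{permutes} onto $\Z_p$; the product-form part descends, still acting on one coordinate independently of the other, while the cross gadget descends to a map acting on the $p$-element fibres ``one at a time'', so the induced permutation of the quotient $\Z_{m_2}\times\Z_p$ (or $\Z_p\times\Z_{m_3}$) is again of restricted, non-full-cycle type, whereas a genuine $m_2m_3$-cycle would have to induce a full cycle there.

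The hard part is exactly the interaction of a large root alphabet with the gadget. For $m_1\in\{2,3\}$ the cyclic order $C_1$ is essentially unique and at most one $M_a$ is non-trivial, so $g$ is (conjugate to) the cross gadget itself and can be analysed directly; for general $m_1$ one must (i) prove that the root coordinate genuinely advances by a fixed cyclic rule rather than being permuted in some block-dependent fashion the tree could in principle realise --- this needs a careful exhaustion over the caterpillar shapes against the bijectivity constraints --- and (ii) control arbitrarily many non-trivial single-coordinate $M_a$'s occurring before $a^\star$ and be sure that pushing them past the gadget and then performing the $\bmod\,p$ descent really captures \emph{every} achievable $g$. The subcase in which the cross gadget's action on the coordinate it \emph{reads} is itself a long cycle is the delicate point that already demands care for small $m_1$, and is, I expect, the main reason the statement remains only a conjecture.
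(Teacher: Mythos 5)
You have not produced a proof, and indeed you say so yourself: your steps (i) and (ii) are left open, and they are precisely where all the difficulty lies, so what you have is a plan rather than an argument (the paper is in the same position: it states this as a conjecture and proves only $m_1\in\{2,3\}$, Theorem~\ref{thm:hierarchy-LB}). The portion you do carry out is fine but not new in substance: if every $M_a$ touches only one of the last two coordinates and the root really does advance along a fixed $m_1$-cycle independently of them, then the one-revolution map is a product $\Phi\times\Psi$ and cannot be a single $m_2m_3$-cycle unless $\gcd(m_2,m_3)=1$; this is the same arithmetic as the $\lcm(2m_2,2m_3)$ count in the paper's $m_1=2$ case. Establishing that normal form — ruling out value-dependent transitions of the root coordinate — is exactly what the paper has to fight for even for $m_1\in\{2,3\}$ (items 1--2 of the $m_1=2$ argument, and properties (i)--(iii) plus the pruning/splicing step for $m_1=3$), and your sketch adds nothing that would make it go through for larger $m_1$.

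More seriously, your ``cross gadget'' branch is not just unfinished but unsalvageable as described. If caterpillar-shaped trees (where $x_3$ is read below $x_2$, which is read below $x_1$) are admitted under the reading ``each variable labels exactly one internal node,'' then the conclusion you are aiming for is simply false. Take $m_1=m_2=m_3=2$ and the tree: at the root read $x_1$; if $x_1=1$ set $x_1\gets 0$; if $x_1=0$ read $x_2$; if $x_2=1$ set $x_1\gets 1$, $x_2\gets 0$; if $x_2=0$ read $x_3$ and set $x_1\gets 1$, $x_2\gets 1$, $x_3\gets 1-x_3$. The induced map sends $(1,y,z)\mapsto(0,y,z)$, $(0,1,z)\mapsto(1,0,z)$, $(0,0,z)\mapsto(1,1,1-z)$, and one checks it is the single $8$-cycle $(0,0,0)\to(1,1,1)\to(0,1,1)\to(1,0,1)\to(0,0,1)\to(1,1,0)\to(0,1,0)\to(1,0,0)\to(0,0,0)$: a space-optimal counter with $m_2=m_3=2$ not co-prime, in which, in your own normal form, $C_1=(0,1)$, $M_1=\mathrm{id}$ and $M_0$ is exactly your cross gadget, itself a full $4$-cycle on $\Z_2\times\Z_2$. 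Hence no $\bmod\,p$ invariant of the kind you propose can show that the gadget case has ``non-full-cycle type.'' The way out is that the paper's intended notion of hierarchical counter (the structure inherited from the construction after Theorem~\ref{thm:chinese}, reflected in the opening ``without loss of generality'' of the proof of Theorem~\ref{thm:hierarchy-LB}) reads at most one coordinate besides $x_1$ per step, so the caterpillar is excluded; under that definition your gadget case is vacuous and the only remaining obstacle is your step (i), which stays open — under your looser reading, the statement you set out to prove is refuted by the example above.
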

It follows already from Theorem~\ref{thm:chinese} that a space-optimal
hierarchical counter exists when $m_2$ and $m_3$ are co-prime. Hence Theorem~\ref{thm:hierarchy-LB} establishes the necessity of co-primality in Theorem~\ref{thm:chinese}. It also provides a theoretical explanation behind the already known fact, derived from an exhaustive search in~\cite{BGPS14}, that there exists no space-optimal counter over $(\Z_2)^3$ that reads at most $2$ bits to generate the next element. 
\begin{proof}[Proof of Theorem~\ref{thm:hierarchy-LB}]
  We first establish the proof when $m_1=2$, and then reduce the case when
  $m_1=3$ to $m_1=2$.

  Following the definition of hierarchical counter, without loss of generality, we assume that it reads the second coordinate if the value of the first coordinate is $0$; otherwise it reads the third coordinate. Let $C=(\<a_0,b_0,c_0\>, \ldots, \<a_{k-1},b_{k-1},c_{k-1}\>)$ be the cyclic sequence associated with a space-optimal hierarchical counter over $\Z_{2} \times \Z_{m_2} \times \Z_{m_3}$. We claim that for any two consecutive elements in the sequence, $\<a_i,b_i,c_i\>$ and $\<a_{(i+1) \bmod k},b_{(i+1) \bmod k},c_{(i+1) \bmod k}\>$, the following must hold: 
  \begin{enumerate}
  \item If $a_i=0$, then $b_{i+1} \neq b_i$. For the sake of contradiction suppose not, i.e, $b_{i+1} = b_i$. Since $a_i=0$ we do not read the third coordinate, thus, implying $c_{i+1}=c_i$. Hence, $a_i \neq a_{i+1} = 1$. Therefore, we observe that the second coordinate never changes henceforth. Thence it must be that 
    $b_1=b_2=\cdots = b_k$. We thus contradict space-optimality.
    Similarly, we can argue if $a_i=1$, then $c_{i+1} \neq c_i$.
  \item For $i \in \{0,1,\ldots,k-1\}$, $a_{(i+1)\bmod k} \neq a_i$. Assume otherwise. Let $i \in \{0\}\cup[k-1]$ be such that $a_{(i-1)\bmod k} \neq a_i = a_{(i+1)\bmod k}$. Clearly, such an $i$ exists, else $a_1=a_2=\cdots =a_k$ which contradicts space-optimality. We are now going to argue that there exist two distinct elements in the domain $\Z_{2} \times \Z_{m_2} \times \Z_{m_3}$ such that they have the   same successor in $C$, thus contradicting that $C$ is generated by a space-optimal counter. For the sake of simplicity let us assume $a_i = 0$. The other case
    is similar. Since $a_i=0$, we have $a_{i-1}=1$ and $a_{i+1} = 0$. Thus, we
    have the tuple $\<0,b_{i+1},c_{i+1}\>$ following the tuple $\<0,b_i,c_i\>$ in $C$. We claim that the successor of the tuple $\<1,b_{i+1},c_{i-1}\>$ is also
     $\<0,b_{i+1},c_{i+1}\>$. It is easily seen because $c_{i+1} = c_i$.  
  \end{enumerate}

  The second item above says that the period of each element in the first coordinate is 2. That is, in the sequence $C$ the value in the first coordinate repeats every second step. We will now argue that the value in the second coordinate (respectively, third coordinate)
  repeats every $2m_2$ steps (respectively, $2m_3$ steps). We will only establish the claim for the second coordinate, as the argument for the third coordinate is similar. First of all observe that $C$ induces a cycle on the elements of $\Z_{m_2}$. In particular, Consider the sequence $b_0, b_2, b_4, \ldots , b_\ell$
  such that $b_\ell$ is the first time when an element in the sequence
  has appeared twice. First of all, for $C$ to be a valid cycle $b_\ell$ must be equal to $b_0$. Secondly, $b_0, b_2, \ldots, b_{\ell-2}$ are the only distinct elements that appear in the second coordinate throughout $C$. This is because
  the change in the second coordinate depends, deterministically, on what value it holds prior to the change. Thus, for $C$ to be space-optimal $\ell/2$ must be equal to $m_2$. Hence, an element in the second coordinate repeats every $2m_2$ steps in $C$.
  
  Therefore, for the cyclic sequence $C$, we get that $k=\lcm(2m_2,2m_3)=2m_2m_3/\gcd(m_2,m_3)$. Hence if $m_2$ and $m_3$ are not co-prime then $k<2m_2m_3$ implying $C$ to be not space-optimal. 

  We now proceed to  the case $m_1=3$.  As mentioned in the beginning, we will show that from any space-optimal hierarchical counter over $\Z_{3} \times \Z_{m_2} \times \Z_{m_3}$ we can obtain a space-optimal hierarchical counter over $\Z_{2} \times \Z_{m_2} \times \Z_{m_3}$.

  Let $C=(\<a_0,b_0,c_0\>, \ldots, \<a_{k-1},b_{k-1},c_{k-1}\>)$ be the cyclic sequence associated with a space-optimal hierarchical counter $T$ over $\Z_{3} \times \Z_{m_2} \times \Z_{m_3}$. We query the first coordinate, say $x_1$, at the root of $T$.
  Let $T_0$, $T_1$, and $T_2$ be the three subtrees rooted at $x_1 = 0$, $1$, and $2$, respectively. Without loss of generality, we assume that the roots of
  $T_0$ and $T_2$ are labeled by the second coordinate, say $x_2$, and the
  root of $T_1$ is labeled by the third coordinate, say $x_3$. We say that
  there is a \emph{transition} from $T_i$ to $T_j$ if there exists a leaf in $T_i$ labeled with an instruction to change $x_1$ to $j$. We will use the shorthand $T_i \to T_j$ to denote this fact. For $C$ to be space-optimal there must be a transition from $T_1$ to either $T_0$ or $T_2$ or both. 
  We now observe some structural properties of $T$.  We first assume that there is a transition from $T_1$ to $T_0$. (The other case when there is a transition from $T_0$ to $T_2$ will be dual to this.)    
  \begin{enumerate}
  \item[\textit{(i)}] If $T_1 \to T_0$, then $T_2 \not\to T_0$. Suppose not, and let $\<2,x_2,\ast\>$ goes to $\<0,x'_2,\ast\>$. Since $T_1 \to T_0$, we have
    $\<1,\ast , x_3\>$ goes to $\<0, \ast ,x'_3\>$. It is easily seen that
    both $\<2,x_2,x'_3\>$ and $\<1,x'_2,x_3\>$ have the same successor in $C$ which is $\<0,x'_2, x'_3\>$. (Notice similarly we could also argue that if $T_1 \to T_2$, then $T_0 \not\to T_2$.). 
  \item[\textit{(ii)}] If $T_1 \to T_0$, then $T_1 \not\to T_2$. Suppose not, then from item \textit{(i)} we know that $T_2 \not\to T_0$ and $T_0 \not\to T_2$. Thus, in the sequence $C$ two tuples with first coordinate $0$ or $2$ is separated by a tuple with the first coordinate $1$. Since there are exactly equal number of tuples with a fixed first coordinate, $C$ can not be space-optimal.
  \item[\textit{(iii)}] If $T_1 \to T_0$, then $T_0 \not\to T_0$ and $T_1 \not\to T_1$. The argument is similar to item \textit{(i)}. 
  \end{enumerate}

  We now prune $T$ to obtain a space-optimal hierarchical counter $T'$ over $\Z_{2} \times \Z_{m_2} \times \Z_{m_3}$ as follows. We first remove the whole subtree $T_2$, i.e., only $T_0$ and $T_1$ remain incident on $x_1$. We don't make any change to $T_1$. Now we change instructions at the leaves in $T_0$ to complete
  the construction of $T'$. Let $\ell$ be a leaf in $T_0$ such that it is labeled with the assignment: $x_1 \gets 2$ and $x_2 \gets b$  for some $b \in \Z_{m_2}$. As a thought experiment, let us traverse $T_2$ in $T$ until we reach a leaf in $T_2$ with assignments of the form $x_1 \gets 1$ and $x_2 \gets b'$ for some $b' \in \Z_{m_2}$. In this case, we change the content of the leaf $\ell$ in $T_0$ in $T'$ to $x_1 \gets 1$ and $x_2 \gets b'$. The fact that $T'$ is space-optimal and hierarchical follows easily from properties \textit{(i) - (iii)} of $T$, and it being space-optimal and hierarchical. This completes the proof. 
\end{proof}

\section*{Acknowledgments}
Authors would like to thank Gerth St{\o}lting Brodal for bringing the problem to our attention, and to Petr Gregor for giving a talk on space-optimal counters in our seminar, which motivated this research. Authors thank Meena Mahajan and Venkatesh Raman for pointing out the result in~\cite{Raskin17}. We also thank anonymous reviewers for helpful suggestions that improved the presentation of the paper. 

\bibliographystyle{plainurl}
\bibliography{counter}

\end{document}